\newcommand{\typeof}{1} %
\newcommand{\condinc}[2]{\ifthenelse{\equal{\typeof}{0}}{#1}{#2}}
\newdimen\proofrulebreadth \proofrulebreadth=.05em
\newdimen\proofdotseparation \proofdotseparation=1.25ex
\newdimen\proofrulebaseline \proofrulebaseline=2ex
\let\then\relax
\def\hfi{\hskip0pt plus.0001fil}
\mathchardef\squigto="3A3B
\newif\ifinsideprooftree\insideprooftreefalse
\newif\ifonleftofproofrule\onleftofproofrulefalse
\newif\ifproofdots\proofdotsfalse
\newif\ifdoubleproof\doubleprooffalse
\let\wereinproofbit\relax
\newdimen\shortenproofleft
\newdimen\shortenproofright
\newdimen\proofbelowshift
\newbox\proofabove
\newbox\proofbelow
\newbox\proofrulename
\def\shiftproofbelow{\let\next\relax\afterassignment\setshiftproofbelow\dimen0 }
\def\shiftproofbelowneg{\def\next{\multiply\dimen0 by-1 }%
\afterassignment\setshiftproofbelow\dimen0 }
\def\setshiftproofbelow{\next\proofbelowshift=\dimen0 }
\def\setproofrulebreadth{\proofrulebreadth}
\def\prooftree{% NESTED ZERO (\ifonleftofproofrule)
%
% first find out whether we're at the left-hand end of a proof rule
\ifnum  \lastpenalty=1
\then   \unpenalty
\else   \onleftofproofrulefalse
\fi
%
% some space on left (except if we're on left, and no infinity for outermost)
\ifonleftofproofrule
\else   \ifinsideprooftree
        \then   \hskip.5em plus1fil
        \fi
\fi
%
% begin our proof tree environment
\bgroup% NESTED ONE (\proofbelow, \proofrulename, \proofabove,
%               \shortenproofleft, \shortenproofright, \proofrulebreadth)
\setbox\proofbelow=\hbox{}\setbox\proofrulename=\hbox{}%
\let\justifies\proofover\let\leadsto\proofoverdots\let\Justifies\proofoverdbl
\let\using\proofusing\let\[\prooftree
\ifinsideprooftree\let\]\endprooftree\fi
\proofdotsfalse\doubleprooffalse
\let\thickness\setproofrulebreadth
\let\shiftright\shiftproofbelow \let\shift\shiftproofbelow
\let\shiftleft\shiftproofbelowneg
\let\ifwasinsideprooftree\ifinsideprooftree
\insideprooftreetrue
%
% now begin to set the top of the rule (definitions local to it)
\setbox\proofabove=\hbox\bgroup$\displaystyle % NESTED TWO
\let\wereinproofbit\prooftree
%
% these local variables will be copied out:
\shortenproofleft=0pt \shortenproofright=0pt \proofbelowshift=0pt
%
% flags to enable inner proof tree to detect if on left:
\onleftofproofruletrue\penalty1
}
\def\eproofbit{% NESTED TWO
%
% various hacks applicable to hypothesis list 
\ifx    \wereinproofbit\prooftree
\then   \ifcase \lastpenalty
        \then   \shortenproofright=0pt  % 0: some other object, no indentation
        \or     \unpenalty\hfil         % 1: empty hypotheses, just glue
        \or     \unpenalty\unskip       % 2: just had a tree, remove glue
        \else   \shortenproofright=0pt  % eh?
        \fi
\fi
%
% pass out crucial values from scope
\global\dimen0=\shortenproofleft
\global\dimen1=\shortenproofright
\global\dimen2=\proofrulebreadth
\global\dimen3=\proofbelowshift
\global\dimen4=\proofdotseparation
\global\count255=\proofdotnumber
%
% end the box
$\egroup  % NESTED ONE
%
% restore the values
\shortenproofleft=\dimen0
\shortenproofright=\dimen1
\proofrulebreadth=\dimen2
\proofbelowshift=\dimen3
\proofdotseparation=\dimen4
\proofdotnumber=\count255
}
\def\proofover{% NESTED TWO
\eproofbit % NESTED ONE
\setbox\proofbelow=\hbox\bgroup % NESTED TWO
\let\wereinproofbit\proofover
$\displaystyle
}%
\def\proofoverdbl{% NESTED TWO
\eproofbit % NESTED ONE
\doubleprooftrue
\setbox\proofbelow=\hbox\bgroup % NESTED TWO
\let\wereinproofbit\proofoverdbl
$\displaystyle
}%
\def\proofoverdots{% NESTED TWO
\eproofbit % NESTED ONE
\proofdotstrue
\setbox\proofbelow=\hbox\bgroup % NESTED TWO
\let\wereinproofbit\proofoverdots
$\displaystyle
}%
\def\proofusing{% NESTED TWO
\eproofbit % NESTED ONE
\setbox\proofrulename=\hbox\bgroup % NESTED TWO
\let\wereinproofbit\proofusing
\kern0.3em$
}
\def\endprooftree{% NESTED TWO
\eproofbit % NESTED ONE
% \dimen0 =     length of proof rule
% \dimen1 =     indentation of conclusion wrt rule
% \dimen2 =     new \shortenproofleft, ie indentation of conclusion
% \dimen3 =     new \shortenproofright, ie
%                space on right of conclusion to end of tree
% \dimen4 =     space on right of conclusion below rule
  \dimen5 =0pt% spread of hypotheses
% \dimen6, \dimen7 = height & depth of rule
%
% length of rule needed by proof above
\dimen0=\wd\proofabove \advance\dimen0-\shortenproofleft
\advance\dimen0-\shortenproofright
%
% amount of spare space below
\dimen1=.5\dimen0 \advance\dimen1-.5\wd\proofbelow
\dimen4=\dimen1
\advance\dimen1\proofbelowshift \advance\dimen4-\proofbelowshift
%
% conclusion sticks out to left of immediate hypotheses
\ifdim  \dimen1<0pt
\then   \advance\shortenproofleft\dimen1
        \advance\dimen0-\dimen1
        \dimen1=0pt
%       now it sticks out to left of tree!
        \ifdim  \shortenproofleft<0pt
        \then   \setbox\proofabove=\hbox{%
                        \kern-\shortenproofleft\unhbox\proofabove}%
                \shortenproofleft=0pt
        \fi
\fi
%
% and to the right
\ifdim  \dimen4<0pt
\then   \advance\shortenproofright\dimen4
        \advance\dimen0-\dimen4
        \dimen4=0pt
\fi
%
% make sure enough space for label
\ifdim  \shortenproofright<\wd\proofrulename
\then   \shortenproofright=\wd\proofrulename
\fi
%
% calculate new indentations
\dimen2=\shortenproofleft \advance\dimen2 by\dimen1
\dimen3=\shortenproofright\advance\dimen3 by\dimen4
%
% make the rule or dots, with name attached
\ifproofdots
\then
        \dimen6=\shortenproofleft \advance\dimen6 .5\dimen0
        \setbox1=\vbox to\proofdotseparation{\vss\hbox{$\cdot$}\vss}%
        \setbox0=\hbox{%
                \advance\dimen6-.5\wd1
                \kern\dimen6
                $\vcenter to\proofdotnumber\proofdotseparation
                        {\leaders\box1\vfill}$%
                \unhbox\proofrulename}%
\else   \dimen6=\fontdimen22\the\textfont2 % height of maths axis
        \dimen7=\dimen6
        \advance\dimen6by.5\proofrulebreadth
        \advance\dimen7by-.5\proofrulebreadth
        \setbox0=\hbox{%
                \kern\shortenproofleft
                \ifdoubleproof
                \then   \hbox to\dimen0{%
                        $\mathsurround0pt\mathord=\mkern-6mu%
                        \cleaders\hbox{$\mkern-2mu=\mkern-2mu$}\hfill
                        \mkern-6mu\mathord=$}%
                \else   \vrule height\dimen6 depth-\dimen7 width\dimen0
                \fi
                \unhbox\proofrulename}%
        \ht0=\dimen6 \dp0=-\dimen7
\fi
%
% set up to centre outermost tree only
\let\doll\relax
\ifwasinsideprooftree
\then   \let\VBOX\vbox
\else   \ifmmode\else$\let\doll=$\fi
        \let\VBOX\vcenter
\fi
% this \vbox or \vcenter is the actual output:
\VBOX   {\baselineskip\proofrulebaseline \lineskip.2ex
        \expandafter\lineskiplimit\ifproofdots0ex\else-0.6ex\fi
        \hbox   spread\dimen5   {\hfi\unhbox\proofabove\hfi}%
        \hbox{\box0}%
        \hbox   {\kern\dimen2 \box\proofbelow}}\doll%
%
% pass new indentations out of scope
\global\dimen2=\dimen2
\global\dimen3=\dimen3
\egroup % NESTED ZERO
\ifonleftofproofrule
\then   \shortenproofleft=\dimen2
\fi
\shortenproofright=\dimen3
%
% some space on right and flag we've just made a tree
\onleftofproofrulefalse
\ifinsideprooftree
\then   \hskip.5em plus 1fil \penalty2
\fi
}
\newcommand{\comment}[1]{}
\newcommand{\typet}[2]{#1:#2}
\newcommand{\jd}[2]{#1\vdash #2}
\newcommand{\permone}{\sigma}
\newcommand{\pair}[2]{\langle #1,#2\rangle}
\newcommand{\pairtens}[2]{#1\otimes #2}
\newcommand{\linmap}{\multimap}
\newcommand{\abstr}[2]{\lambda #1.#2}
\newcommand{\pof}{\triangleright}
\newcommand{\QL}{\mathsf{Q}\Lambda}
\newcommand{\varone}{x}
\newcommand{\vartwo}{y}
\newcommand{\varthree}{z}
\newcommand{\varfour}{w}
\newcommand{\patone}{\pi}
\newcommand{\csone}{C}
\newcommand{\termone}{M}
\newcommand{\termtwo}{N}
\newcommand{\termthree}{L}
\newcommand{\termfour}{Q}
\newcommand{\cbitone}{B}
\newcommand{\bitone}{b}
\newcommand{\conone}{\Gamma}
\newcommand{\contwo}{\Delta}
\newcommand{\emcon}{\cdot}
\newcommand{\typeone}{A}
\newcommand{\typetwo}{B}
\newcommand{\typethree}{C}
\newcommand{\typefour}{D}
\newcommand{\unopone}{\mathbf{U}}
\newcommand{\opone}{U}
\newcommand{\subst}[3]{#1\{#2/#3\}}
\newcommand{\eqterm}{\approx}
\newcommand{\eqtermnf}{\sim}
\newcommand{\uset}{\mathscr{U}}
\newcommand{\reduct}[1]{#1^{\Downarrow}}
\newcommand{\IAM}[1]{\mathsf{IAM}_{#1}}
\newcommand{\autom}[1]{\mathcal{A}_{#1}}
\newcommand{\automc}[2]{\mathcal{A}_{#1}^{#2}}
\newcommand{\states}[1]{\mathcal{S}_{#1}}
\newcommand{\trans}[1]{\rightarrow_{#1}}
\newcommand{\occone}{O}
\newcommand{\occtwo}{P}
\newcommand{\occthree}{R}
\newcommand{\soccone}{\varphi}
\newcommand{\socctwo}{\psi}
\newcommand{\ctone}{C}
\newcommand{\cttwo}{D}
\newcommand{\emct}{[\cdot]}
\newcommand{\pconone}{P}
\newcommand{\pcontwo}{Q}
\newcommand{\nconone}{N}
\newcommand{\ncontwo}{M}
\newcommand{\qrone}{\mathbf{Q}}
\newcommand{\qrtwo}{\mathbf{R}}
\newcommand{\qrthree}{\mathbf{S}}
\newcommand{\poccs}[1]{\mathcal{P}(#1)}
\newcommand{\noccs}[1]{\mathcal{N}(#1)}
\newcommand{\emseq}{\varepsilon}
\newcommand{\tdone}{\pi}
\newcommand{\tdtwo}{\rho}
\newcommand{\tdthree}{\sigma}
\newcommand{\sutone}{\mathcal{T}}
\newcommand{\suttwo}{\mathcal{S}}
\newcommand{\sutthree}{\mathcal{V}}
\newcommand{\sutfour}{\mathcal{W}}
\newcommand{\sutfive}{\mathcal{X}}
\newcommand{\sutsix}{\mathcal{Y}}
\newcommand{\mapone}[1]{( #1)^{\bullet}}
\newcommand{\qcone}{C}
\newcommand{\MLL}{\textsf{MLL}}
\newcommand{\QMLL}{\textsf{QMLL}}
\newcommand{\atomone}{\alpha}
\newcommand{\atomtwo}{\beta}
\newcommand{\formone}{A}
\newcommand{\formtwo}{B}
\newcommand{\lneg}[1]{#1^{\bot}}
\newcommand{\dneg}[1]{#1^{\bot\bot}}
\newcommand{\contone}{\mathsf{C}}
\newcommand{\poscont}{\mathsf{P}}
\newcommand{\negcont}{\mathsf{N}}
\newcommand{\mll}{\mathsf{MLL}}
\newcommand{\stone}{\mathsf{S}}
\newcommand{\sttwo}{\mathsf{R}}
\newcommand{\stthree}{\mathsf{T}}
\newcommand{\stfour}{\mathsf{U}}
\newcommand{\judgone}{J}
\newcommand{\pmllone}{\xi}
\newcommand{\pmlltwo}{\mu}
\newcommand{\maptwo}[1]{\mathscr{I}(#1)}
\newcommand{\smllone}{\mathsf{J}}
\newcommand{\smlltwo}{\mathsf{H}}
\newcommand{\smllthree}{\mathsf{G}}
\newcommand{\bitocc}[1]{\mathcal{B}(#1)}
\newcommand{\bitval}[1]{\mathcal{V}(#1)}
\newcommand{\transmll}[1]{\mapsto_{#1}}
\newcommand{\mapthree}[3]{\mathscr{R}_{#1,#2}(#3)}
\newcommand{\mmllone}[1]{\mathcal{M}_{#1}}
\newcommand{\stgen}{\mathsf{S}}
\newcommand{\statesmll}[1]{\mathsf{T}_{#1}}
\newcommand{\pfun}[1]{[#1]}
\newcommand{\vecone}{x}
\newcommand{\midd}{\; \; \mbox{\Large{$\mid$}}\;\;}
\newenvironment{varitemize}
{
\begin{list}{\condinc{\labelitemii}{\labelitemi}}
{\setlength{\itemsep}{0pt}
 \setlength{\topsep}{0pt}
 \setlength{\parsep}{0pt}
 \setlength{\partopsep}{0pt}
 \setlength{\leftmargin}{15pt}
 \setlength{\rightmargin}{0pt}
 \setlength{\itemindent}{0pt}
 \setlength{\labelsep}{5pt}
 \setlength{\labelwidth}{10pt}
}}
{
 \end{list} 
}
\newcounter{numberone}
\newenvironment{varenumerate}
{
\begin{list}{\arabic{numberone}.}
{
  \usecounter{numberone}
  \setlength{\itemsep}{0pt}
  \setlength{\topsep}{0pt}
  \setlength{\parsep}{0pt}
  \setlength{\partopsep}{0pt}
  \setlength{\leftmargin}{15pt}
  \setlength{\rightmargin}{0pt}
  \setlength{\itemindent}{0pt}
  \setlength{\labelsep}{5pt}
  \setlength{\labelwidth}{15pt}
}}
{
\end{list} 
}
\newcounter{numbertwo}
\newcommand{\urule}[3]{%
  \prooftree #1 \justifies #2 \using #3 \endprooftree}
\newcommand{\brule}[4]{%
  \prooftree #1\ \ \ #2 \justifies #3 \using #4 \endprooftree}
\newcommand{\tens}{\otimes} % tensor
\newcommand{\lpar}{\parr} % linear par
\newtheorem{theorem}{Theorem}
\newtheorem{proposition}{Proposition}
\newtheorem{lemma}{Lemma}
\newtheorem{definition}{Definition}
\newtheorem{example}{Example}
\newtheorem{notation}{Notation}
\newenvironment{proof}{\begin{trivlist}
       \item[\hskip \labelsep {\bfseries Proof.}]}{\hfill $\Box$ \end{trivlist}}}
\newcommand{\VV}{\mathbb{V}}
\newcommand{\AAA}{\mathbb{A}}
\newcommand{\BB}{\mathbb{B}}
\newcommand{\NN}{\mathbb{N}}
\newcommand{\CC}{\mathbb{C}}
\newcommand{\ket}[1]{{\left\vert{#1}\right\rangle}}
\newcommand{\qw}[1][-1]{\ar @{-} [0,#1]}
\newcommand{\qwx}[1][-1]{\ar @{-} [#1,0]}
\newcommand{\gate}[1]{*+<.6em>{#1} \POS ="i","i"+UR;"i"+UL **\dir{-};"i"+DL **\dir{-};"i"+DR **\dir{-};"i"+UR **\dir{-},"i" \qw}
\newcommand{\control}{*!<0em,.025em>-=-<.2em>{\bullet}}
\newcommand{\ctrl}[1]{\control \qwx[#1] \qw}
\newcommand{\targ}{*+<.02em,.02em>{\xy ="i","i"-<.39em,0em>;"i"+<.39em,0em> **\dir{-}, "i"-<0em,.39em>;"i"+<0em,.39em> **\dir{-},"i"*\xycircle<.4em>{} \endxy} \qw}
\newcommand{\multigate}[2]{*+<1em,.9em>{\hphantom{#2}} \POS [0,0]="i",[0,0].[#1,0]="e",!C *{#2},"e"+UR;"e"+UL **\dir{-};"e"+DL **\dir{-};"e"+DR **\dir{-};"e"+UR **\dir{-},"i" \qw}
\newcommand{\ghost}[1]{*+<1em,.9em>{\hphantom{#1}} \qw}
\newcommand{\lstick}[1]{*!R!<.5em,0em>=<0em>{#1}}
\newcommand{\Qcircuit}{\xymatrix @*=<0em>}
\begin{document}

\condinc{
%%%%%%%%%%%%%%%%%%%%%%%%%%%%%%%%%%%%%%%%%%%%%%%%%%%%%%%%%%%%%%%
\title{Wave-Style Token Machines\\ and Quantum Lambda Calculi}
%%%%%%%%%%%%%%%%%%%%%%%%%%%%%%%%%%%%%%%%%%%%%%%%%%%%%%%%%%%%%%%

\author{Ugo Dal Lago\inst{1} \and Margherita Zorzi\inst{2}\thanks{Partially supported by LINTEL (Linear Techniques 
   For The Analysis Of Languages), \texttt{https://sites.google.com/site/tolintel}}}

\institute{Universit\`a di Bologna \& INRIA, \email{\texttt{dallago@cs.unibo.it}} 
\and Universit\`a degli Studi di Verona, Italy, \email{\texttt{margherita.zorzi@univr.it}}}
}{
\title{Wave-Style Token Machines and Quantum Lambda Calculi}
\author{Ugo Dal Lago\footnote{Universit\`a di Bologna \& INRIA} \and Margherita Zorzi\footnote{Universit\`a di Verona}}
\date{}
}

\maketitle

\begin{abstract}
Particle-style token machines are a way to interpret proofs and programs, when the latter are 
written following the principles of linear logic. In this paper, we show that token machines also make sense when the programs
at hand are those of a simple quantum $\lambda$-calculus. This, however,
requires generalizing the concept of a token machine to one in which more than one particle
travel around the term \emph{at the same time}. The presence of multiple tokens is intimately related to entanglement
and allows to give a simple operational semantics to the calculus, coherently with the principles
of quantum computation.
\end{abstract}

%%%%%%%%%%%%%%%%%%%%%%%%
\section{Introduction}
%%%%%%%%%%%%%%%%%%%%%%%%
One of the strongest trends in computer science is the (relatively recent) interest in exploiting 
new computing paradigms which go beyond the usual, classical one. Among these paradigms, quantum computing 
plays an important role. In particular, the quantum paradigm is having a deep impact on the 
notion of a computationally (in)tractable problem. In this respect, two of the most surprising results are due to Peter Shor, 
who proved that prime factorization of integers and the discrete logarithm can be efficiently solved 
(i.e. in polynomial time) by a quantum computer~\cite{Shor97}. 

Even if quantum computing has catalyzed the interest of a quite large scientific community, several theoretical 
aspects are still unexplored. As an example, the definition of a robust theoretical framework for quantum 
programming is nowadays still a challenge. A number of (paradigmatic) calculi for quantum computing have 
been introduced in the last ten years. Among them, some functional calculi, typed and untyped, 
have been proposed~\cite{mscs2009,tcs2010,entcs11,SV06,vT04}, but we are still at a stage where it is not clear whether one calculus
could be considered \emph{canonical}. Moreover, the meta-theory of most of these formalisms lack
the simplicity of the one of their ``classical'' siblings.

It is clear that linear logic and quantum computing are strongly related: since quantum data have to undergo 
restrictions such as no-cloning and no-erasing, it is not surprising that in most of the cited quantum 
calculi the use of resources is controlled. Linear logic therefore provides an ideal framework where rooting 
quantum data treatment, but also offers another tool which has not been widely exploited in the quantum setting: 
its mathematical model in terms of operator algebras, i.e. the Geometry of Interaction (GoI in the following).
Indeed, the latter provides a dynamical interpretation and a semantic account of the cut-elimination procedure 
as a flow of information circulating into a net structure. This idea can be formulated both as an algebra of 
bounded operators on a infinitely dimensional Hilbert space~\cite{JYG89} or as a token-based machine (a rewriting 
automata model with local transition rules)~\cite{GA92,Mackie95}. Both formulations seem to be promising in the quantum 
setting. On the one hand, the Hilbert space on top of which the first formulation of GoI is given is precisely the
canonical state space of a quantum Turing machine (see for example~\cite{BerVa97}). On the other hand, the 
definition of a token machine provides a mathematically simpler setting, which has already found a 
role in this context~\cite{DLF11,HH11}.

In this paper, we show that token machines are also a model of a linear quantum $\lambda$-calculus $\QL$ defined 
along the lines of van Tonder's $\lambda_q$~\cite{vT04}. This allows to give an operational semantics to $\QL$ which renders the quantum nature
of $\QL$ explicit: type derivations become quantum circuits built on exactly the set of gates occurring in the
underlying $\lambda$-term. This frees us from the burden of having to define the operational semantics of
quantum calculi in reduction style, which is known to be technically challenging in a similar setting~\cite{vT04}.
On the other hand, the power of $\beta$-style axioms is retained in the form of an equational theory for which
our operational semantics can be proved sound.

Technically, the design of our token machine for $\QL$, called $\IAM{\QL}$ is arguably more challenging than
the one of classical token machines. Indeed, the principles of quantum computing, and the so-called
\emph{entanglement} in particular, force us to go towards \emph{wave-style} machines, i.e., to machines where more
than one particle can travel inside the program at the same time. Moreover, the possibly many tokens at hand
are subject to synchronization points, each one corresponding to unitary operators
of arity greater than $1$. This means that $\IAM{\QL}$, in principle, could suffer from deadlocks, let alone
the possibility of non-termination. We here prove that these pathological situations can not happen. 

In Section~\ref{LLTM}, we recall the token machine for multiplicative linear logic. 
\condinc{}{In Section~\ref{sec:qcomp} we propose a gentle introduction to quantum computing.} The calculus $\QL$ and its token machine $\IAM{\QL}$ are 
introduced in Section~\ref{sec:QL} and Section~\ref{sec:QLTM}, respectively.  Main results about $\IAM{\QL}$ are in Section~\ref{sect:mainres}.
\condinc{An extended version of this paper with more details and a gentle introduction to quantum computing is available~\cite{EV}.}{}
\condinc{}
{Sections~\ref{sec:relw} and~\ref{sect:conclusions} are respectively devoted to related works and conclusion/future plans}.

%%%%%%%%%%%%%%%%%%%%%%%%%%%%%%%%%%%%%%%%%%%%%%%%
\section{Linear Logic and Token Machines}\label{LLTM}
%%%%%%%%%%%%%%%%%%%%%%%%%%%%%%%%%%%%%%%%%%%%%%%%
In this section, we give some ideas about the simplest token machine, namely the one for the propositional,
multiplicative fragment of linear logic. This not only encourages the unfamiliar reader to understand the basic concepts underlying
this concrete approach to the geometry of interaction, but will also be useful in the following, when proving
basic results about quantum token machines. More details can be found in~\cite{DR99,GA92}.

Let $\AAA=\{\atomone, \atomtwo,\ldots\}$ be a countable set of \emph{propositional atoms}. 
\emph{Formulas} of Multiplicative Linear Logic (\MLL) are given by the following grammar:
$$
\formone,\formtwo::= \atomone \;|\; \lneg{\atomone}\;|\; \formone\otimes\formtwo\;|\;\formone\lpar\formtwo.
$$
Linear negation can be extended to all formulas in the usual \condinc{De Morgan's style. }{way: 
\begin{align*}
  \lneg{(\lneg{\alpha})}&=\alpha;\\
  \lneg{\formone\otimes\formtwo}&=\lneg{\formone}\lpar\lneg{\formtwo};\\
  \lneg{\formone\lpar\formtwo}&=\lneg{\formone}\otimes\lneg{\formtwo}.
\end{align*}
This way, $\dneg{\formone}$ is just $\formone$.}
The one-sided sequent calculus for \MLL\ is very simple:
$$
  \urule{}{\vdash{\formone,\lneg{\formone}}}{\mathsf{ax}}
  \qquad
  \brule{\vdash{\conone,\formone}}{\vdash\contwo,\lneg{\formone}}{\vdash{\conone,\contwo}}{\mathsf{cut}}
  \qquad
  \brule{\vdash{\conone,\formone}}{\vdash\contwo,\formtwo}{\vdash{\conone,\contwo,\formone\otimes\formtwo}}{\otimes}
  \qquad
  \urule{\vdash{\conone,\formone,\formtwo}}{\vdash{\conone,\formone\lpar\formtwo}}{\lpar}
$$
The logic \MLL\ enjoys cut-elimination: there is a terminating algorithm turning any \MLL\ proofs into
a cut-free proof of the same conclusion.

Consider the following \MLL\ proof $\pmllone$ (where different occurrences of the same
propositional (co)atom have been numbered):
$$
\urule
  {
    \brule
      {
        \brule
        {\urule{}{\vdash\lneg{\atomone_4},\atomone_4}{\mathsf{ax}}}
        {\urule{}{\vdash\lneg{\atomone_5},\atomone_5}{\mathsf{ax}}}
        {\vdash\lneg{\atomone_3},\atomone_3}{\mathsf{cut}}
      }
      {
        \urule{}{\vdash\lneg{\atomtwo_3},\atomtwo_3}{\mathsf{ax}}
      }
      {\vdash\lneg{\atomone_2},\atomtwo_2,\atomone_2\otimes\lneg{\atomtwo_2}}{\otimes}
  }
  {\vdash\lneg{\atomone_1}\lpar\atomtwo_1,\atomone_1\otimes\lneg{\atomtwo_1}}{\lpar}
$$
The token machine for $\pmllone$ is a simple automaton whose internal state is nothing more than
an occurrence of a propositional (co)atom in $\pmllone$. This state evolves by ``following'' this
occurrence, keeping in mind that atoms go down, while coatoms go up. A run of the token
machine of $\pmllone$ is, as an example, the following one:
$$
\lneg{\atomone_1}\transmll{\pmllone}
\lneg{\atomone_2}\transmll{\pmllone}
\lneg{\atomone_3}\transmll{\pmllone}
\lneg{\atomone_4}\transmll{\pmllone}
\atomone_4\transmll{\pmllone}
\lneg{\atomone_5}\transmll{\pmllone}
\atomone_5\transmll{\pmllone}
\atomone_3\transmll{\pmllone}
\atomone_2\transmll{\pmllone}
\atomone_1.
$$
This tells us that the occurrences $\lneg{\atomone_1}$ and $\atomone_1$ are somehow related. Similarly,
one could find a run relating $\atomtwo_1$ to $\lneg{\atomtwo_1}$. Remarkably, these correspondences 
survive cut-elimination.

\condinc{
In general, negative occurrences in the conclusion of any proof $\pmllone$ can
be put in correspondence with positive occurrences in a bijective way, since the
number of occurrences in $\pmllone$ is anyway finite, and the are not any cycles.
It is this correspondence which is taken as the semantics of $\pmllone$. 
In the following (Section~\ref{sec:mllql}), we will denote as $\mmllone{\pmllone}$ 
the token machine corresponding to the \MLL\ proof $\pmllone$.
}{
All this can be formalized through the notion of a \emph{context}, which is an \MLL\ formula with a hole: 
$$
\contone::= [\cdot]\midd\contone\otimes\formone\midd\formone\otimes\contone\midd\contone\lpar\formone\midd\formone\lpar\contone.
$$
$\contone[\formone]$ is the formula obtained by replacing the unique occurrence of $[\cdot]$ in $\contone$
with $\formone$. If $\formone=\contone[\atomone]$ ($\formone=\contone[\lneg{\alpha}]$, respectively), we say that 
$\contone$ is a \emph{positive} (\emph{negative}, respectively) \emph{context for} $\formone$. 
If $\contone$ is positive (negative, respectively) for $\formone$, we sometime write
it as $\poscont_\formone$ (as $\negcont_\formone$, respectively).
%We denote positive (resp. negative) contexts with metavariables $\poscont,\poscont_i\ldots$ 
%(resp. $\negcont,\negcont_i,\ldots$). 
An \emph{atom occurrence} in an \MLL\ proof $\pmllone$ is a pair $(\formone,\contone)$ 
where $\formone$ is an occurrence of an \MLL\ formula in $\pmllone$ and $\contone$ is a context for it.
Linear negation can be easily extended to contexts:
  \begin{align*}
    \lneg{[\cdot]}&=[\cdot];\\
    \lneg{(\contone\otimes \formtwo)}&=\lneg{\contone}\lpar\lneg{\formtwo}; &
    \lneg{(\formone\otimes\contone)}&=\lneg{\formone}\lpar\lneg{\contone};\\
    \lneg{(\contone\lpar \formtwo)}&=\lneg{\contone}\otimes\lneg{\formtwo}; &
    \lneg{(\formone\lpar\contone)}&=\lneg{\formone}\otimes\lneg{\contone}.
  \end{align*}
Please observe that $\contone$ is a negative context for $\formone$ iff
$\lneg{\contone}$ is a positive context for $\lneg{\formone}$. To every proof 
$\pmllone$ in $\mll$, we associate an automaton $\mmllone{\pmllone}$ which consists of:
\begin{varitemize}
\item 
  The finite set $\states{\pmllone}$ of \emph{states} of $\mmllone{\pmllone}$, which are all the atom occurrences of $\pmllone$;
\item 
  a \emph{transition relation} $\transmll{\pmllone}\subseteq\states{\pmllone}\times\states{\pmllone}$, \condinc{which can be easily
  defined along the lines described above.}{
  which is described by the rules in Figure~\ref{fig:trmll}. 
  %in order to make the ``direction'' of the token explicit, 
  %in Figure~\ref{fig:trmll} we denote contexts of \MLL\  
  %as $\Gamma_p,\Delta_p\ldots$ if the contexts occur in the premises of the rule, and as $\Gamma_c,\Delta_c\ldots$ 
  %if the contexts occur in the conclusion of the rule.
  \begin{figure}
    \fbox{
      \begin{minipage}{.97\textwidth}
     \begin{center}
       \begin{tabular}{cc}
         \begin{minipage}{1.5cm}
         \urule{}{\vdash{\formone,\lneg{\formone}}}{\mathsf{ax}}
         \end{minipage}
         &
         \begin{minipage}{4cm}
           $$
           \begin{array}{c}
             (\formone,\negcont_\formone)\transmll{\pmllone}(\lneg{\formone},\lneg{\negcont_\formone})\\
             (\lneg{\formone},\negcont_{\lneg{\formone}})\transmll{\pmllone}(\formone,\lneg{(\negcont_{\lneg{\formone}})})\\
           \end{array}
           $$
         \end{minipage}
       \end{tabular}

       \vspace{10pt}

       \begin{tabular}{cc}
         \begin{minipage}{3cm}
           \brule{\vdash{\conone_1,\formone}}{\vdash\contwo_1,\formtwo}{\vdash{\conone_2,\contwo_2,\formone\otimes\formtwo}}{\otimes}
         \end{minipage}
         &
         \begin{minipage}{4.5cm}
           $$
           \begin{array}{c}
             (\formone\otimes\formtwo, \negcont_\formone\otimes\formtwo)\transmll{\pmllone}(\formone,\negcont_\formone)\\
             (\formone\otimes\formtwo, \formone\otimes\negcont_\formtwo)\transmll{\pmllone}(\formtwo,\negcont_\formtwo)\\
             (\formone,\poscont_\formone)\transmll{\pmllone}(\formone\otimes\formtwo, \poscont_\formone\otimes\formtwo)\\
             (\formtwo,\poscont_\formtwo)\transmll{\pmllone}(\formone\otimes\formtwo, \formone\otimes\poscont_\formtwo)\\
             (\Gamma_2,\negcont)\transmll{\pmllone}(\Gamma_1,\negcont)\\
             (\Delta_2,\negcont)\transmll{\pmllone}(\Delta_1,\negcont)\\
             (\Gamma_1,\poscont)\transmll{\pmllone}(\Gamma_2,\poscont)\\
             (\Delta_1,\poscont)\transmll{\pmllone}(\Delta_2,\poscont)\\
           \end{array}
           $$
         \end{minipage}
       \end{tabular}

       \vspace{10pt}

       \begin{tabular}{cc}
         \begin{minipage}{2.5cm}
           \urule{\vdash{\conone_1,\formone,\formtwo}}{\vdash{\conone_2,\formone\lpar\formtwo}}{\lpar}
         \end{minipage}
         &
         \begin{minipage}{4cm}
           $$
           \begin{array}{c}
             (\formone\lpar\formtwo, \negcont_\formone\lpar\formtwo)\transmll{\pmllone}(\formone,\negcont_\formone)\\
             (\formone\lpar\formtwo, \formone\lpar\negcont_\formtwo)\transmll{\pmllone}(\formtwo,\negcont_\formtwo)\\
             (\formone,\poscont_\formone)\transmll{\pmllone}(\formone\lpar\formtwo, \poscont_\formone\lpar\formtwo)\\
             (\formtwo,\poscont_\formtwo)\transmll{\pmllone}(\formone\lpar\formtwo, \formone\lpar\poscont_\formtwo)\\
             (\Gamma_2,\negcont)\transmll{\pmllone}(\Gamma_1,\negcont)\\
             (\Gamma_1,\poscont)\transmll{\pmllone}(\Gamma_2,\poscont)\\
           \end{array}
           $$
         \end{minipage}
       \end{tabular}

       \vspace{10pt}

       \begin{tabular}{cc}
         \begin{minipage}{3.5cm}
           \brule{\vdash{\conone_1,\formone}}{\vdash\contwo_1,\lneg{\formone}}{\vdash{\conone_2,\contwo_2}}{\mathsf{cut}}
         \end{minipage}
         &
         \begin{minipage}{4cm}
           $$
           \begin{array}{c}
             (\formone,\poscont_\formone)\transmll{\pmllone}(\lneg{\formone},\lneg{(\poscont_\formone)})\\
             (\lneg{\formone},\poscont_{\lneg{\formone}})\transmll{\pmllone}(\formone,\lneg{(\poscont_{\lneg{\formone}})})\\
             (\Gamma_2,\negcont)\transmll{\pmllone}(\Gamma_1,\negcont)\\
             (\Delta_2,\negcont)\transmll{\pmllone}(\Delta_1,\negcont)\\
             (\Gamma_1,\poscont)\transmll{\pmllone}(\Gamma_2,\poscont)\\
             (\Delta_1,\poscont)\transmll{\pmllone}(\Delta_2,\poscont)\\
           \end{array}
           $$
         \end{minipage}
       \end{tabular}
       \vspace{10pt}
    \end{center}
    \end{minipage}}
\caption{Defining Rules for $\transmll{\pmllone}$}\label{fig:trmll}
\end{figure}}
\end{varitemize}
An atom occurrence in $\pmllone$ is said to be \emph{initial} (respectively, \emph{final}) iff
it is in the form $(\formone,\negcont_\formone)$ (respectively, in the form
$(\formone,\poscont_\formone)$), where $\formone$ is one among the formulas among
the conclusions of $\pmllone$. It is easy to verify that:
\begin{varitemize}
\item
  for every non-final occurrence $\occone$ there is exactly one occurrence $\occtwo$ such
  that $\occone\transmll{\pmllone}\occtwo$;
\item
  for every non-initial occurrence $\occone$ there is exactly one occurrence $\occtwo$ such
  that $\occtwo\transmll{\pmllone}\occone$.
\end{varitemize}
As a consequence, %there cannot be any cycle in $\transmll{\pmllone}$, and 
every initial occurrence is put in correspondence with a final occurrence in a bijective way --- the
number of occurrences in $\pmllone$ is anyway finite, and cycles cannot be reached from initial occurrences.
It is this correspondence which is taken as the semantics of $\pmllone$, after being shown to 
be invariant by cut-elimination. 

One last observation is now in order. Suppose $\occone_1,\ldots,\occone_n$ are \emph{all} the initial 
occurrences for $\pmllone$. Then, every occurrence in $\pmllone$ is visited \emph{exactly once} along one 
of the $n$ maximal computations starting in $\occone_1,\ldots,\occone_n$. This can be proved as follows:
\begin{varitemize}
\item
  First, prove the statement for any cut-free proof $\pmllone$, by induction on the
  structure of $\pmllone$;
\item
  Then show that if $\pmllone$ has the property and $\pmlltwo$ reduces to $\pmllone$
  by cut-elimination, $\pmlltwo$ has the property, too.
\end{varitemize}
Incidentally, this shows that cylic $\transmll{\pmllone}$ is acyclic.
}

\condinc{}
{
{
%%%%%%%%%%%%%%%%%%%%%%%%%%%%
\section{Quantum Computing in a Nutshell}\label{sec:qcomp}
%%%%%%%%%%%%%%%%%%%%%%%%%%%%
Quantum computing principles are  non-standard notions to the largest part of the ``lambda community''. 
The aim of this section is to provide to the non-expert reader an overview of quantum computing basic concepts. 
This will guide her or him in understanding the ``quantum content'' of our calculus (in particular, the meaning of 
unitary steps and the linear management of quantum data, see Section~\ref{sec:QL}). Moreover, notions like quantum 
entanglement, a peculiar feature of quantum data, offers some intuitions about how and why the choice of a wave-style 
token machine as operational model is the right choice.

The simplest quantum system is a  two-dimensional state space whose elements are called \emph{quantum bits} or 
\emph{qubits} for short. The qubit is the most basic unit of quantum information. The most direct way to represent 
a quantum bit is as a unitary vector
in the 2-dimensional Hilbert space $\ell^{2}(\{0,1\})$, which is isomorphic to $\CC^{2}$.  
We will denote with
$\ket{0}$ and $\ket{1}$  the elements of the computational basis of $\ell^{2}(\{0,1\})$.
The states $\ket{0}$ and $\ket{1}$ of a qubit correspond to the boolean constants $0$ and $1$, which
are the only possible values of a classical bit. A qubit, however, can assume other values, 
different from $\ket{0}$ and $\ket{1}$.  In fact, every linear combination
$\ket{\psi}=\alpha \ket{0} + \beta \ket{1}$
where $\alpha,\beta\in{\CC}$, and $|\alpha|^{2}+|\beta|^{2}=1$, represents
a possible qubit state. These states are
said to be \emph{superposed}, and the two values $\alpha$ and $\beta$ are called
\index{amplitudes}\emph{{amplitudes}}.
The amplitudes $\alpha$ and $\beta$  univocally represent the qubit with respect to the computational basis.
Given a qubit $\ket{\psi}=\alpha \ket{0} + \beta \ket{1}$, we commonly denote it by the vectorial notation
$$
\psi= 
\left( 
\begin{array}{cc}
\alpha  \\
\beta \\
 \end{array} 
 \right).
 $$
In particular, the vectorial representation of the elements of the computational basis $\ket{0}$ and $\ket{1}$ is the following:
$$
%\psi= 
\left( 
\begin{array}{cc}
1  \\
0 \\
 \end{array} 
 \right)
 \qquad
 \left( 
\begin{array}{cc}
0 \\
1 \\
 \end{array} 
 \right)
 $$
While we can determine the state of a classical bit, for a qubit we
can not establish with the same precision the values $\alpha$ and $\beta$:
quantum mechanics says that a measurement of a qubit with state $\alpha \ket{0}+ \beta
\ket{1}$ has the effect of changing the state to $\ket{0}$ with probability
$|\alpha|^{2}$ and to $\ket{1}$ with probability $|\beta|^{2}$. 
For example, if $\ket{\psi}=\frac{1}{\sqrt{2}}\ket{0}+\frac{1}{\sqrt{2}}\ket{1}$, one can 
observe $0$ or $1$ with the same probability $|\frac{1}{\sqrt{2}}|^{2}=\frac{1}{2}$. In this 
brief survey on quantum computing, we will not enter in the details about qubit measurement, 
since the syntax of the calculus $\QL$ does not include an explicit measurement operator 
(a constant whose --- probabilistic --- operational semantics mimics the observation of 
quantum data). This choice is sound from a theoretical viewpoint, since it is possible to 
assume to have a unique, final measurement, at the end of the computation. Notwithstanding, 
the measurement operator is a useful programming tool in order to encode quantum algorithms and 
the extension of the syntax with a measurement operator is one of our planned future works.
For a complete overview about measurement of qubits and relationships between different kind 
of measurement, see~\cite{NieCh00}.

In order to define arbitrary set of quantum data, we need a generalization
of the notion of qubit,  called  \index{quantum register}\textit{{quantum register}} or, more commonly, 
\index{quantum state}\emph{{quantum state}}~\cite{vT04,SV06,NiOz09}. A quantum register can be viewed 
as a system of $n$ qubits and, mathematically, it is a normalized vector in the Hilbert space 
$\ell^{2}(\{0,1\}^{n})$  ($\{0,1\}^{n}$ is a compact notation to represent any binary sequence of length $n$).
The standard computational basis for $\ell^{2}(\{0,1\}^{n})$ is $\mathcal{B}=\{ \ket{i}\ |\ i \mbox{ is a binary string of length } n \}$.
\begin{notation}
We use the notation $\ket{b_1\ldots b_k}$ ($b_i\in\{0,1\}$)  for $\ket{b_1}\otimes\ldots\otimes\ket{b_k}$, 
where $\otimes$ is the \emph{tensor product} (see below).
\end{notation}
With a little abuse of language, we say that the number of quits $n$ corresponds to the dimension of the space. Notice that if the dimension is $n$, then the basis $\mathcal{B}$ contains $2^{n}$ elements, and each quantum states is a normalized  linear combination of these elements:
$$\alpha_{1}{{\underbrace{\ket{00\ldots0}}_{n}}}+\alpha_{2}\ket{00\ldots1}+\ldots +\alpha_{2^{n}}\ket{11\ldots1}$$

\begin{example}
 Let us consider a 2-level quantum system, i.e. a system of two qubits.
Each 2-qubit quantum register is a normalized vector in $\ell^2(\{0,1\}^{2})$ and the computational basis is  $\{\ket{00}, \ket{01}, \ket{10}, \ket{11}\}$.
 %In algebraic notation we can express the basis as
%$$(1\;0\;0\;0)^{T} \quad(0\;1\;0\;0)^{T} \quad(0\;0\;1\;0)^{T} \quad(0\;0\;0\;1)^{T}$$
For example,  $\frac{1}{\sqrt{2}}\ket{00}+ \frac{1}{\sqrt{4}}\ket{01}+ \frac{1}{\sqrt{8}}\ket{10}+ \frac{1}{\sqrt{8}}\ket{11}$
%$1/\sqrt{2} |01> + 1/\sqrt{2} |00>\in \ell^{2}(\{0,1\}^{n})$  
is a quantum register of two qubits and we can represent it as
$$
\psi= 
\left( 
\begin{array}{c}
\frac{1}{\sqrt{2}}  \\
\frac{1}{\sqrt{4}}\\
\frac{1}{\sqrt{8}}\\
\frac{1}{\sqrt{8}}
 \end{array} 
 \right).
 $$
\end{example}

An Hilbert space of dimension $n$ can be built from smaller Hilbert spaces by means of the \emph{tensor product} $\otimes$. If $H_1$ is an Hilbert space of dimension $k$ and $H_2$ is  an Hilbert space of dimension $m$, $H_3=H_1\otimes H_2$ is an Hilbert space of dimension $km$ (each element is a vector of $km$ coordinates obtained by ``hooking'' a vector in $H_2$ to a vector in $H_1$). 
In other words, an $n$-qubit quantum register with $n\geq 2$ can be viewed as a composite system. 
It is possible to combine two (or more) distinct  physical systems into a composite one. 
%When two physical systems are treated as one combined system, the state space $H_3$ of the two combined physical system is the tensor product space  ${H}_1\otimes{H}_2$ of the state spaces ${H}_1$ and ${H}_2$ of the component subsystems. 
If the first system is in the state $\ket{\phi_1}$ (a vector in a Hilbert Space $H_1$) and the second system is in the state $\ket{\phi_2}$ (a vector in a Hilbert Space $H_1$) , then the state of the combined system is $\ket{\phi_1}\otimes\ket{\phi_2}$ (a vector in a Hilbert Space $H_1\otimes H_2$) .\\
We will often omit the ``$\otimes$'' symbol, and will write the joint state  as $\ket{\psi_1} \ket{\psi_2}$ or as $\ket{\psi_1\psi_2}$.

Not all quantum states can be viewed as composite systems: this case occurs in presence of \emph{entanglement} phenomena (see below).
%Given an Hilbert space of dimension $n$, sometimes is not possible to decompose it in smaller representations: this case occurs in presence of \emph{entanglement} phenomena (see Section~\ref{sec:ent}).
Since normalized vectors of quantum data represent physical systems, the (discrete) evolution of systems can be viewed as a suitable transformation on Hilbert spaces. The evolution of a quantum register  is \emph{linear} and \emph{unitary}.
%\footnote{We refer to a closed system, i.e. to a system that interacts in no way with other systems or with the rest of the world, the observer included. \\
%This is an approximation of reality, but it is a very common approximation in physical theories.  We accept this terminology in order to distinguish unitary evolution from quantum measurement, which implies an explicit interaction of the system with the environment.}:
Giving an initial state $\ket{\psi_1}$, for each evolution to a state $\ket{\psi_2}$, there exists a \emph{unitary}  operator $U$ such that $\ket{\psi_2} = U \ket{\psi_1}$. 
Informally, ``unitary'' referred to an algebraic operator on a suitable space means that the normalization constraint of the amplitudes ($\sum_i|\alpha_i|^{2}=1$) is preserved during the transformation. 
Thus, a quantum physical system, i.e. a normalized vector which represents our data, can be described in term of linear operators and in a \emph{deterministic} way. 
In quantum computing we refer to a unitary operator $U$ acting on a $n$-qubit quantum register as an $n$-qubit \index{quantum gate}\emph{{quantum gate}}. 
We can represent operators on the $2^{n}$-dimensional Hilbert space  $\ell^{2}(\{0,1\}^{n})$ with respect to the standard basis of $\CC^{2^{n}}$ as  $2^{n}\times 2^{n}$ matrices, and it is possible to prove that to each unitary operator on a Hilbert Space it is possible to associate an algebraic representation.
Matrices which represent unitary operators enjoy some important property: for example they are easily invertible (reversibility is one of the peculiar features of quantum computing). 
\emph{The application  of quantum gates to quantum registers  represents the pure quantum computational step and captures the internal evolution of quantum systems.}
The simplest quantum gates act on a single qubit: they are operators on the space $\ell^2(\{0,1\}^{})$, represented in $\CC^{2}$ by $2\times 2$ complex matrices.
For example, the quantum gate $X$ is the unitary operator which maps $\ket{0}$ to $\ket{1}$ and  $\ket{1}$ to $\ket{0}$ and it is represented by the matrix
$$\left( 
\begin{array}{cc}
0 & 1  \\
1 & 0 \\
 \end{array} 
 \right)$$
Being a linear operator, it maps a linear combination of inputs to the corresponding linear combination of outputs, and so $X$ maps the general qubit state 
$\alpha \ket{0}+\beta \ket{1}$ into the state  $\alpha \ket{1}+\beta \ket{0}$ i.e 
$$
\left( 
\begin{array}{cc}
0 & 1  \\
1 & 0 \\
 \end{array} 
 \right)
 \left( 
\begin{array}{c}
\alpha  \\
\beta \\
 \end{array} 
 \right)
 =
  \left( 
\begin{array}{c}
\beta  \\
\alpha \\
 \end{array} 
 \right)
$$
An interesting unitary gate is the \index{Hadamard gate}\emph{{Hadamard gate}} denoted by H
which acts on the computational basis in the following way: 
$$\ket{0}\mapsto \frac{1}{\sqrt{2}}(\ket{0}+\ket{1})\qquad  \ket{1}\mapsto \frac{1}{\sqrt{2}}(\ket{0}-\ket{1})$$\\
The Hadamard gate, which therefore is given by the matrix
$$
H= \frac{1}{\sqrt{2}}\left( 
\begin{array}{cc}
1 & 1   \\
1 & -1 \\
 \end{array} 
 \right)
 $$
is  useful when we want to create a superposition starting from a classical state.  It also holds that $H(H(\ket{c}))=\ket{c}$ for $c=\{0,1\}$. 
1-qubit quantum gates can be used in order to build gates acting on $n$-qubit quantum states.  
If we have a 2-qubit quantum system, we can apply a 1-qubit quantum gate only to one component of the system, and we implicitly apply 
the identity operator (the identity matrix) to the other one. For example suppose we want to apply X  to the first qubit. The 2-qubits 
input $\ket{\psi_1}\otimes\ket{\psi_2}$ gets mapped to $X\ket{\psi_1}\otimes I \ket{\psi_2}=(X\otimes I)\ket{\psi_1}\otimes\ket{\psi_2}$. \\

%The cnot operator is one the most important operator acting on 2-qubit systems~\cite{NC}. 
The \textbf{CNOT}  is one of the most important quantum operators.
  It is mathematically described by the standard operator
 ${CNOT}:\ell^2({\{0,1\}^{2}})\rightarrow{\ell^2({\{0,1\}^{2}})}$ defined by 
 \vspace{-.2cm}
\begin{center}
\begin{minipage}{3cm}
\begin{eqnarray*}
  \mathbf{CNOT}\ket{0 0}&=&\ket{0 0}\\
  \mathbf{CNOT}\ket{0 1}&=&\ket{0 1}\\
\end{eqnarray*}
\end{minipage}
\hspace{1cm}
\begin{minipage}{3cm}
\begin{eqnarray*}
  \mathbf{CNOT}\ket{1 0}&=&\ket{1 1}\\
  \mathbf{CNOT}\ket{1 1}&=&\ket{1 0}\\
\end{eqnarray*}
\end{minipage}
\end{center}
\vspace{-.2cm}
Intuitively, $\mathbf{cnot}$ acts as follows: it takes two distinct quantum bits as inputs and complements the \emph{target}
  bit (the second one) if the \emph{control} bit (the first one)  is 1; otherwise it does not
  perform any action.
\comment{  The matrix which describes it is the following:
$$ \left( 
 \begin{array}{cccc}
1 &  0 & 0 & 0 \\
0 & 1 & 0 & 0 \\
0 & 0  & 0 & 1 \\
0 &  0 & 1 & 0 \\
 \end{array} 
 \right)
$$
}
%Non unary quantum gates as knot can be viewed as \emph{synchronization} points in a quantum circuit. 
The control qubit is a ``master'' agent:  its evolution in independent from the evolution of the target bit (if the first input of the cnot is $\ket{\phi}$ the output is the same); the target qubit is a ``slave'' agent: its evolution is controlled by the value of the first qubit.  In some sense, a communication between the agents is required and the quantum circuit is a simple distributed system.
By adopting this perspective,  controlled operators like cnot acts as ``synchronization points'' between token (ground type occurrences) in our definition of quantum token machine: this is one of the main features of our semantics (see Section~\ref{sec:QLTM}).

\bigskip
%%%%COMMENTO LA PARTE SULLA MISURA%%%%%
\comment{
In a realistic perspective, a quantum system interacts with other systems and also with a measurement apparatus. This means that we can ``observe'' the amplitudes $\alpha_i$ obtaining a quantitative information about the state.
In the following, we will describe two kind of total and partial measurements of a quantum state through some examples.
%%%%%MISURA%%%%%%%
%\subsubsection{Total Measurement}
Let $B=\{\ket{\phi_i}\}$ be an  basis of a Hilbert state space ${H}$ (remember that $\phi_i$ is a finite  binary sequence). It is possible to perform a \emph{measurement} on ${H}$ w.r.t. $B$ such that, given a state $\ket{\psi}=\sum_{i}\alpha_i \ket{\phi_i}$, the measurement leaves the system in the state $\phi_i$ with probability $|\alpha_i|^{2}$. 
\begin{example}
Assume $\ket{\psi}=\frac{1}{\sqrt{2}}\ket{00}+\frac{1}{\sqrt{4}}\ket{01}+\frac{1}{\sqrt{8}}\ket{10}+\frac{1}{\sqrt{8}}\ket{11}$. A total measurement implies the following situations: one can observe $\ket{00}$ with probability $|\frac{1}{\sqrt{2}}|^{2}=\frac{1}{2}$ or $\ket{01}$ with probability $|\frac{1}{\sqrt{4}}|^{2}=\frac{1}{4}$ or $\ket{10}$ with probability $|\frac{1}{\sqrt{8}}|^{2}=\frac{1}{8}$ or $\ket{11}$ with probability $|\frac{1}{\sqrt{8}}|^{2}=\frac{1}{8}$.
\end{example}
The described measurement 
is a special kind of total  \emph{projective} measurements (see e.g. \cite{NieCh00}).
Total projective measurement (here proposed in a simplified setting) is intuitive and it is commonly used. %to explain the measurement postulate.
It enjoys some properties. For example, repeated measurements give the same result: if the after a measurement the system is (with the related probability) in the state $\phi_i$, a second measurement applied on  $\phi_i$ acts as the identity (reason on the previous example). Moreover, total projective measurement destroys the superposition: the state $\ket{\psi}$ collapses to an element of the  basis, totally loosing the quantum superposition (but notice that it does not modify the dimension of the  Hilbert space). It is possible, however, to define projective measurement also with respect to a single qubit, i.e. it is possible to observe the i-th quit of the basis vectors. In this case, only a part of the superposition is destroyed, possibly leaving the system in quantum superposition. See the following subsection.
%\subsection{Partial measurement}\label{sec:meas}

It is also possible to perform a partial measurement: this means that we observe only a subset of the qubits and, as a consequence, we destroy only a part of the quantum superposition.  
\begin{example}
Assume again $\ket{\psi}=\frac{1}{\sqrt{2}}\ket{00}+\frac{1}{\sqrt{4}}\ket{01}+\frac{1}{\sqrt{8}}\ket{10}+\frac{1}{\sqrt{8}}\ket{11}$. Let us focus our attention on the first qubits and let us measure it.
One can observe $0$ with probability $|\frac{1}{\sqrt{2}}|^{2}+|\frac{1}{\sqrt{2}}|^{2}=\frac{1}{2}+\frac{1}{4}=\frac{3}{4}$ or $1$ with probability $|\frac{1}{\sqrt{8}}|^{2}+|\frac{1}{\sqrt{8}}|^{2}=\frac{1}{8}+\frac{1}{8}=\frac{1}{4}$: in the first case, the partial measurement leaves the original state $\ket{\psi}$ in the state $\ket{\psi_1}= k\cdot(\frac{1}{\sqrt{2}}\ket{00}+\frac{1}{\sqrt{4}}\ket{01})$ (where $k$ is a normalizing factor, since the squared moduli of the amplitudes have to sum up to 1); in the second case,  the partial measurement leaves the original state $\ket{\psi}$ in the state $\ket{\psi_2}=h\cdot(\frac{1}{\sqrt{8}}\ket{10}+\frac{1}{\sqrt{8}}\ket{11})$ (where $h$ is again a normalizing factor). 
\end{example}
We notice  that also in this case the measurement does not modify the dimension of the Hilbert space.
Notice also that after the measurement we can perform a new measurement of the first qubit which acts as the identity (we can only observe $0$ with probability $1$); otherwise, one can  measure the second qubit: in this case the measurement is significative (one can  observe both $0$ and $1$).
}%%%%%FINE COMMENTO MISURA
%%%%%ENTANGLEMENT%%%%%

%The cnot operator plays an important role in our proposal since it, in joint with is the only non-unary operator we include in our syntax  
Not all quantum states can be viewed as composite systems. In other words, \textit{if $\ket{\psi}$ is a state of a tensor product space $\mathcal{H}_1\otimes\mathcal{H}_2$, it is not generally true that there exists $\ket{\psi_1}\in\mathcal{H}_1$ and $\ket{\psi_2}\in\mathcal{H}_2$ such that $\ket{\psi}=\ket{\psi_1}\otimes \ket{\psi_2}$}. Instead, \textit{ it is not always possible to decompose an $n$-qubit register as the tensorial product of 
$n$ qubits}.\\
These non-decomposable registers are called {\itshape entangled} and enjoy 
properties that we cannot find in any object of classical physics (and therefore in classical data).
If  $n$ qubits are entangled, they behave \emph{as if connected},
independently of the real physical distance.
The strength of quantum computation is essentially based on the 
existence of entangled states (see, for example, the \emph{teleportation protocol} \cite{NieCh00}).
\begin{example}\label{ex:cnotex}
The 2-qubit states $\ket{\psi}=\frac{1}{\sqrt{2}} \ket{00}+ \frac{1}{\sqrt{2}}\ket{11}$ and $\ket{\psi}=\frac{1}{\sqrt{2}} \ket{01}+ \frac{1}{\sqrt{2}}\ket{10}$ are entangled. The 2-qubit state $\ket{\phi}=\alpha\ket{00}+\beta\ket{01}$ is not entangled. Trivially, notice that it is possible to rewrite it in the mathematically equivalent form $\phi=\ket{0}\otimes(\alpha\ket{0}+\beta\ket{1})$. 
\end{example}
A simple way to create an entangled state is to fed a $\mathbf{CNOT}$ gate with a target qubit $\ket{c}$ and a particular control qubit, more precisely the output of the Hadamard gate applied to a base qubit, therefore a superposition $\frac{1}{\sqrt{2}}\ket{0}+\frac{1}{\sqrt{2}}\ket{1}$ or $\frac{1}{\sqrt{2}}\ket{0}-\frac{1}{\sqrt{2}}\ket{1}$. This  composition of quantum gates is actually encoded by the terms defined in the Example~\ref{ex:epr}.

We previously said that each $n$-ary unitary transformation (or composition of unitary transformations) can be represented by a suitable $n\times n$ matrix. From a computer science viewpoint, it is common to reason about quantum states transformations in terms of \emph{quantum circuits}. Through the paper, we frequently say that ``a lambda term encodes a quantum circuit''. What does this mean? What is a quantum circuit? One more time, this is a long and complex subject and we refer to~\cite{NieCh00,NaOh08} for a complete and exhaustive explanation.
Since quantum circuits are invoked in the proof of Soundness Theorem~\ref{th:sound}, we give here some intuitions and a qualitative description (enough to understand the Soundness proof) of quantum circuits.
We have introduced qubits to store quantum information, in analogy with the classical case. We have also introduced operations acting on them, i.e. quantum gates, and  we can think about quantum gates in analogy with gates in classical logic circuits. 

%More complicated quantum circuits are composed of these simple gates. A collection of quantum circuits for executing a complicated algorithm, a quantum algorithm, is a part of a quantum computation.

A quantum circuit on $n$ qubits  implements an unitary operator on a Hilbert space of dimension $\CC^{2^{n}}$. This can be views as a primitive collection of quantum gates, each implementing a unitary operator on k (small) qubits.

It is useful to graphically represent quantum circuit in terms of sequential and parallel composition of quantum gates and wires, as for boolean circuits (notwithstanding, in the quantum case the graphical representation does not reflect the physical realization of the circuit).

For example, the following diagram represents the quantum circuit implemented by the term in Example~\ref{ex:epr}.

\vspace{-5ex}
%\comment{
{\scriptsize
     \begin{center}
       \scalebox{0.25}
       {
         \ifx\pdfoutput\undefined % we are running LaTeX, not pdflatex
         \epsfbox{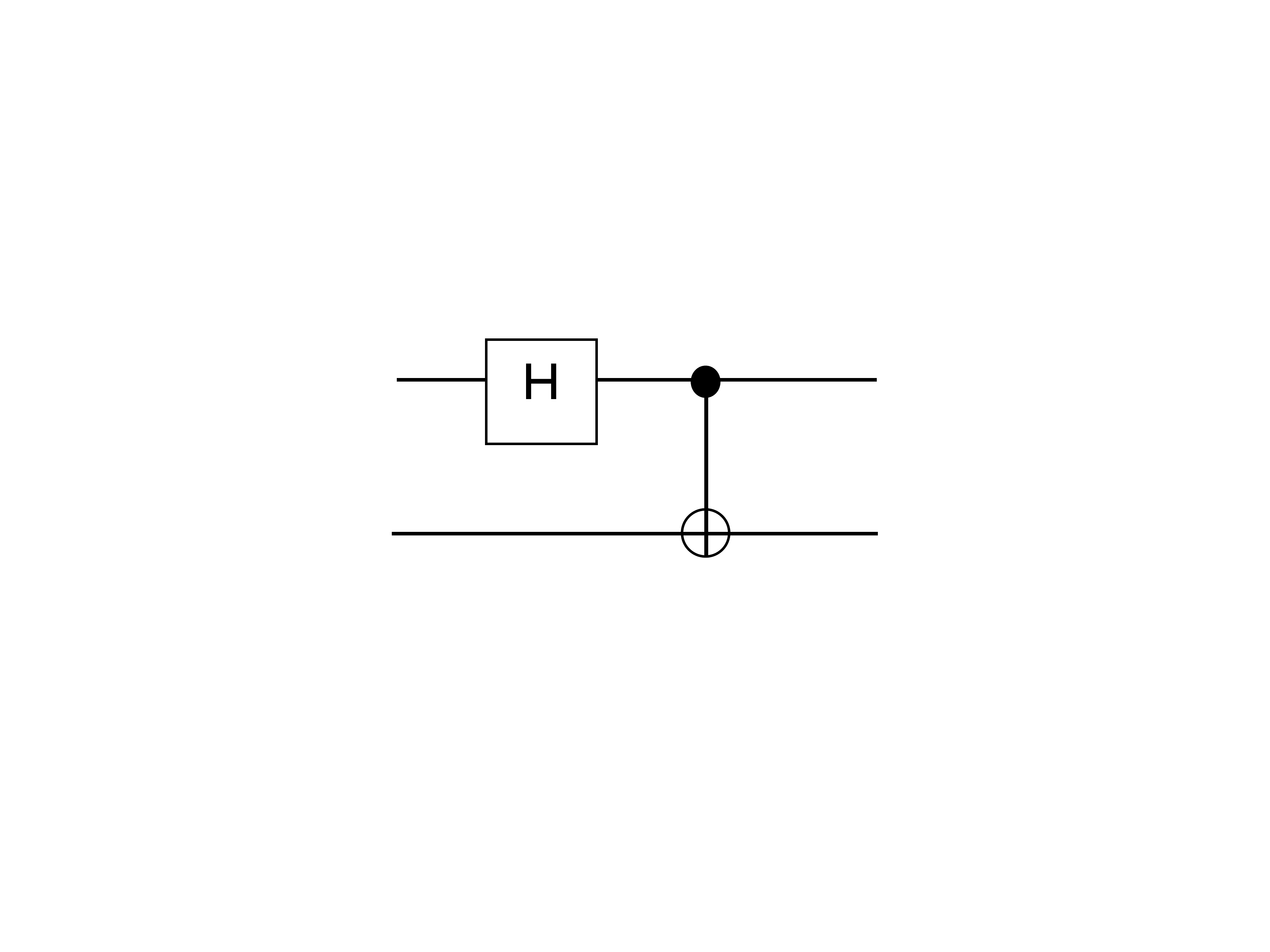} \else % we are running pdflatex
         \includegraphics{epr} \fi
       }
     \end{center}
     }
%}

The calculus $\QL$ is purely linear (see Section~\ref{sec:QL}). Each (well typed) lambda terms encode a quantum transformation or, equivalently, a quantum circuit built  on the set of (the constants representing) quantum gates occurring in the lambda-term.

\bigskip
One of the primitive operations in information theory is the copy of a datum. When we deal with quantum data as qubits, quantum information suffers from lack of accessibility in comparison to classical one.
In fact, a quantum bit \emph{can not be duplicated}. This curious feature is well-know in literature as \emph{no-cloning property}:  it does not allow to make a copy of an unknown  quantum state (it is only possible to duplicate ``trivial'' qubits, i.e. basis states $\ket{0}$ and $\ket{1}$). In other words, it is not possible to build a quantum transformation/a quantum circuit able to maps an arbitrary quantum state $\ket{\psi}$ into the state $\ket{\psi}\otimes\ket{\psi}$.
No-cloning property is one of the main difference between classical and quantum data and any paradigmatic quantum language has deal with to this fact. Notwithstanding, even if no-cloning property made the design of quantum languages more challenging, quantum data enjoy some properties (which have no classical counterpart) which can be exploited in the design of quantum algorithms.

%%%%%ALTRO COMMENTO MISURA
\comment{
\begin{example}[Entanglement and measurement]
The effects of entanglement phenomena can be easily observed when we perform a (partial) measurement. Let us consider the entangle state $\ket{\psi}=\frac{1}{\sqrt{2}} \ket{00}+ 1/\sqrt{2}\ket{11}$. 
If we measure the first qubit obtaining, for example, $0$ (with the related probability $\frac{1}{{2}}$), we loose the possibility to obtain $1$ by measure of the second qubit: in fact, the state collapses to  $\ket{\psi'}=k\cdot(\frac{1}{\sqrt{2}} \ket{00})$ and the only possible result of a measurement of the second qubit is actually $0$.
In some sense, the measurement of the second qubit has been influenced by the measurement
 of the first one. 
\end{example}
}%%%%%%fine commento
%%%%parentesi del colore blue
}
%parentesi del condinc
}
\condinc{}
{%\subsection{How to represent quantum data in a quantum lambda calculus?}
}
%%%%%%%%%%%%%%%%%%%%%%%%%%%%
\section{The Calculus $\QL$}\label{sec:QL}
%%%%%%%%%%%%%%%%%%%%%%%%%%%%
An essential  property of quantum programs is that quantum data, i.e. quantum bits, should always 
be uniquely referenced. This restriction follows from the well-known \emph{no-cloning} and \emph{no-erasing} properties 
of quantum physics, which state that a quantum bit cannot be duplicated nor canceled~\cite{NieCh00}. 
Syntactically, one captures this restriction by means of linearity: 
if every abstraction $\abstr{x}{\termone}$ is such that there is exactly one free occurrence of 
$x$ in $\termone$, then the substitution triggered by firing \emph{any} redex is neither copying nor cancelling
and, as a consequence, coherent with the just stated principles.

In this Section, we introduce a quantum linear $\lambda$-calculus in the style of van Tonder's $\lambda_q$~\cite{vT04} and give an equational theory for
it. This is the main object of study of this paper, and is the calculus for which we will give a wave-style token
machine in the coming sections.
%%%%%%%%%%%%%%%%%%%%%%%%%%%%%%%%%% 
\subsection{The Language of Terms}\label{sec:syntaxql}
%%%%%%%%%%%%%%%%%%%%%%%%%%%%%%%%%%
Let us fix a finite set $\uset$ of \emph{unitary operators}, each on a finite-dimensional Hilbert space 
$\CC^{2^n}$, where $n$ can be arbitrary. To each such $\unopone\in\uset$ we associate 
a symbol $\opone$ and call $n$ the \emph{arity} of $\opone$. The syntactic categories of 
\emph{patterns}, \emph{bits}, \emph{constants} and \emph{terms} are defined by the 
following grammar:
$$
\begin{array}{lcl}
  \patone & ::= & \varone\ \mid\ \pair{\varone}{\vartwo};  \hfill \mbox{\emph{patterns}}\\ 
  \cbitone & ::= & \ket{0}_n \mid\ \ket{1}_n;   \hfill \mbox{\emph{bits}}\\ 
  \csone & ::= & \cbitone\ \mid\ \opone; \hfill\mbox{\emph{constants}}\\ 
  \termone,\termtwo & ::= & \varone \mid \csone \mid \termone\otimes\termtwo 
     \mid \termone\termtwo\ \mid\ \lambda \patone.\termone;\qquad\qquad\qquad\hfill \mbox{\emph{terms}}
\end{array}
$$
where $n$ ranges over $\NN$ and $\varone$ ranges over a denumerable, totally ordered
set of variables $\VV$. We always assume that the natural numbers occurring
next to bits in any term $\termone$ are pairwise distinct. This condition, by the way,
is preserved by substitution when the substituted variable occurs (free) exactly once.
Whenever this does not cause ambiguity, we elide labels and simply write
$\ket{\bitone}$ for a bit. Notice that pairs can be formed via the binary operator $\otimes$.
We will sometime write $\ket{\bitone_1 \bitone_2\ldots \bitone_k}$ for 
$\ket{\bitone_1}\otimes\ket{\bitone_2}\otimes\ldots\otimes\ket{\bitone_k}$ 
(where $\bitone_1,\ldots,\bitone_n\in\{0,1\}$). In the following, capital 
letters such as $\termone$, $\termtwo$, $\termthree$, $\termfour$ (possibly 
indexed), denote terms. We work modulo variable renaming; in other words, 
terms are equivalence classes modulo $\alpha$-conversion. 
%By $\termone\equiv_\alpha\termtwo$, we note that the terms (equivalence classes) 
%$\termone$ and $\termtwo$ are $\alpha$-equivalent. 
Substitution up to $\alpha$-equivalence is defined in the usual way. 
Observe that the terms of $\QL$ are the ones of a $\lambda$-calculus with pairs
(which are accessed by pattern-matching) endowed with constants for bits and unitary
operators. We don't consider measurements here, and discuss the possibility of
extending the language of terms in Section~\ref{sect:conclusions}.
%%%%%%%%%%%%%%%%%%%%%%%%%%%%%%%%%%%%%%%%
\subsection{Judgements and Typing Rules}\label{sec:typing}
%%%%%%%%%%%%%%%%%%%%%%%%%%%%%%%%%%%%%%%%
Since in $\QL$ all terms are assumed to be non-duplicable by default, we adopt 
a linear type-discipline. Formally, the set of types is defined as   
$$
\typeone::=\BB\mid\ \typeone\linmap\typetwo\mid\typeone\otimes\typetwo,
$$
where $\BB$ is the ground type of qubits.
We write $\BB^n$ for the $n$-fold tensor product 
\condinc{
$\BB\otimes\ldots\otimes\BB$ ($n$ times).
}{
$$
\underbrace{\BB\otimes\ldots\otimes\BB}_{\mbox{$n$ times}}.
$$
}
\condinc{
  A \emph{judgement} is an expression 
  $\conone\vdash\termone:\typeone$, where $\conone$ is a linear  environment, 
  $\termone$ is a term, and $\typeone$ is a type in $\QL$. 
}{
Judgements are defined from a linear notion of \emph{environment}.
\begin{varitemize}
\item  
  A \textit{linear environment} $\conone$ is a (possibly
  empty) finite set of assignments in the form $\varone:\typeone$. We
  impose that in a linear environment, each variable $\varone$ 
  occurs \emph{at most} once. 
\item
  If $\conone$ and $\contwo$ are two linear environments assigning types
  to distinct sets of variables, $\conone,\contwo$ is their union.
\item 
  A \emph{judgement} is an expression 
  $\conone\vdash\termone:\typeone$, where $\conone$ is a linear  environment, 
  $\termone$ is a term, and $\typeone$ is a type in $\QL$. 
\end{varitemize}}
\emph{Typing rules} are in Figure~\ref{fig:TypR}.
\begin{figure}
{\scriptsize
\fbox{
\begin{minipage}{1.05\textwidth}
%  \begin{align*}
  $$
    \urule{}{\jd{\typet{x}{A}}{\typet{x}{A}}}{\mathsf{(a_v)}}\quad
    \urule{}{\jd{\emcon}{\typet{\ket{0}}{\BB}}}{\mathsf{(a_{q0})}}\quad
    \urule{}{\jd{\emcon}{\typet{\ket{1}}{\BB}}}{\mathsf{(a_{q1})}}\quad
    \urule{}{\jd{\emcon}{U:\BB^{n}\linmap\BB^{n}}}{(\mathsf{a_U})}\quad
%  \end{align*}
%  \begin{align*}
    \urule
    {\jd{\conone,\varone:\typeone}{\typet{\termone}{\typetwo}}}
    {\jd{\conone}{\typet{\abstr{\varone}{\termone}}{\typeone\linmap\typetwo}}}
    {(\mathsf{I}_\linmap^1)}
   $$
   $$
%    &&
    \urule
    {\jd{\conone,\varone:\typeone,\vartwo:\typetwo}{\typet{M}{\typethree}}}
    {\jd{\conone}{\typet{\abstr{\langle\varone,\vartwo\rangle}{\termone}}
        {(\typeone\otimes\typetwo)\linmap\typethree}}}
    {(\mathsf{I}_\linmap^2)}
    \quad
%  \end{align*}
%  \begin{align*}
    \brule
    {\jd{\conone}{\typet{M}{A\linmap B}}}
    {\jd{\contwo}{N: A}}
    {\jd{\conone,\contwo}{MN:B}}
    {(\mathsf{E}_\linmap)}
    \quad
    \brule
    {\jd{\conone}{\typet{\termone}{\typeone}}}
    {\jd{\contwo}{\typet{\termtwo}{\typetwo}}}
    {\jd{\conone,\contwo}{\typet{\termone\otimes\termtwo}{\typeone\otimes\typetwo}}}
    {(\mathsf{I}_\otimes)}
    $$
%  \end{align*}
  \vspace{4pt}
\end{minipage}}
\vspace{-2ex}
}
\caption{Typing Rules}\label{fig:TypR}
\end{figure}
Observe that contexts are treated multiplicatively and, as a consequence, variables always
appear exactly once in terms. In other words, a \emph{strictly linear type discipline} is enforced.
\condinc{
\begin{example}[EPR States]\label{ex:epr}
Consider the term 
$\termone_{\mathit{EPR}}=\abstr{\pair{\varone}{\vartwo}}{\mathit{CNOT}(\pairtens{\mathit{H}\varone}{y})}$.
$\termone_{\mathit{EPR}}$ encodes the quantum circuit which takes two input qubits and returns an 
\emph{entangled} state (a quantum state that cannot be expressed as the tensor product of single qubits). 
It can be given the type $\BB\otimes\BB\linmap\BB\otimes\BB$ in the empty context. The following is 
a type derivation $\tdone_{\mathit{EPR}}$ for it, where we mark different occurrences of $\BB$ with distinct indexes. 
(Indexed occurrences will be exploited in Section~\ref{sec:QLTM}, Example~\ref{ex:runIAMQ}.)

{\scriptsize
$$
\urule
{
\brule
{\jd{\emcon}{\mathit{CNOT}}:\BB_{9}\otimes\BB_{10}\linmap\BB_{11}\otimes\BB_{12}}
{
\brule
{
\brule
{\jd{\emcon}{\mathit{H}}:\BB_{21}\linmap\BB_{22}}
{\jd{\varone:\BB_{23}}{\varone:\BB_{24}}}
{\jd{\varone:\BB_{17}}{\mathit{H}\varone:\BB_{18}}}
{(\mathsf{E}_\linmap)}
}
{\jd{\vartwo:\BB_{19}}{\vartwo:\BB_{20}}}
{\jd{\varone:\BB_{13},\vartwo:\BB_{14}}{\pairtens{\mathit{H}\varone}{\vartwo}}:\BB_{15}\otimes\BB_{16}}
{(\mathsf{I}_\otimes)}
}
{\jd{\varone:\BB_{5},\vartwo:\BB_{6}}{\mathit{CNOT}(\pairtens{\mathit{H}\varone}{\vartwo}):\BB_{7}\otimes\BB_{8}}}
{(\mathsf{E}_\linmap)}
}
{
\jd{\emcon}{\termone_{\mathit{EPR}}}:\BB_{1}\otimes\BB_{2}\linmap\BB_{3}\otimes\BB_{4}
}
{(\mathsf{I}_\linmap^2)}
$$
}
%%%%%%%%%COMMENTO IL VECCHIO ESEMPIO%%%%%%%%%
\comment{
{\scriptsize
$$
\urule
{
\brule
{\jd{\emcon}{\mathit{CNOT}}:\BB\otimes\BB\linmap\BB\otimes\BB}
{
\brule
{
\brule
{\jd{\emcon}{\mathit{H}}:\BB\linmap\BB}
{\jd{\varone:\BB}{\varone:\BB}}
{\jd{\varone:\BB}{\mathit{H}\varone:\BB}}
{(\mathsf{E}_\linmap)}
}
{\jd{\vartwo:\BB}{\vartwo:\BB}}
{\jd{\varone:\BB,\vartwo:\BB}{\pairtens{\mathit{H}\varone}{\vartwo}}:\BB\otimes\BB}
{(\mathsf{I}_\otimes)}
}
{\jd{\varone:\BB,\vartwo:\BB}{\mathit{CNOT}(\pairtens{\mathit{H}\varone}{\vartwo}):\BB\otimes\BB}}
{(\mathsf{E}_\linmap)}
}
{
\jd{\emcon}{\termone_{\mathit{EPR}}}:\BB\otimes\BB\linmap\BB\otimes\BB
}
{(\mathsf{I}_\linmap^2)}
$$
}
}
%%%%%chiudo il commento del vecchio esempio senza le occorrenze ground indicizzate%%%%%%%%
$\termone_{\mathit{EPR}}$ and $\tdone_{\mathit{EPR}}$ will be used as running examples in the rest of this
paper, together with the type derivation $\tdtwo_{\mathit{EPR}}$ for 
$\termone_{\mathit{EPR}}(\ket{0}_1\otimes\ket{1}_2)$, which can be easily built from $\tdone_{\mathit{EPR}}$ (see~\cite{EV} for
more details).
%{\scriptsize
%$$
%\brule
%  {\tdone_{\mathit{EPR}}\pof\jd{\emcon}{\termone_{\mathit{EPR}}}:\BB\otimes\BB\linmap\BB\otimes\BB}
%  {
%    \brule
%    {\jd{\emcon}{\ket{0}_1:\BB}}
%    {\jd{\emcon}{\ket{1}_2:\BB}}
%    {\jd{\emcon}{\ket{0}_1\otimes\ket{1}_2}:\BB\otimes\BB}
%    {(\mathsf{I}_\otimes)}
%  }
%  {\jd{\emcon}{\termone_{\mathit{EPR}}(\ket{0}_1\otimes\ket{1}_2)}:\BB\otimes\BB}
%  {(\mathsf{E}_\linmap)}
%$$
%}
\end{example}
}
{

\begin{example}\label{ex:epr}
Consider the following term:
$$
\termone_{\mathit{EPR}}=\abstr{\pair{\varone}{\vartwo}}{\mathit{CNOT}(\pairtens{\mathit{H}\varone}{y})}.
$$
$\termone_{\mathit{EPR}}$ encodes the quantum circuit which takes two input qubits and returns an 
\emph{entangled} state (a quantum state that cannot in general be expressed as the tensor product of single qubits).
It can be given the type $\BB\otimes\BB\linmap\BB\otimes\BB$ in the empty context. Indeed, here is 
a type derivation $\tdone_{\mathit{EPR}}$ for it:
$$
\urule
{
\brule
{\jd{\emcon}{\mathit{CNOT}}:\BB\otimes\BB\linmap\BB\otimes\BB}
{
\brule
{
\brule
{\jd{\emcon}{\mathit{H}}:\BB\linmap\BB}
{\jd{\varone:\BB}{\varone:\BB}}
{\jd{\varone:\BB}{\mathit{H}\varone:\BB}}
{(\mathsf{E}_\linmap)}
}
{\jd{\vartwo:\BB}{\vartwo:\BB}}
{\jd{\varone:\BB,\vartwo:\BB}{\pairtens{\mathit{H}\varone}{\vartwo}}:\BB\otimes\BB}
{(\mathsf{I}_\otimes)}
}
{\jd{\varone:\BB,\vartwo:\BB}{\mathit{CNOT}(\pairtens{\mathit{H}\varone}{\vartwo}):\BB\otimes\BB}}
{(\mathsf{E}_\linmap)}
}
{
\jd{\emcon}{\termone_{\mathit{EPR}}}:\BB\otimes\BB\linmap\BB\otimes\BB
}
{(\mathsf{I}_\linmap^2)}
$$
$\termone_{\mathit{EPR}}$ and $\tdone_{\mathit{EPR}}$ will be used as running examples in the rest of this
paper, together with the following type derivation $\tdtwo_{\mathit{EPR}}$:
$$
\brule
  {\tdone_{\mathit{EPR}}\pof\jd{\emcon}{\termone_{\mathit{EPR}}}:\BB\otimes\BB\linmap\BB\otimes\BB}
  {
    \brule
    {\jd{\emcon}{\ket{0}_1:\BB}}
    {\jd{\emcon}{\ket{1}_2:\BB}}
    {\jd{\emcon}{\ket{0}_1\otimes\ket{1}_2}:\BB\otimes\BB}
    {(\mathsf{I}_\otimes)}
  }
  {\jd{\emcon}{\termone_{\mathit{EPR}}(\ket{0}_1\otimes\ket{1}_2)}:\BB\otimes\BB}
  {(\mathsf{E}_\linmap)}
$$
\end{example}
\condinc{}
{
\comment{
\begin{example}[Deutsch's Circuit]
$$
\mathit{M}_{deu}=\abstr{\pair{\varone}{\vartwo}}{(\abstr{\pair{\varthree}{\varfour}}{\pair{\mathit{H}\varthree}{\varfour}}(\mathit{W}_{f}\pair{\mathit{H}\varone}{\mathit{H}\vartwo}))}
$$

{\scriptsize
$$
\urule
{\brule
{\urule
{
\brule
{
\brule
{\jd{\emcon}{\mathit{H}:\BB\linmap\BB}}
{\jd{\varthree:\BB}{\varthree:\BB}}
{
\jd{\varthree:\BB}{\mathit{H}\varthree:\BB\otimes\BB}
}
{(\mathsf{E}_\linmap)}
}
{
\jd{\varfour:\BB}{\varfour:\BB}
}
{\jd{\varthree:\BB,\varfour:\BB}{\pair{\mathit{H}\varthree}{\varfour}:\BB\otimes\BB}}
{\mathsf{I}_{\otimes}}
}
{
\jd{\emcon}{\abstr{\pair{\varthree}{\varfour}}{\pair{\mathit{H}\varthree}{\varfour}:\BB\otimes\BB\linmap\BB\otimes\BB}}
}
{(\mathsf{I}_\linmap^2)}
}
{\brule
{\jd{\emcon}{\mathit{W}_f:\BB\otimes\BB\linmap\BB\otimes\BB}}
{
\brule
{
\brule
{\jd{\emcon}{\mathit{H}:\BB\linmap\BB}}
{\jd{\varone:\BB}{\varone:\BB}}
{\jd{\varone:\BB}{\mathit{H}\varone:\BB}}
{(\mathsf{E}_\linmap)}
}
{
\brule{}{}{}
{(\mathsf{E}_\linmap)}
}
{\jd{\varone:\BB,\vartwo:\BB}{\pair{\mathit{H}\varone}{\mathit{H}\vartwo}:\BB\otimes\BB}}
{\mathsf{I}_{\otimes}}
}
{\jd{\varone:\BB,\vartwo:\BB}{\mathit{W}_{f}\pair{\mathit{H}\varone}{\mathit{H}\vartwo}:\BB\otimes\BB}}
{(\mathsf{E}_\linmap)}
}
{\jd{\varone:\BB,\vartwo:\BB}{\abstr{\pair{\varthree}{\varfour}}{\pair{\mathit{H}\varthree}{\varfour}}(\mathit{W}_{f}\pair{\mathit{H}\varone}{\mathit{H}\vartwo}):\BB\otimes\BB}}
{(\mathsf{E}_\linmap)}
}
{\jd{\emcon}{\termone_{\mathit{Deu}}}:\BB\otimes\BB\linmap\BB\otimes\BB}
{(\mathsf{I}_\linmap^2)}
$$
}
\end{example}
}
}
If $\tdone\pof\jd{\conone}{(\abstr{\varone}{\termone})\termtwo:\typeone}$, one can build a type derivation $\reduct{\tdone}$ with conclusion 
$\jd{\conone}{\subst{\termone}{\varone}{\termtwo}:\typeone}$ in a canonical way, by going through a constructive subsitution lemma.
Similarly when $\tdone\pof\jd{\conone}{(\abstr{\pair{\varone}{\vartwo}}{\termone})(\pairtens{\termtwo}{\termthree}):\typeone}$.
\begin{lemma}\label{lemma:substlemma}
If $\tdone\pof\jd{\conone,\varone_1:\typeone_1,\ldots,\varone_n:\typeone_n}{\termone:\typetwo}$ and 
for every $1\leq i\leq n$ there is $\tdtwo_i\pof\jd{\contwo_i}{\termtwo_i:\typeone_i}$, then there is a 
canonically defined derivation $\subst{\tdone}{\varone_1,\ldots,\varone_n}{\tdtwo_1,\ldots,\tdtwo_n}$
of $\jd{\conone,\contwo_1,\ldots,\contwo_n}{\subst{\termone}{\varone_1,\ldots,\varone_n}{\termtwo_1,
\ldots,\termtwo_n}:\typetwo}$.
\end{lemma}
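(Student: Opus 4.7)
The plan is to prove the lemma by induction on the structure of the derivation $\tdone$, with cases on its last rule. Because the type system is strictly linear, each variable in $\conone, \varone_1:\typeone_1, \ldots, \varone_n:\typeone_n$ occurs exactly once in $\termone$, so when the last rule of $\tdone$ has two premises the multiset $\{\varone_1, \ldots, \varone_n\}$ splits uniquely into the two subderivations, and this split tells us how to distribute the substitution derivations $\tdtwo_1, \ldots, \tdtwo_n$ between the inductive hypotheses.

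First I would handle the leaves. If $\tdone$ ends with $(\mathsf{a}_v)$ on some $\varone:\typetwo$, then either $n=0$ and $\termone$ is a free variable in $\conone$ (return $\tdone$), or $n=1$, $\conone$ is empty, $\typeone_1 = \typetwo$, and $\termone = \varone_1$; in this case we simply take $\subst{\tdone}{\varone_1}{\tdtwo_1}$ to be $\tdtwo_1$. If $\tdone$ ends with $(\mathsf{a}_{q0})$, $(\mathsf{a}_{q1})$ or $(\mathsf{a}_U)$, then $n=0$ and the whole substitution is vacuous; we return $\tdone$ unchanged.

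For the inductive step, the $(\mathsf{E}_\linmap)$ and $(\mathsf{I}_\otimes)$ cases are essentially identical: linearity partitions $\{\varone_1,\ldots,\varone_n\}$ into two sets, one per premise, and we invoke the induction hypothesis on each premise with the matching $\tdtwo_i$'s, then reapply the same rule. For the abstraction rules $(\mathsf{I}_\linmap^1)$ and $(\mathsf{I}_\linmap^2)$, we may freely $\alpha$-rename the bound pattern so that its variables are fresh with respect to every $\contwo_i$ and every free variable of $\termtwo_i$; then the induction hypothesis applied to the premise yields the body of the abstraction under the extended context, and reapplying the introduction rule produces the desired conclusion.

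The only mildly delicate point is book-keeping rather than any real obstacle: we must ensure that the labels $n$ attached to bit constants in the $\termtwo_i$'s remain pairwise distinct after substitution, but since in a linear substitution every substituted variable occurs exactly once in $\termone$, each $\termtwo_i$ is inserted in exactly one position and the disjointness hypothesis on bit labels in the $\tdtwo_i$'s (which we may assume after relabelling) is preserved. The resulting derivation $\subst{\tdone}{\varone_1,\ldots,\varone_n}{\tdtwo_1,\ldots,\tdtwo_n}$ is then canonically determined by the inductive construction, which is precisely what the statement requires.
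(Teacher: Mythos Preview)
Your proposal is correct and follows exactly the approach the paper takes: the paper's own proof is the single sentence ``Just proceed by the usual, simple induction on $\tdone$,'' and what you have written is a careful unfolding of precisely that induction, including the linearity-based splitting of the $\varone_i$ across binary premises and the $\alpha$-renaming under binders. Your additional remark about bit-label bookkeeping is a nice touch that the paper leaves implicit.
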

\begin{proof}
Just proceed by the usual, simple induction on $\tdone$.
\end{proof}
}

\condinc{}
{The notion of type derivation $\tdone$ of a term $\termone$  and the related definition of  $\reduct{\tdone}$, the type derivation of the reduct of $\termone$,  will be generalized in the following section taking into account quantum superposition.} 
%%%%%%%%%%%%%%%%%%%%%%%%%%%%%%%%%%%
\subsection{An Equational Theory}\label{sect:equtheo}
%%%%%%%%%%%%%%%%%%%%%%%%%%%%%%%%%%%
The $\lambda$-calculus is usually endowed with notions of reduction or equality, both centered
around the $\beta$-rule, according to which a function $\abstr{\varone}{\termone}$ applied
to an argument $\termtwo$ \emph{reduces to} (or \emph{can be considered equal to}) the term
$\termone\{\termtwo/\varone\}$ obtained by replacing all free occurrences of $\varone$ with
$\termtwo$. A reduction relation implicitly provides the underlying calculus with a notion of computation,
while an equational theory is more akin to a reasoning technique. Giving a reduction relation on $\QL$ terms
directly, however, is problematic. What happens when a $n$-ary unitary operator $U$ is faced with an $n$-tuple
of qubits $\ket{b_1\ldots b_n}$? Superposition should somehow arise, but how can we capture it?

In this section, an equational theory for $\QL$ will be introduced. In the next sections, we will prove
that the semantics induced by token machines is \emph{sound} with respect to it. The equational theory we are going
to introduce will be a binary relation on formal, weighted sums of type derivations for $\QL$ terms.
\begin{definition}[Superposed Type Derivation] 
  A \emph{superposed type derivation} of type $(\conone,\typeone)$ is a formal sum 
  \condinc{$\sutone=\sum_{i=1}^n\kappa_i\tdone_i$
  }{
  $$
  \sutone=\sum_{i=1}^n\kappa_i\tdone_i
  $$}
  where for every $1\leq i\leq n$, $\kappa_i\in\CC$ and it holds that $\tdone_i\pof\jd{\conone}{\typet{\termone_i}{\typeone}}$.
  In this case, we write $\jd{\conone}{\sutone:\typeone}$.
  Superposed type derivations will be denoted by metavariables like $\sutone$ or $\suttwo$.
\end{definition}
\condinc{
If $\tdone\pof\jd{\conone}{\termone:\typeone}$ and $\termone$ is a redex (i.e., either $U\ket{\bitone_1\ldots\bitone_k}$
or $(\abstr{\varone}{\termone})\termtwo$ or $(\abstr{\pair{\varone}{\vartwo}}{\termone})$ $(\pairtens{\termtwo}{\termthree})$),
one can easily define a (superposed) type derivation $\reduct{\tdone}$ for the term(s) obtained by firing the redex. 
The binary relation $\eqterm$ on superposed type derivations is defined
as the smallest equivalence relation including all pairs in the form $(\tdone,\reduct{\tdone})$ and closed under all
contexts.
}{		
Please, notice that:		
\begin{varitemize}
\item 
  If $\tdone\pof\jd{\emcon}{U\ket{\bitone_1\ldots\bitone_k}}$, then 
  $\reduct{\tdone}$ is 
  %$\unopone\ket{\bitone_1\ldots\bitone_k}$ is the naturally defined 
  a superposed type derivation in the form $\sum_{x\in B_k}\kappa_x\tdone_x$, where
  $B_k$ is the set of all binary strings of length $k$, $\tdone_x$ is the trivial type derivation
  for $\ket{x}$, and $\kappa_x$ is the complex number corresponding to
  $\ket{x}$ in the vector $\unopone\ket{\bitone_1\ldots\bitone_k}$.
  { \item If $\tdone\pof\jd{\conone}{(\abstr{\varone}{\termone})\termtwo:\typeone}$,  $\reduct{\tdone}$ is the  type derivation with conclusion 
  $\jd{\conone}{\subst{\termone}{\varone}{\termtwo}:\typeone}$ built in a canonical way, by going through a constructive subsitution lemma.
Similarly when $\tdone\pof\jd{\conone}{(\abstr{\pair{\varone}{\vartwo}}{\termone})(\pairtens{\termtwo}{\termthree}):\typeone}$.} 
\item
  All the term constructs can be generalized to operators on superposed type derivations, with the
  proviso that the types match. As an example if $\sutone=\sum_{i}\alpha_{i}\tdone_{i}$ where 
  $\tdone_{i}\pof\jd{\conone}{\typet{\termone_i}{\typeone\linmap\typetwo}}$ and $\tdtwo\pof\jd{\contwo}{\termtwo:\typeone}$,
  $\sutone\tdtwo$ denotes the %obviously definable 
  superposed type derivation $\suttwo=\sum_{i}\alpha_{i}\tdthree_{i}$ where  
  $\tdthree_i\pof\jd{\conone,\contwo}{\termone_i\termtwo:\typetwo}$ and each $\tdthree_i$ is obtained applying the rule 
  $(\mathsf{E}_\linmap)$ to $\tdone_i$ and $\tdtwo$.
\end{varitemize}
A binary relation $\eqterm$ on superposed type derivations having the same type can be given by way of the rules in 
Figure~\ref{fig:eqth}, where we tacitly assume that the involved superposed type derivations have the appropriate 
type whenever needed. Notice that $\eqterm$ is by construction an equivalence relation.
\begin{figure}
  \fbox{
  \begin{minipage}{.97\textwidth}
    \vspace{5pt}
    \begin{center}
      \textbf{Axioms}
      \vspace{-2ex}
    \end{center}
    $$
      \urule{\tdone\pof\jd{\conone}{(\abstr{\pair{\varone}{\vartwo}}{\termone})(\termtwo\otimes\termthree):\typeone}}
         {\tdone\eqterm\reduct{\tdone}}{\mathsf{beta.pair}}
    $$
    \begin{align*}
      \urule{\tdone\pof\jd{\conone}{(\abstr{\varone}{\termone})\termtwo:\typeone}}
         {\tdone\eqterm\reduct{\tdone}}{\mathsf{beta}} &\hspace{25pt}
      \urule{\tdone\pof\jd{\emcon}{U\ket{\bitone_1\ldots\bitone_k}:\BB^k}}{\tdone\eqterm\reduct{\tdone} %\unopone\ket{\bitone_1\ldots\bitone_k}
      }{\mathsf{quant}}
    \end{align*}
    \vspace{5pt}
    \begin{center}
      \textbf{Context Closure}
      \vspace{-4ex}
    \end{center}
    \begin{align*}
      \urule
      {\sutone\eqterm\suttwo}
      {\sutone\tdone\eqterm\suttwo\tdone}
      {\mathsf{l.a}}
      &\hspace{1.5pt}&
      \urule
      {\sutone\eqterm\suttwo}
      {\tdone\sutone\eqterm\tdone\suttwo}
      {\mathsf{r.a}}
      &\hspace{1.5pt}&
      \urule
      {\sutone\eqterm\suttwo}
      {\abstr{\varone}{\sutone}\eqterm\abstr{\varone}{\suttwo}}
      {\mathsf{in.}\lambda}
      &\hspace{1.5pt}&
      \urule
      {\sutone\eqterm\suttwo}
      {\abstr{\pair{\varone}{\vartwo}}{\sutone}\eqterm\abstr{\pair{\varone}{\vartwo}}{\suttwo}}
      {\mathsf{in.}\lambda.\mathsf{pair}}
    \end{align*}
    \begin{align*}
      \urule
      {\sutone\eqterm\suttwo}
      {\sutone\otimes\tdone\eqterm\suttwo\otimes\tdone}
      {\mathsf{l.in.tens}}
      & &&
      \urule
      {\sutone\eqterm\suttwo}
      {\tdone\otimes\sutone\eqterm\tdone\otimes\suttwo}
      {\mathsf{r.in.tens}}
      &
      \urule
      {\sutone\eqterm\suttwo}
      {\alpha\sutone+\sutthree\eqterm\alpha\suttwo+\sutthree}
      {\mathsf{sum}}
    \end{align*}
    \vspace{5pt}
    \begin{center}
      \textbf{Reflexive, Symmetric and Transitive Closure}
      \vspace{-4ex}
    \end{center}
    \begin{align*}
      \urule
      {}
      {\sutone\eqterm\sutone}
      {\mathsf{refl}}
      & &&
      \urule
      {\sutone\eqterm\suttwo}
      {\suttwo\eqterm\sutone}
      {\mathsf{sym}}
      &
      \brule
      {\sutone\eqterm\suttwo}
      {\suttwo\eqterm\sutthree}
      {\sutone\eqterm\sutthree}
      {\mathsf{trans}}
    \end{align*}
    \vspace{3pt}
  \end{minipage}}
  \caption{Equational Theory}\label{fig:eqth}
\end{figure}}
When the underlying type derivation is clear from the context, we denote superposed derivations simply by
superposed \emph{terms}. As an example, consider the term $\termone_{\mathit{EPR}}(\ket{0}_1\otimes\ket{1}_2)$
from Example~\ref{ex:epr} and the corresponding type derivation $\tdtwo_{\mathit{EPR}}$ for it \condinc{(see also~\cite{EV})}{}. It is 
convenient to be able to reason as follows, directly on the former:

\condinc{
{\scriptsize
$$
\termone_{\mathit{EPR}}(\ket{0}\otimes\ket{1})\eqterm\mathit{CNOT}(\pairtens{\mathit{H}\ket{0}}{\ket{1}})
     \eqterm\frac{1}{\sqrt{2}}\mathit{CNOT}(\pairtens{\ket{0}}{\ket{1}})+\frac{1}{\sqrt{2}}\mathit{CNOT}(\pairtens{\ket{1}}{\ket{1}})
     $$
     $$
     \eqterm\frac{1}{\sqrt{2}}\pairtens{\ket{0}}{\ket{1}}+\frac{1}{\sqrt{2}}\mathit{CNOT}(\pairtens{\ket{1}}{\ket{1}})
     \eqterm\frac{1}{\sqrt{2}}\pairtens{\ket{0}}{\ket{1}}+\frac{1}{\sqrt{2}}\pairtens{\ket{1}}{\ket{0}}.
$$
}
}
{
\begin{align*}
\termone_{\mathit{EPR}}(\ket{0}\otimes\ket{1})&\eqterm\mathit{CNOT}(\pairtens{\mathit{H}\ket{0}}{\ket{1}})\\
     &\eqterm\frac{1}{\sqrt{2}}\mathit{CNOT}(\pairtens{\ket{0}}{\ket{1}})+\frac{1}{\sqrt{2}}\mathit{CNOT}(\pairtens{\ket{1}}{\ket{1}})\\
     &\eqterm\frac{1}{\sqrt{2}}\pairtens{\ket{0}}{\ket{1}}+\frac{1}{\sqrt{2}}\mathit{CNOT}(\pairtens{\ket{1}}{\ket{1}})\\
     &\eqterm\frac{1}{\sqrt{2}}\pairtens{\ket{0}}{\ket{1}}+\frac{1}{\sqrt{2}}\pairtens{\ket{1}}{\ket{0}}.
\end{align*}
}

Please observe that the equational theory we have just defined can \emph{hardly} be seen as an operational semantics
for $\QL$. Although equations can of course be oriented, it is the very nature of a superposed type derivation which is in principle
problematic from the point of view of quantum computation: what is the mathematical nature of a superposed type derivation? Is it an element
of an Hilbert Space? And if so, of \emph{which one}?
%\condinc
%{}
%{
%{
If we consider a simple language such as $\QL$, the questions above may appear overly rhetorical, but we claim they are not.
%In some sense, a linear computation 
%can be forced to be unitary, by forcing \emph{all the computational steps} to be unitary. 
For example, what would be the quantum meaning of linear beta-reduction? If we want to design beta-reduction according
to the principles of quantum computation, it has to be, at least, easily reversible (unless measurement is implicit in it). 
Moving towards more expressive languages, this non-trivial issue becomes more difficult and a number of 
constraints have to be imposed (for example, superposition of terms can be allowed, but only between ``homogenous'' terms, i.e. 
terms which have an identical skeleton~\cite{vT04}).  This is the reason for which promising calculi~\cite{vT04} fail to be 
canonical models for quantum programming languages. 
%More generally, the superposition of ``macro-objects'' like terms or 
%type derivations opens some foundational tricky problems, as the meaning of capturing ``quantum control'' in quantum functional languages. 
This issue has been faced in literature without satisfactory answers, yielding  a number of convincing arguments in favor of the 
(implicit or explicit) classical control of quantum data~\cite{mscs2009,SV06}. 
%}
%}
\newcommand{\axrls}{\mathsf{AX}}
\newcommand{\ccrls}{\mathsf{CC}}
\condinc{}{
%%%%%%%%%%%%%%%%%%%%%%%%%%%%%%%%%%%%%%%%%%%%%%%%%%%%%%%%
\subsubsection{Equational Theory Derivations in Normal Form}
%%%%%%%%%%%%%%%%%%%%%%%%%%%%%%%%%%%%%%%%%%%%%%%%%%%%%%%%
Sometime it is quite useful to assume that a derivation for $\sutone\eqterm\suttwo$ is in a peculiar form, defined by giving an order
on the rules in Figure~\ref{fig:eqth}. More specifically, define the following two sets of rules:
\begin{align*}
\axrls&=\{\mathsf{beta},\mathsf{beta.pair},\mathsf{quant}\};\\
\ccrls&=\{\mathsf{l.a},\mathsf{r.a},\mathsf{in.}\lambda,\mathsf{in.}\lambda\mathsf{.pair},\mathsf{l.in.tens},\mathsf{r.in.tens}\}.
\end{align*}
A derivation of $\sutone\eqterm\suttwo$ is said to be \emph{in normal form} (and we write $\sutone\eqtermnf\suttwo$) iff:
\begin{varitemize}
\item
  either the derivation is obtained by applying rule $\mathsf{refl}$;
\item
  or any branch in the derivation consists in instances of rules from $\axrls$, possibly followed by instances of rules in $\ccrls$, possibly followed
  by instances of $\mathsf{sum}$, possibly followed by instances of $\mathsf{sym}$, possibly followed by instances of $\mathsf{trans}$. 
\end{varitemize}
In other words, a derivation of $\sutone\eqterm\suttwo$ is in normal form iff rules are applied in a certain order. As an example, we
cannot apply transitivity or symmetry closure rules too early, i.e., before context closure rules. One may wonder whether this restricts the
class of provable equivalences. Infact it does not:
\begin{proposition}\label{prop:normalform}
$\sutone\eqterm\suttwo$ iff $\sutone\eqtermnf\suttwo$.
\end{proposition}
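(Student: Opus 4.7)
The forward implication is immediate, since every normal-form derivation is by definition a derivation in the equational theory, so $\sutone\eqtermnf\suttwo$ trivially entails $\sutone\eqterm\suttwo$. For the converse, I would proceed by induction on the size of a derivation $\mathcal{D}$ of $\sutone\eqterm\suttwo$, establishing at each step either that $\mathcal{D}$ is already in normal form, or that a local rewrite produces an equivalent derivation with strictly smaller measure. The strategy is to push axiom instances toward the leaves and to bubble $\mathsf{sum}$, $\mathsf{sym}$ and $\mathsf{trans}$ toward the root, past all of the context-closure rules, in that order.

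The technical core consists of a family of permutation lemmas, one for each pair of adjacent rules that could appear in the ``wrong'' order relative to the normal-form layering $\axrls \prec \ccrls \prec \mathsf{sum} \prec \mathsf{sym} \prec \mathsf{trans}$. The main cases I would dispatch are: (i) any $\ccrls$-step sitting above $\mathsf{sym}$ can be commuted past it, since, e.g., applying $\mathsf{l.a}$ to a premise that was itself obtained via $\mathsf{sym}$ from $\suttwo\eqterm\sutone$ yields the same judgement $\sutone\tdone\eqterm\suttwo\tdone$ as first applying $\mathsf{l.a}$ and then $\mathsf{sym}$; (ii) a $\ccrls$-step above $\mathsf{trans}$ can be distributed over its two premises, with a single $\mathsf{trans}$ at the outer layer recovering the conclusion; (iii) a $\ccrls$-step above $\mathsf{sum}$ is absorbed inside the $\mathsf{sum}$ premise, leaving a single $\mathsf{sum}$ on the outside; (iv) a $\mathsf{sum}$ step above $\mathsf{sym}$ becomes $\mathsf{sym}$ applied to the $\mathsf{sum}$-closed conclusion, using one inner $\mathsf{sym}$; and (v) $\mathsf{sym}$ above $\mathsf{trans}$ is replaced by $\mathsf{trans}$ of the two $\mathsf{sym}$-inverted premises in reversed order. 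A separate, easy observation shows that $\mathsf{refl}$ commutes trivially with every $\ccrls$-step, so all uses of $\mathsf{refl}$ can be absorbed and one either falls into the first clause of the normal-form definition or eliminates $\mathsf{refl}$ entirely.

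To argue that iterating these permutations terminates, I would assign to each derivation the multiset of pairs $(r_1,r_2)$ of rule instances such that $r_1$ lies above $r_2$ in the derivation tree yet $r_1$ belongs to a strictly later layer than $r_2$ in the normal-form ordering. Each of the rewrites above removes one such inversion and creates no new ones (the rewritten subderivation still contains only rules from the same two layers, in the correct order), so the measure strictly decreases in the multiset ordering. Since the multiset order on finite multisets of naturals is well-founded, the rewriting process terminates in a derivation in which every branch respects the layering, i.e.\ a normal-form derivation of the same judgement $\sutone\eqterm\suttwo$.

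The main obstacle is not conceptual but bookkeeping: one must verify, for each of the six context-closure rules and for each interaction with $\mathsf{sum}$, $\mathsf{sym}$, $\mathsf{trans}$, that the rewrite really preserves the conclusion and that the rewritten derivation is well-typed (the relevant superposed type derivations in intermediate positions always have the correct type, because each rewrite preserves the endpoints of the equation being permuted). Once the permutation lemmas have been checked case by case, the induction and the measure argument combine to deliver the normal-form derivation.
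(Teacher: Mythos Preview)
Your approach is essentially the paper's: both normalize by iterating the same family of local rule permutations (the paper spells out exactly the commutations you list, together with the $\mathsf{sum}$/$\mathsf{trans}$ case you omit but clearly intend). The only difference is in the termination bookkeeping---the paper inducts on derivation height and carries the invariant that height does not increase, whereas you use a decreasing inversion count; both are standard, and your measure argument is correct once one notes that in the $\mathsf{trans}$-commutation cases the duplicated rule replaces old inversion instances by an equal number of new ones, so the net count still drops by exactly one.
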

\begin{proof}
If $\sutone\eqtermnf\suttwo$, then of course $\sutone\eqterm\suttwo$.
The converce can be showed by induction on the \emph{height} $n$ of a proof of $\sutone\eqterm\suttwo$, enriching the thesis by
prescribing that the \emph{height} of the obtained proof of $\sutone\eqterm\suttwo$ must be at most $n$:
\begin{varitemize}
\item
  If $\sutone\eqterm\suttwo$ is proved by rules in $\axrls$ or by $\mathsf{refl}$, then 
  by definition $\sutone\eqtermnf\suttwo$.
\item
  If $\sutone\eqterm\suttwo$ is derived by rules in $\ccrls$ from a proof $\tdone$, then:
  \begin{varitemize}
  \item
    If the rules in $\tdone$ are all from $\axrls$ and $\ccrls$, then there is nothing to do.
  \item
    If the last rule in $\tdone$ is $\mathsf{sum}$, then we can apply one of the following transformations,
    so as to be able to apply the induction hypothesis:

    {\footnotesize
    $$
      \urule
      {
        \urule
        {\sutthree\eqterm\sutfive}
        {\alpha\sutthree+\sutfour\eqterm\alpha\sutfive+\sutfour}
        {\mathsf{sum}}
      }
      {\alpha\sutthree\tdone+\sutfour\tdone\eqterm\alpha\sutfive\tdone+\sutfour\tdone}
      {\mathsf{l.a}}
      \qquad
      \Longrightarrow
      \qquad
      \urule
      {
        \urule
        {\sutthree\eqterm\sutfive}
        {\sutthree\tdone\eqterm\sutfive\tdone}
        {\mathsf{\mathsf{l.a}}}
      }
      {\alpha\sutthree\tdone+\sutfour\tdone\eqterm\alpha\sutfive\tdone+\sutfour\tdone}
      {\mathsf{sum}}
    $$
    \vspace{4pt}
    $$
      \urule
      {
        \urule
        {\sutthree\eqterm\sutfive}
        {\alpha\sutthree+\sutfour\eqterm\alpha\sutfive+\sutfour}
        {\mathsf{sum}}
      }
      {\alpha\tdone\sutthree+\tdone\sutfour\eqterm\alpha\tdone\sutfive+\tdone\sutfour}
      {\mathsf{r.a}}
      \qquad
      \Longrightarrow
      \qquad
      \urule
      {
        \urule
        {\sutthree\eqterm\sutfive}
        {\tdone\sutthree\eqterm\tdone\sutfive}
        {\mathsf{\mathsf{r.a}}}
      }
      {\alpha\tdone\sutthree+\tdone\sutfour\eqterm\alpha\tdone\sutfive+\tdone\sutfour}
      {\mathsf{sum}}
    $$
    \vspace{4pt}
    $$
      \urule
      {
        \urule
        {\sutthree\eqterm\sutfive}
        {\alpha\sutthree+\sutfour\eqterm\alpha\sutfive+\sutfour}
        {\mathsf{sum}}
      }
      {\alpha\abstr{\varone}{\sutthree}+\abstr{\varone}{\sutfour}\eqterm\alpha\abstr{\varone}{\sutfive}+\abstr{\varone}{\sutfour}}
      {\mathsf{in.}\lambda}
      \qquad
      \Longrightarrow
      \qquad
      \urule
      {
        \urule
        {\sutthree\eqterm\sutfive}
        {\abstr{\varone}{\sutthree}\eqterm\abstr{\varone}{\sutfive}}
        {\mathsf{\mathsf{in.}\lambda}}
      }
      {\alpha\abstr{\varone}{\sutthree}+\abstr{\varone}{\sutfour}\eqterm\alpha\abstr{\varone}{\sutfive}+\abstr{\varone}{\sutfour}}
      {\mathsf{sum}}
    $$
    \vspace{4pt}
    $$
      \urule
      {
        \urule
        {\sutthree\eqterm\sutfive}
        {\alpha\sutthree+\sutfour\eqterm\alpha\sutfive+\sutfour}
        {\mathsf{sum}}
      }
      {\alpha\abstr{\pair{\varone}{\vartwo}}{\sutthree}+\abstr{\pair{\varone}{\vartwo}}{\sutfour}\eqterm\alpha\abstr{\pair{\varone}{\vartwo}}{\sutfive}+\abstr{\pair{\varone}{\vartwo}}{\sutfour}}
      {\mathsf{in.}\lambda\mathsf{.pair}}
      \qquad
      \Longrightarrow
    $$
    $$
      \urule
      {
        \urule
        {\sutthree\eqterm\sutfive}
        {\abstr{\pair{\varone}{\vartwo}}{\sutthree}\eqterm\abstr{\pair{\varone}{\vartwo}}{\sutfive}}
        {\mathsf{\mathsf{in.}\lambda}\mathsf{.pair}}
      }
      {\alpha\abstr{\pair{\varone}{\vartwo}}{\sutthree}+\abstr{\pair{\varone}{\vartwo}}{\sutfour}\eqterm\alpha\abstr{\pair{\varone}{\vartwo}}{\sutfive}+\abstr{\pair{\varone}{\vartwo}}{\sutfour}}
      {\mathsf{sum}}
    $$
    \vspace{4pt}
    $$
      \urule
      {
        \urule
        {\sutthree\eqterm\sutfive}
        {\alpha\sutthree+\sutfour\eqterm\alpha\sutfive+\sutfour}
        {\mathsf{sum}}
      }
      {\alpha\sutthree\otimes\tdone+\sutfour\otimes\tdone\eqterm\alpha\sutfive\otimes\tdone+\sutfour\otimes\tdone}
      {\mathsf{l.in.tens}}
      \qquad
      \Longrightarrow
      \qquad
      \urule
      {
        \urule
        {\sutthree\eqterm\sutfive}
        {\sutthree\otimes\tdone\eqterm\sutfive\otimes\tdone}
        {\mathsf{\mathsf{l.in.tens}}}
      }
      {\alpha\sutthree\otimes\tdone+\sutfour\tdone\otimes\eqterm\alpha\sutfive\otimes\tdone+\sutfour\otimes\tdone}
      {\mathsf{sum}}
    $$
    \vspace{4pt}
    $$
      \urule
      {
        \urule
        {\sutthree\eqterm\sutfive}
        {\alpha\sutthree+\sutfour\eqterm\alpha\sutfive+\sutfour}
        {\mathsf{sum}}
      }
      {\alpha\tdone\otimes\sutthree+\tdone\otimes\sutfour\eqterm\alpha\tdone\otimes\sutfive+\tdone\otimes\sutfour}
      {\mathsf{r.in.tens}}
      \qquad
      \Longrightarrow
      \qquad
      \urule
      {
        \urule
        {\sutthree\eqterm\sutfive}
        {\tdone\otimes\sutthree\eqterm\tdone\otimes\sutfive}
        {\mathsf{\mathsf{r.in.tens}}}
      }
      {\alpha\tdone\otimes\sutthree+\tdone\otimes\sutfour\eqterm\alpha\tdone\otimes\sutfive+\tdone\otimes\sutfour}
      {\mathsf{sum}}
    $$}
  \item
    If the last rule in $\tdone$ is $\mathsf{sym}$ or $\mathsf{trans}$, then we can easily apply similar transformations,
    so as to be able to apply the induction hypothesis.
  \item
    If the last rule in $\tdone$ is $\mathsf{refl}$, then we can derive $\sutone\eqterm\suttwo$ by a single application
    of $\mathsf{refl}$.
  \end{varitemize}
\item
  If $\sutone\eqterm\suttwo$ is derived by $\mathsf{sum}$ from a proof $\tdone$, then:
  \begin{varitemize}
  \item
    If the rules in $\tdone$ are all from $\axrls$ or $\ccrls$, or are $\mathsf{sum}$, then there is nothing to do.
  \item
    If the last rule in $\tdone$ is $\mathsf{sym}$, then we can apply the following transformation,
    so as to be able to apply the induction hypothesis:
    $$
      \urule
      {
        \urule
        {\sutthree\eqterm\sutfive}
        {\sutfive\eqterm\sutthree}
        {\mathsf{sym}}
      }
      {\alpha\sutfive+\sutfour\eqterm\alpha\sutthree+\sutfour}
      {\mathsf{sum}}
      \qquad
      \Longrightarrow
      \qquad
      \urule
      {
        \urule
        {\sutthree\eqterm\sutfive}
        {\alpha\sutthree+\sutfour\eqterm\alpha\sutfive+\sutfour}
        {\mathsf{sum}}
      }
      {\alpha\sutfive+\sutfour\eqterm\alpha\sutthree+\sutfour}
      {\mathsf{sym}}
    $$
  \item
    If the last rule in $\tdone$ is $\mathsf{trans}$, then we can apply the following transformation,
    so as to be able to apply the induction hypothesis
    
    {\footnotesize
    $$
    \urule
    {
      \brule
      {\sutthree\eqterm\sutfive}
      {\sutfive\eqterm\sutsix}
      {\sutthree\eqterm\sutsix}
      {\mathsf{trans}}
    }
    {\alpha\sutthree+\sutfour\eqterm\alpha\sutsix+\sutfour}
    {\mathsf{sum}}
    \qquad
    \Longrightarrow
    \qquad
    \brule
    {
      \urule
      {\sutthree\eqterm\sutfive}
      {\alpha\sutthree+\sutfour\eqterm\alpha\sutfive+\sutfour}
      {\mathsf{sum}}
    }
    {
      \urule
      {\sutfive\eqterm\sutsix}
      {\alpha\sutfive+\sutfour\eqterm\alpha\sutsix+\sutfour}
      {\mathsf{sum}}
    }
    {\alpha\sutfive+\sutfour\eqterm\alpha\sutsix+\sutfour}
    {\mathsf{trans}}
    $$}
  \item
    If the last rule in $\tdone$ is $\mathsf{refl}$, then we can derive $\sutone\eqterm\suttwo$ by a single application
    of $\mathsf{refl}$.
  \end{varitemize}
\item
  If $\sutone\eqterm\suttwo$ is derived by $\mathsf{sym}$ from a proof $\tdone$, then:
  \begin{varitemize}
  \item
    If the rules in $\tdone$ are all from $\axrls$ or $\ccrls$, or are $\mathsf{sum}$ or $\mathsf{sym}$, then there is nothing to do.
  \item
    If the last rule in $\tdone$ is $\mathsf{trans}$, then we can apply the following transformation,
    so as to be able to apply the induction hypothesis:
    $$
    \urule
    {
      \brule
      {\sutthree\eqterm\sutfive}
      {\sutfive\eqterm\sutsix}
      {\sutthree\eqterm\sutsix}
      {\mathsf{trans}}
    }
    {\sutsix\eqterm\sutthree}
    {\mathsf{sym}}
    \qquad
    \Longrightarrow
    \qquad
    \brule
    {
      \urule
      {\sutfive\eqterm\sutsix}
      {\sutsix\eqterm\sutfive}
      {\mathsf{sym}}
    }
    {
      \urule
      {\sutthree\eqterm\sutfive}
      {\sutfive\eqterm\sutthree}
      {\mathsf{sym}}
    }
    {\sutsix\eqterm\sutthree}
    {\mathsf{trans}}
    $$
  \item
    If the last rule in $\tdone$ is $\mathsf{refl}$, then we can derive $\sutone\eqterm\suttwo$ by a single application
    of $\mathsf{refl}$.
  \end{varitemize}
\item
  If $\sutone\eqterm\suttwo$ is derived by $\mathsf{trans}$ from two proofs of $\tdone$ and $\tdtwo$, then
  if either $\tdone$ or $\tdtwo$ is derived by $\mathsf{refl}$, then the required proof is already in our
  hand. Otherwise, there is nothing to do.
\end{varitemize}
This concludes the proof.
\end{proof}
}
%%%%%%%%%%%%%%%%%%%%%%%%%%%%%%%%%%%%%%%
\section{A Token Machine for $\QL$}\label{sec:QLTM}
%%%%%%%%%%%%%%%%%%%%%%%%%%%%%%%%%%%%%%%
In this section we describe an interpretation of $\QL$ type derivations 
in terms of a specific token machine called $\IAM{\QL}$. 

With a slight abuse of notation, a permutation $\permone:\{1,\ldots,n\}\rightarrow\{1,\ldots,n\}$ will
be often applied to sequences of length $n$ with the obvious meaning:
$\permone(a_1,\ldots,a_n)=a_{\permone(1)},\ldots,a_{\permone(n)}$.
Similarly, such a permutation can be seen as the \emph{unique} unitary operator on $\CC^{2^n}$ which
sends $\ket{b_1\ldots b_n}$ to $\ket{b_{\permone(1)}\ldots b_{\permone(n)}}$.
Suppose given an operator $\unopone\in\uset$ of arity $n\in\NN$. Now, take
a natural number $m\geq n$ and $n$ distinct natural numbers $j_1,\ldots,j_n$,
all of them smaller or equal to $m$. With $\unopone_m^{j_1,\ldots,j_n}$ (or simply
with $\unopone^{j_1,\ldots,j_n}$) we indicate the operator of arity $m$ which acts
like $\unopone$ on the qubits indexed with $j_1,\ldots,j_n$ and leave all the other
qubits unchanged.

In the following, with a slight abuse of notation,
occurrences of types in type derivations are confused with
types themselves. On the other hand, occurrences of types
\emph{inside other types} will be defined quite precisely, as follows.
\emph{Contexts} (types with an hole) are denoted by metavariables
like $\ctone,\cttwo$. A context $\ctone$ is said to be
\emph{a context for a type $\typeone$} if $\ctone[\BB]=\typeone$.
\emph{Negative contexts} (i.e., contexts where the hole is in negative
position) are denoted by metavariables like $\nconone,\ncontwo$. 
\emph{Positive} ones are denoted by metavariables like $\pconone,\pcontwo$.
An \emph{occurrence} of $\BB$ in the type derivation $\tdone$
is a pair $(\typeone,\ctone)$, where $\typeone$ is an occurrence
of a type in $\tdone$ and $\ctone$ is a context for $\typeone$.
Sequences of occurrences are indicated with metavariables like
$\soccone,\socctwo$ (possibly indexed). All sequences of occurrences we will deal with
do not contain duplicates. Type constructors $\linmap$ and $\tens$ can
be generalized to operators on occurrences and sequences of occurrences, e.g.
$(\typeone,\ctone)\linmap\typetwo$ is just $(\typeone\linmap\typetwo,\ctone\linmap\typetwo)$.

Given (an occurrence of) a type $\typeone$, all positive
and negative occurrences of $\BB$ inside $\typeone$ can be
put in sequences called $\poccs{\typeone}$ and $\noccs{\typeone}$
\condinc{\hspace{-0.5ex}.
}
{
as follows (where $\cdot$ is sequence concatenation):
\begin{align*}
  \poccs{\BB}&=(\BB,\emct);\\
  \noccs{\BB}&=\emseq;\\
  \poccs{\typeone\tens\typetwo}&=(\poccs{\typeone}\tens\typetwo)\cdot(\typeone\tens\poccs{\typetwo});\\
  \noccs{\typeone\tens\typetwo}&=(\noccs{\typeone}\tens\typetwo)\cdot(\typeone\tens\noccs{\typetwo});\\
  \poccs{\typeone\linmap\typetwo}&=(\noccs{\typeone}\linmap\typetwo)\cdot(\typeone\linmap\poccs{\typetwo});\\
  \noccs{\typeone\linmap\typetwo}&=(\poccs{\typeone}\linmap\typetwo)\cdot(\typeone\linmap\noccs{\typetwo}).
\end{align*}
}
As an example, the positive occurrences in the type $\BB\linmap\BB\tens\BB$ should be the two
rightmost ones. And indeed:
\condinc
{
$ \poccs{\BB\linmap\BB\tens\BB}=(\BB,\BB\linmap(\emct\tens\BB)),(\BB,\BB\linmap(\BB\tens\emct))$.
}
{
\begin{align*}
  \poccs{\BB\linmap\BB\tens\BB}&=(\noccs{\BB}\linmap\BB\tens\BB)\cdot(\BB\linmap\poccs{\BB\tens\BB})\\
     &=\emseq\cdot(\BB\linmap\poccs{\BB\tens\BB})=\BB\linmap\poccs{\BB\tens\BB}\\
     &=(\BB\linmap(\poccs{\BB}\tens\BB))\cdot(\BB\linmap(\BB\tens\poccs{\BB}))\\
     &=(\BB,\BB\linmap(\emct\tens\BB)),(\BB,\BB\linmap(\BB\tens\emct)).
\end{align*}

}
For every type derivation $\tdone$, $\bitocc{\tdone}$ is the sequence of all occurrences
of $\BB$ in $\tdone$ which are introduced by the rules $(\mathsf{a_{q0}})$ and $(\mathsf{a_{q1}})$ (from Figure~\ref{fig:TypR}). Similarly,
$\bitval{\tdone}$ is the corresponding sequence of binary digits, seen as a vector in $\CC^{2^{|\bitocc{\tdone}|}}$. 
Both in $\bitocc{\tdone}$ and in $\bitval{\tdone}$, the order is the one induced by the natural number labeling 
the underlying bit in $\tdone$. 
\condinc{}{
As an example, consider the following type derivation, and call it $\tdone$:
$$
\brule
    {\jd{\emcon}{\ket{0}_2:\BB_1}}
    {\jd{\emcon}{\ket{1}_1:\BB_2}}
    {\jd{\emcon}{\ket{0}_2\otimes\ket{1}_1}:\BB_3\otimes\BB_4}
    {(\mathsf{I}_\otimes)}
$$
There are four occurrences of $\BB$ in it, and we have indexed it with the first four positive natural numbers, just
to be able to point at them without being forced to use the formal, context machinery. Only two of them, namely the
upper ones, are introduced by instances of the rules $(\mathsf{a_{q0}})$ and $(\mathsf{a_{q1}})$. Moreover, the rightmost
one serves to type a bit having an index (namely $1$) greater than the one in the other instance (namely $2$). As a consequence,
$\bitocc{\tdone}$ is the sequence $\BB_2,\BB_1$. The two instances introduces bits $0$ and $1$; then $\bitval{\tdone}=\ket{1}\otimes\ket{0}$.
As another example, one can easily compute $\bitocc{\tdone_{\mathit{EPR}}}$ and $\bitval{\tdone_{\mathit{EPR}}}$ (where
$\tdone_{\mathit{EPR}}$ is from Example~\ref{ex:epr}), finding out that both are the empty sequence.}

Finally, we are able to define, for every $\tdone$, the abstract machine $\autom{\tdone}$ interpreting it:
\begin{varitemize}
\item
  The \emph{states} of $\autom{\tdone}$ form a set $\states{\tdone}$ and are in the form
  $(\occone_1,\ldots,\occone_n,\qrone)$
  where:
  \begin{varitemize}
  \item
    $\occone_1,\ldots,\occone_n$ are occurrences of the 
    type $\BB$ in $\tdone$; 
  \item
    $\qrone$ is a \emph{quantum register} on $n$ qubits, i.e. a normalized vector in $\CC^{2^n}$\condinc{}{(see Section~\ref{sec:qcomp})}.
  \end{varitemize}
\item
  The \emph{transition relation} $\trans{\tdone}\subseteq\states{\tdone}\times\states{\tdone}$
  is defined based on $\tdone$, following Figure~\ref{fig:transone} and Figure~\ref{fig:transtwo}.
  In the latter, %$\occone_1,\ldots,\occone_{2n}$ are 
  each of the $2n$ occurrences of $\BB$ in the type of $U$ is simply denoted through
  its index, and for every $1\leq k\leq m$, $i_k$ is the position of $\BB_k$ in the sequence
  $(\soccone_1,\BB_{j_1},\soccone_2,\ldots,\soccone_{m},\BB_{j_m},\soccone_{m+1})$. \condinc{In the former 
  the transition rules induced by $(\mathsf{I}_\linmap^2)$ have been
  elided for the sake of simplicity (see~\cite{EV}).}{}
\end{varitemize}
The number of positive (negative, respectively) occurrences
of $\BB$ in the conclusion of $\tdone$ is said to be the \emph{output arity}
(the \emph{input arity}, respectively) of $\tdone$.
\begin{figure}
\fbox{
\begin{minipage}{.97\textwidth}
{\scriptsize
\begin{center}
\begin{tabular}{cc}
\begin{minipage}{2cm}
$$
\urule{}{\jd{\typet{x}{\typeone_1}}{\typet{x}{\typeone_2}}}{}
$$
\end{minipage}
&
\begin{minipage}{6.5cm}
\begin{eqnarray*}
((\soccone,(\typeone_1,\pconone),\socctwo),\qrone) &\trans{\tdone}& 
  ((\soccone,(\typeone_2,\pconone),\socctwo),\qrone)\\
((\soccone,(\typeone_2,\nconone),\socctwo),\qrone) &\trans{\tdone}& 
  ((\soccone,(\typeone_1,\nconone),\socctwo),\qrone)
\end{eqnarray*}
\end{minipage}
\end{tabular}

\vspace{2pt}

\begin{tabular}{cc}
\begin{minipage}{3.5cm}
$$
\urule
  {\jd{\conone_1,\varone:\typeone_1}{\typet{\termone}{\typetwo_1}}}
  {\jd{\conone_2}{\typet{\abstr{\varone}{\termone}}{\typeone_2\linmap\typetwo_2}}}
  {}
$$
\end{minipage}
&
\begin{minipage}{7cm}
$$
\begin{array}{c}
((\soccone,(\typeone_1,\nconone),\socctwo),\qrone) \trans{\tdone} 
  ((\soccone,(\typeone_2\linmap\typetwo_2,\nconone\linmap\typetwo_2),\socctwo),\qrone)\\ \vspace{-7pt} \\
((\soccone,(\typeone_2\linmap\typetwo_2,\pconone\linmap\typetwo_2),\socctwo),\qrone) \trans{\tdone}
  ((\soccone,(\typeone_1,\pconone),\socctwo),\qrone)\\ \vspace{-7pt} \\
((\soccone,(\typetwo_1,\pconone),\socctwo),\qrone) \trans{\tdone} 
  ((\soccone,(\typeone_2\linmap\typetwo_2,\typeone_2\linmap\pconone),\socctwo),\qrone)\\ \vspace{-7pt} \\
((\soccone,(\typeone_2\linmap\typetwo_2,\typeone_2\linmap\nconone),\socctwo),\qrone) \trans{\tdone} 
  ((\soccone,(\typetwo_1,\nconone),\socctwo),\qrone)\\ \vspace{-7pt} \\
((\soccone,(\conone_2,\pconone),\socctwo),\qrone) \trans{\tdone}
  ((\soccone,(\conone_1,\pconone),\socctwo),\qrone)\\ \vspace{-7pt} \\
((\soccone,(\conone_1,\nconone),\socctwo),\qrone) \trans{\tdone} 
  ((\soccone,(\conone_2,\nconone),\socctwo),\qrone)\\ \vspace{-7pt} \\
\end{array}
$$
\end{minipage}
\end{tabular}

\vspace{2pt}

\condinc{}{
\begin{tabular}{cc}
\begin{minipage}{4.5cm}
$$
\urule
   {\jd{\conone_1,\varone:\typeone_1,\vartwo:\typetwo_1}{\typet{M}{\typethree_1}}}
   {\jd{\conone_2}{\typet{\abstr{\langle\varone,\vartwo\rangle}{\termone}}
          {(\typeone_2\otimes\typetwo_2)\linmap\typethree_2}}}
   {}
$$
\end{minipage}
&
\begin{minipage}{8cm}
  $$
  \begin{array}{c}
    ((\soccone,(\typeone_1,\nconone),\socctwo),\qrone) \trans{\tdone} 
      ((\soccone,(\typeone_2\otimes\typetwo_2\linmap\typethree_2,\nconone\otimes\typetwo_2\linmap\typethree_2),\socctwo),\qrone)\\ \vspace{-7pt} \\
    ((\soccone,(\typeone_2\otimes\typetwo_2\linmap\typethree_2,\pconone\otimes\typetwo_2\linmap\typethree_2),\socctwo),\qrone) \trans{\tdone}
       ((\soccone,(\typeone_1,\pconone),\socctwo),\qrone)\\ \vspace{-7pt} \\
    ((\soccone,(\typetwo_1,\nconone),\socctwo),\qrone) \trans{\tdone} 
      ((\soccone,(\typeone_2\otimes\typetwo_2\linmap\typethree_2,\typeone_2\otimes\nconone\linmap\typetwo_2),\socctwo),\qrone)\\ \vspace{-7pt} \\
    ((\soccone,(\typeone_2\otimes\typetwo_2\linmap\typethree_2,\typeone_2\otimes\pconone\linmap\typethree_2),\socctwo),\qrone) \trans{\tdone}
       ((\soccone,(\typetwo_1,\pconone),\socctwo),\qrone)\\ \vspace{-7pt} \\
    ((\soccone,(\typethree_1,\pconone),\socctwo),\qrone) \trans{\tdone} 
       ((\soccone,(\typeone_2\otimes\typetwo_2\linmap\typethree_2,\typeone_2\otimes\typetwo_2\linmap\pconone),\socctwo),\qrone)\\ \vspace{-7pt} \\
    ((\soccone,(\typeone_2\otimes\typetwo_2\linmap\typethree_2,\typeone_2\otimes\typetwo_2\linmap\nconone),\socctwo),\qrone) \trans{\tdone} 
       ((\soccone,(\typethree_1,\nconone),\socctwo),\qrone)\\ \vspace{-7pt} \\
    ((\soccone,(\conone_2,\pconone),\socctwo),\qrone) \trans{\tdone}
    ((\soccone,(\conone_1,\pconone),\socctwo),\qrone)\\ \vspace{-7pt} \\
    ((\soccone,(\conone_1,\nconone),\socctwo),\qrone) \trans{\tdone} 
    ((\soccone,(\conone_2,\nconone),\socctwo),\qrone)\\ \vspace{-7pt} \\
  \end{array}
  $$
\end{minipage}
\end{tabular}

\vspace{2pt}}

\begin{tabular}{cc}
\begin{minipage}{4.5cm}
$$
\brule
    {\jd{\conone_1}{\typet{\termone}{\typeone_1\linmap\typetwo_1}}}
    {\jd{\contwo_1}{\termtwo:\typeone}_2}
    {\jd{\conone_2,\contwo_2}{\termone\termtwo:\typetwo_2}}
    {}
$$
\end{minipage}
&
\begin{minipage}{7cm}
$$
\begin{array}{c}
((\soccone,(\typeone_2,\pconone),\socctwo),\qrone) \trans{\tdone} 
  ((\soccone,(\typeone_1\linmap\typetwo_1,\pconone\linmap\typetwo_1),\socctwo),\qrone)\\ \vspace{-7pt} \\
((\soccone,(\typeone_1\linmap\typetwo_1,\nconone\linmap\typetwo_1),\socctwo),\qrone) \trans{\tdone} 
  ((\soccone,(\typeone_2,\nconone),\socctwo),\qrone)\\ \vspace{-7pt} \\
((\soccone,(\typeone_1\linmap\typetwo_1,\typeone_1\linmap\pconone),\socctwo),\qrone) \trans{\tdone} 
  ((\soccone,(\typetwo_2,\pconone),\socctwo),\qrone)\\ \vspace{-7pt} \\
((\soccone,(\typetwo_2,\nconone),\socctwo),\qrone) \trans{\tdone} 
  ((\soccone,(\typeone_1\linmap\typetwo_1,\typeone\linmap\nconone),\socctwo),\qrone)\\ \vspace{-7pt} \\
((\soccone,(\conone_2,\pconone),\socctwo),\qrone) \trans{\tdone} 
  ((\soccone,(\conone_1,\pconone),\socctwo),\qrone)\\ \vspace{-7pt} \\
((\soccone,(\conone_1,\nconone),\socctwo),\qrone) \trans{\tdone} 
  ((\soccone,(\conone_2,\nconone),\socctwo),\qrone)\\ \vspace{-7pt} \\
((\soccone,(\contwo_2,\pconone),\socctwo),\qrone) \trans{\tdone} 
  ((\soccone,(\contwo_1,\pconone),\socctwo),\qrone)\\ \vspace{-7pt} \\
((\soccone,(\contwo_1,\nconone),\socctwo),\qrone) \trans{\tdone} 
  ((\soccone,(\contwo_2,\nconone),\socctwo),\qrone)
\end{array}
$$
\end{minipage}
\end{tabular}

\vspace{2pt}

\begin{tabular}{cc}
\begin{minipage}{4.25cm}
$$
\brule
    {\jd{\conone_1}{\typet{\termone}{\typeone_1}}}
    {\jd{\contwo_1}{\typet{\termtwo}{\typetwo_1}}}
    {\jd{\conone_2,\contwo_2}{\typet{\termone\otimes\termtwo}{\typeone_2\otimes\typetwo_2}}}
    {(\mathsf{I}_\otimes)}
$$
\end{minipage}
&
\begin{minipage}{7.25cm}
$$
\begin{array}{c}
((\soccone,(\typeone_2\otimes\typetwo_2,\nconone\otimes\typetwo_2),\socctwo),\qrone) \trans{\tdone} 
((\soccone,(\typeone_1,\nconone),\socctwo),\qrone)\\ \vspace{-7pt} \\
((\soccone,(\typeone_2\otimes\typetwo_2,\typeone_2\otimes\nconone),\socctwo),\qrone) \trans{\tdone} 
((\soccone,(\typetwo_1,\nconone),\socctwo),\qrone)\\ \vspace{-7pt} \\
((\soccone,(\typeone_1,\pconone),\socctwo),\qrone) \trans{\tdone} 
  ((\soccone,(\typeone_2\otimes\typetwo_2,\pconone\otimes\typetwo_2),\socctwo),\qrone)\\ \vspace{-7pt} \\
((\soccone,(\typetwo_1\pconone),\socctwo),\qrone) \trans{\tdone} 
  ((\soccone,(\typeone_2\otimes\typetwo_2,\typeone_2\otimes\pconone),\socctwo),\qrone)\\ \vspace{-7pt} \\
((\soccone,(\conone_1,\nconone),\socctwo),\qrone) \trans{\tdone} 
  ((\soccone,(\conone_2,\nconone),\socctwo),\qrone)\\ \vspace{-7pt} \\
((\soccone,(\contwo_1,\nconone),\socctwo),\qrone) \trans{\tdone} 
  ((\soccone,(\contwo_2,\nconone),\socctwo),\qrone)\\ \vspace{-7pt} \\
((\soccone,(\conone_2,\pconone),\socctwo),\qrone) \trans{\tdone} 
  ((\soccone,(\conone_1,\pconone),\socctwo),\qrone)\\ \vspace{-7pt} \\
((\soccone,(\contwo_2,\pconone),\socctwo),\qrone) \trans{\tdone} 
  ((\soccone,(\contwo_1,\pconone),\socctwo),\qrone)
\end{array}
$$
\end{minipage}
\end{tabular}
\vspace{10pt}
\end{center}}
\end{minipage}}
\caption{Quantum GoI Machine --- Classical Rules}\label{fig:transone}
\end{figure}
\begin{figure}
\fbox{
{\scriptsize
\begin{minipage}{.97\textwidth}
\begin{center}
$$
\urule{}{\jd{\emcon}{U:\BB_1\otimes\ldots\otimes\BB_m\linmap\BB_{m+1}\otimes\ldots\otimes\BB_{2m}}}{}
$$
$$
\begin{array}{c}
((\soccone_1,\BB_{j_1},\soccone_2,\ldots,\soccone_{m},\BB_{j_m},\soccone_{m+1}),\qrone)\\
\trans{\tdone}\\
((\soccone_1,\BB_{j_1+m},\soccone_2,\ldots,\soccone_{m},\BB_{j_m+m},\soccone_{m+1}),
\unopone^{i_1,\ldots,i_m}(\qrone))
\end{array}
$$
\end{center}
\vspace{7pt}
\end{minipage}
}
}
\caption{Quantum GoI Machine --- Quantum Rules}\label{fig:transtwo}
\end{figure}
Given a type derivation $\tdone$, the relation $\trans{\tdone}$ enjoys a strong form of confluence:
\begin{proposition}[One-step Confluence of $\trans{\tdone}$]
Let $\stone, \sttwo, \stthree\in \states{\tdone}$ be such that $\stone\trans{\tdone} \sttwo$ and $\stone\trans{\tdone}\stthree$. 
Then either $\sttwo=\stthree$ or there exists a state $\stfour$ such that $\sttwo\trans{\tdone} \stfour$ and $\stthree\trans{\tdone} \stfour$.
\end{proposition}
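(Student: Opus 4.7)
The plan is to perform a case analysis on the pair of transitions $\stone \trans{\tdone} \sttwo$ and $\stone \trans{\tdone} \stthree$, after isolating two structural facts about the rules. Transitions come in two flavours: \emph{classical} ones (Figure~\ref{fig:transone}), each of which rewrites a single occurrence inside $\stone$ and leaves the quantum register $\qrone$ untouched, and \emph{quantum} ones (Figure~\ref{fig:transtwo}), which simultaneously update the $m$ occurrences sitting at the $m$ input positions $\BB_{j_1},\ldots,\BB_{j_m}$ of some $(\mathsf{a_U})$ axiom and apply $\unopone^{i_1,\ldots,i_m}$ to $\qrone$. First, I would observe that the classical rules are \emph{locally deterministic}: by inspection of Figure~\ref{fig:transone}, for any given type occurrence there is at most one classical rule whose left-hand side matches it, because each position in $\tdone$ belongs to a unique typing rule, and each such typing rule fixes the outgoing transitions in a pairwise disjoint way (positive vs.\ negative context, context vs.\ type components). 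Second, and crucially, I would check that \emph{no classical transition is enabled on a token sitting at an input occurrence $\BB_{j_k}$ of a $U$-axiom}: the rule $(\mathsf{a_U})$ contributes only the quantum transition, and none of the left-hand sides in Figure~\ref{fig:transone} match a negative occurrence internal to $U$'s type (the application rule only provides transitions that move tokens \emph{into} this position from the conclusion below, never out of it).

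The case analysis is then straightforward. In the \textbf{classical/classical} case, if the two transitions act on the same occurrence of $\stone$, local determinism gives $\sttwo = \stthree$; if they act on distinct occurrences, they touch disjoint components of the state (and both preserve $\qrone$), so they trivially commute and converge to some common $\stfour$. In the \textbf{quantum/quantum} case, two firings for the same $U$-axiom involve the same input set $\{\BB_{j_1},\ldots,\BB_{j_m}\}$ and produce the same output and register update, hence $\sttwo = \stthree$; two firings for \emph{distinct} $U$-axioms involve disjoint sets of occurrences (different axioms occupy disjoint positions in $\tdone$) and induce unitaries on disjoint sets of qubits of $\qrone$, so by the tensor decomposition of $\unopone_1^{\vec i_1}$ and $\unopone_2^{\vec i_2}$ they commute and converge. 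In the \textbf{classical/quantum} case, the classical move acts on a single token $t$ and the quantum one on the input tokens $t_1,\ldots,t_m$ of some axiom $(\mathsf{a_U})$; by the second observation $t \notin \{t_1,\ldots,t_m\}$, so the two actions touch disjoint occurrences, the enabling conditions of each are preserved by the other, and commutation follows.

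I expect the only real point of care to be the observation that no classical transition fires on a token already sitting at an input of a $U$-axiom: without it, a classical step could ``steal'' an input needed by a concurrent quantum step and break the commuting square of Case 3. Everything else reduces to routine bookkeeping, because the effect of a classical rule on $\stone$ is entirely local to a single coordinate of the occurrence sequence and leaves $\qrone$ alone, while the effect of a quantum rule factors through the tensor decomposition of $\qrone$ along the coordinates it touches.
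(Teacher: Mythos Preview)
Your proposal is correct and follows the same approach as the paper, which simply states ``by simply inspecting the various rules; notice that there are no critical pairs in $\trans{\tdone}$''. You have in effect spelled out that inspection in full, including the one genuinely delicate point (that no classical transition is enabled on a token already sitting at an input occurrence of a $(\mathsf{a_U})$ axiom), which the paper leaves entirely implicit.
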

\begin{proof}
By simply inspecting the various rules. Notice that there are no critical pairs in 
$\trans{\tdone}$. \qed
\end{proof}

Suppose, for the sake of simplicity, that $\tdone$ is a type derivation of
$\jd{\emcon}{\termone:\typeone}$.
An \emph{initial state} for $\qrone$ is a state in the
form $(\noccs{\typeone}\cdot\bitocc{\tdone},\qrone\otimes\bitval{\tdone})$.
Given a permutation $\permone$ on $n$ elements, a \emph{final state for} $\qrone$ 
\emph{and} $\permone$ is one in the form $(\soccone,\qrone)$,
where $\soccone=\permone(\poccs{\typeone})$.

\condinc{
Given a type derivation $\tdone$, the partial function \emph{computed by $\tdone$} 
is $\pfun{\tdone}:\CC^{2^n}\rightharpoonup\CC^{2^m}$
(where $n$ and $m$ are the input and output arity of $\tdone$) and is
defined by stipulating that $\pfun{\tdone}(\qrone)=\permone^{-1}(\qrtwo)$ iff
any initial state for $\qrone$ rewrites into a final state for $\qrtwo$ and $\permone$.
}
{
\begin{definition}
Given a type derivation $\tdone$, the partial function \emph{computed by $\tdone$} 
is $\pfun{\tdone}:\CC^{2^n}\rightharpoonup\CC^{2^m}$
(where $n$ and $m$ are the input and output arity of $\tdone$) and is
defined by stipulating that $\pfun{\tdone}(\qrone)=\qrtwo$ iff
any initial state for $\qrone$ rewrites into a 
final state for $\qrthree$ and $\permone$, where $\qrthree=\permone^{-1}(\qrtwo)$. 
\end{definition}
}

Given a type derivation $\tdone$, $\pfun{\tdone}$ is either always undefined or
always defined. Indeed, the fact any initial configuration (for, say, $\qrone$)
rewrites to a final configuration or not does \emph{not} depend on $\qrone$ but only
on $\tdone$\condinc{.}{:
\begin{lemma}[Uniformity]
For every type derivation $\tdone$ and for every occurrences
$\occone_1,\ldots,\occone_n$, $\occtwo_1,\ldots,\occtwo_n$, there is a unitary
operator $\unopone$ such that
whenever $(\occone_1,\ldots,\occone_n,\qrone)\trans{\tdone}(\occtwo_1,\ldots,\occtwo_n,\qrtwo)$
it holds that $\qrtwo=\unopone(\qrone)$.
\end{lemma}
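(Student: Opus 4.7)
The plan is to prove the lemma by a straightforward case analysis on which transition rule (from Figures~\ref{fig:transone} and~\ref{fig:transtwo}) could possibly justify a step of the form $(\occone_1,\ldots,\occone_n,\qrone)\trans{\tdone}(\occtwo_1,\ldots,\occtwo_n,\qrtwo)$, showing that in each case the update applied to the quantum register is fully determined by the pair of occurrence sequences, and hence may be named once and for all as a unitary $\unopone$ depending only on $\tdone$, $\occone_1,\ldots,\occone_n$ and $\occtwo_1,\ldots,\occtwo_n$.

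First I would dispatch all the rules in Figure~\ref{fig:transone}, namely those coming from $(\mathsf{a_v})$, $(\mathsf{I}_\linmap^1)$, $(\mathsf{I}_\linmap^2)$, $(\mathsf{E}_\linmap)$ and $(\mathsf{I}_\otimes)$. For each of them the register component on the right-hand side is literally the same symbol $\qrone$ that appears on the left, so $\qrtwo=\qrone$ whenever such a rule fires. Consequently the identity operator $\idop$ on $\CC^{2^n}$ witnesses the statement for any source/target pair that is connected only by these rules.

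Next I would treat the single nontrivial case, the quantum rule of Figure~\ref{fig:transtwo}. Here the source occurrence sequence has a prescribed shape $(\soccone_1,\BB_{j_1},\soccone_2,\ldots,\soccone_{m},\BB_{j_m},\soccone_{m+1})$ and the target differs only by replacing each $\BB_{j_k}$ by $\BB_{j_k+m}$; the applied operator is $\unopone^{i_1,\ldots,i_m}$, where $i_k$ is the position of $\BB_{j_k}$ inside the source sequence. Thus, for a fixed pair $(\occone_1,\ldots,\occone_n)$, $(\occtwo_1,\ldots,\occtwo_n)$ that can arise from this rule, both the underlying unitary $\unopone$ (determined by the unique axiom instance in $\tdone$ introducing the $\BB_{j_k}$'s) and the indices $i_1,\ldots,i_m$ (read off from the source) are uniquely fixed, so $\unopone^{i_1,\ldots,i_m}$ is a well-defined unitary on $\CC^{2^n}$.

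The only remaining subtlety is to verify that no two distinct rule schemes can connect the very same pair of occurrence sequences with conflicting actions on the register. Inspecting the rules shows that the quantum rule is the unique one that shifts indices of the form $\BB_{j}$ to $\BB_{j+m}$, whereas each classical rule either permutes occurrences inside a context or swaps a context across an axiom/cut link without altering bit labels; hence a given source/target pair is compatible with at most one rule scheme. I expect this disjointness check to be the most tedious step, but it is purely syntactic. Combining the three points above, for every pair of sequences we either take $\unopone = \idop$ or $\unopone = \unopone^{i_1,\ldots,i_m}$, concluding the proof.
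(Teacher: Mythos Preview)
Your proposal is correct and follows essentially the same approach as the paper's own proof, which consists of just two observations: that for every pair of occurrence sequences at most one transition rule can apply, and that each rule acts uniformly on the quantum register. You have simply spelled out these two observations in more detail, making explicit that the classical rules act as the identity and the quantum rule as $\unopone^{i_1,\ldots,i_m}$, and calling the first observation a ``disjointness check''; the underlying argument is the same.
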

\begin{proof}
Observe that for every $\occone_1,\ldots,\occone_n$, $\occtwo_1,\ldots,\occtwo_n$ there
is \emph{at most} one of the rules defining $\trans{\tdone}$ which can be applied. Moreover,
notice that each rule acts uniformly on the underlying quantum register.
\end{proof}
}
In the following section, we will prove that $\pfun{\tdone}$ is always a total function, and
that it makes perfect sense from a quantum point of view.

\condinc{
{
\begin{example}\label{ex:runIAMQ} 
Consider the term $\termone_{\mathit{EPR}}$ and its type derivation $\tdone_{\mathit{EPR}}$ (see Example~\ref{ex:epr}).
\comment{Forgetting about terms and marking different occurrences of $\BB$ with distinct indices, we obtain:
$$
{\scriptsize
\urule
{
  \brule
  {\jd{\emcon}{}\BB_{9}\otimes\BB_{10}\linmap\BB_{11}\otimes\BB_{12}}
  {
    \brule
    {
      \brule
      {\jd{\emcon}{}:\BB_{21}\linmap\BB_{22}}
      {\jd{\BB_{23}}{\BB_{24}}}
      {\jd{\BB_{17}}{\BB_{18}}}
      {(\mathsf{E}_\linmap)}
    }
    {\jd{\BB_{19}}{\BB_{20}}}
    {\jd{\BB_{13},\BB_{14}}\BB_{15}\otimes\BB_{16}}
    {(\mathsf{I}_\otimes)}
  }
  {\jd{\BB_{5},\BB_{6}}{{}{}\BB_{7}\otimes\BB_{8}}}
  {(\mathsf{E}_\linmap)}
}
{
  \jd{\emcon}{}\BB_{1}\otimes\BB_{2}\linmap\BB_{3}\otimes\BB_{4}
}
{(\mathsf{I}_\linmap^2)}
}
$$
}
%%%%% fine commento precedente
Let us consider the following computation of $\autom{\tdone_{\mathit{EPR}}}$: 
\begin{align*}
(\BB_1,&\BB_2,\qrone)\trans{\tdone}^*(\BB_{5},\BB_{6},\qrone)\trans{\tdone}^*(\BB_{13},\BB_{14},\qrone)
     \trans{\tdone}(\BB_{17},\BB_{19},\qrone)\trans{\tdone}^*(\BB_{23},\BB_{20},\qrone)\\
  &\trans{\tdone}^*(\BB_{24},\BB_{16})\trans{\tdone}(\BB_{24},\BB_{10},\qrone)\trans{\tdone}(\BB_{21},\BB_{10},\qrone)
      \trans{\tdone}(\BB_{22},\BB_{10},\mathbf{H}^1(\qrone))\\
  &\trans{\tdone}(\BB_{18},\BB_{10},\mathbf{H}^1(\qrone))\trans{\tdone}(\BB_{15},\BB_{10},\mathbf{H}^1(\qrone))
      \trans{\tdone}(\BB_{9},\BB_{10},\mathbf{H}^1(\qrone))\\
  &\trans{\tdone}(\BB_{11},\BB_{12},{\mathbf{CNOT}}^{1,2}(\mathbf{H}^1(\qrone)))\trans{\tdone}^*(\BB_{7},\BB_{8},\mathbf{CNOT}^{1,2}(\mathbf{H}^1(\qrone)))\\
  &\trans{\tdone}(\BB_{3},\BB_{4},\mathbf{CNOT}^{1,2}(\mathbf{H}^1(\qrone))).
\end{align*}
\noindent
Notice that thge occurrence of $\mathit{CNOT}$ acts as a synchronization operator: the second token is stuck in the occurrence $\BB_{10}$ 
until the first token arrives as a control input of the $\mathit{CNOT}$ and the corresponding reduction step actually occurs.  

\end{example}
}
}
{}

%%%%%%%%%%%%%%%%%%%%%%%%%%%%%%%%%%%%%%%%%%%%
\section{Main Properties of $\IAM{\QL}$}\label{sect:mainres}
%%%%%%%%%%%%%%%%%%%%%%%%%%%%%%%%%%%%%%%%%%%%
In this section, we will prove some crucial results about $\IAM{\QL}$. More specifically,
we prove that runs of this token machine are indeed finite and end in final states.
We proceed by putting $\QL$ in correspondence to $\MLL$, thus inheriting the same kind
of very elegant and powerful results enjoyed by \MLL\ token machines.

%%%%%%%%%%%%%%%%%%%%%%%%%%%%%%%%%%%%%%%%%%%%%%%%%%%%%%%%%%%%
\subsection{A Correspondence Between $\MLL$ and $\QL$}\label{sec:mllql}
%%%%%%%%%%%%%%%%%%%%%%%%%%%%%%%%%%%%%%%%%%%%%%%%%%%%%%%%%%%%
Any type derivation $\tdone$ can be put in correspondence with \emph{some} \MLL\ proofs.
We inductively define the map $\mapone{\cdot}$ from $\QL$ types to \MLL\ formulas as follows:
\condinc{$\mapone{\BB}=\alpha$, $\mapone{A\linmap B}=\lneg{{\mapone{A}}}\lpar\mapone{B}$,
$\mapone{A\otimes B}=\mapone{A}\otimes\mapone{B}$.}{
\begin{align*}
  \mapone{\BB}&=\alpha;\\
  \mapone{A\linmap B}&=\lneg{{\mapone{A}}}\lpar\mapone{B};\\
  \mapone{A\otimes B}&=\mapone{A}\otimes\mapone{B}.
\end{align*}}
Given a judgment $\judgone=\jd{\conone}{\termone:\typeone}$ and a natural number
$n\in\NN$, the \MLL\ sequent \emph{corresponding} to $\judgone$ and $n$ is the following one:
$$
\vdash\underbrace{\lneg{\alpha},\ldots,\lneg{\alpha}}_{\mbox{$n$ times}},
\lneg{(\mapone{\typetwo_1})},\ldots,\lneg{(\mapone{\typetwo_m})},\mapone{\typeone},
$$
where $\conone= x_1:\typetwo_1,\ldots,x_m:\typetwo_m$. For every $\tdone$, 
we define now a set of \MLL\ proofs $\maptwo{\tdone}$. This way,
every type derivation $\tdone$ for $\judgone=\jd{\conone}{\termone:\typeone}$ such that $n$ bits occur
in $\termone$, is put in relation to possibly many \MLL\ proofs of the sequent corresponding to $\judgone$ and $n$.
One among them is called the \emph{canonical proof} for $\tdone$. The set $\maptwo{\tdone}$ and
canonical proofs are defined by induction on the structure of the underlying
type derivation $\tdone$\condinc{. The multiplicative constructions of $\QL$ are mapped to the corresponding
$\MLL$ constructs, rules $\mathsf{(a_{q0})}$ and $\mathsf{(a_{q1})}$ are mapped to axioms, and rule $(\mathsf{a_U})$ is
mapped to a proof encoding a permutation of the involved atoms. For more details, please refer to~\cite{EV}. When the
latter is the identity, we get the \emph{canonical proof} for $\tdone$.
}
{:
\begin{varitemize}
\item
  If $\tdone$ is the type derivation
  $$
  \urule{}{\jd{\emcon}{\typet{\ket{0}}{\BB}}}{\mathsf{(a_{q0})}},
  $$
  then the only proof $\pmllone$ in $\maptwo{\tdone}$ is an atomic axiom.
  Similarly if the only rule in $\tdone$ is $\mathsf{(a_{q1})}$.
  Please notice that $\tdone$ contains \emph{one} bit, and as a consequence
  $\pmllone$ has the correct conclusion.
\item
  If $\tdone$ is
  $$
  \urule{}{\jd{\emcon}{U:\BB^{n}\linmap\BB^{n}}}{(\mathsf{a_U})},
  $$
  then $\tdone$ is in correspondence to all of the $n!$ possible
  cut-free proofs of the sequent 
  $$
  \vdash(\underbrace{\lneg{(\alpha\otimes\ldots\otimes\alpha)}}_{\mbox{$n$ times}}\lpar
  \underbrace{(\alpha\otimes\ldots\otimes\alpha)}_{\mbox{$n$ times}}
  $$
  obtained by starting from $n$ instances of an atomic axiom,
  gluing them together by the rule $\otimes$, and finally choosing
  one of the $n!$ possible permutations before applying $n$ times
  rule $\lpar$. The canonical proof is the one corresponding to the identity
  permutation.
\item
  If $\tdone$ is the type derivation
  $$
  \urule{}{\jd{\typet{x}{A}}{\typet{x}{A}}}{\mathsf{(a_v)}}
  $$
  then the only proof corresponding to $\tdone$ is the following
  $$
  \urule{}{\vdash{\lneg{{\mapone{A}}},\mapone{A}}}{\mathbf{Ax}}
  $$
\item
  If $\tdone$ is
  $$
  \urule
  {\tdtwo\pof\jd{\conone,\varone:\typeone}{\typet{\termone}{\typetwo}}}
  {\jd{\conone}{\typet{\abstr{\varone}{\termone}}{\typeone\linmap\typetwo}}}
  {(\mathsf{I}_\linmap^1)}
  $$
  where $\conone= x_1:\typeone_1,\ldots,x_m:\typeone_m$. %with $x= x_i$ and $\typeone=\typeone_i$ for one $i\in[1,m]$,
  Then for all possible $\mll$ proof $\pmlltwo\in\maptwo{\tdtwo}$ of the $\mll$ sequent
  $$
  \smllone\;=\;\vdash\underbrace{\lneg{\alpha},\ldots,\lneg{\alpha}}_{\mbox{$n$ times}},
    \lneg{(\mapone{\typeone_1})},\ldots,\lneg{(\mapone{\typeone_m})},\lneg{(\mapone{\typeone})},\mapone{\typetwo}
  $$
  the following $\mll$ proof is in $\maptwo{\tdone}$:
  $$
  \urule
  {\pmlltwo\pof\smllone}
  {\vdash{\underbrace{\lneg{\alpha},\ldots,\lneg{\alpha}}_{\mbox{$n$ times}}},
    \lneg{(\mapone{\typeone_1})},\ldots,\lneg{(\mapone{\typeone_m})},\lneg{{\mapone{\typeone}}}\lpar{\mapone{\typetwo}}}
  {\mathbf{\lpar}}
  $$  
\item	
  If $\tdone$ is
  $$
  \urule
  {\tdtwo\pof{\jd{\conone,\varone:\typeone,\vartwo:\typetwo}{\termone:\typethree}}}
  {\jd{\conone}{\typet{\abstr{\langle\varone,\vartwo\rangle}{\termone}}
      {(\typeone\otimes\typetwo)\linmap\typethree}}}
  {(\mathsf{I}_\linmap^2)}
  $$
  where $\conone= z_1:\typefour_1,\ldots,z_m:\typefour_m, x:\typeone,y:\typetwo$, 
  then for all possible $\mll$ proofs $\pmlltwo\in\maptwo{\tdtwo}$ of the $\mll$ sequent
  $$
  \smllone\;=\;\vdash\underbrace{\lneg{{\alpha}},\ldots,\lneg{{\alpha}}}_{\mbox{$n$ times}},
    \lneg{{(\mapone{\typefour_1})}},\ldots,\lneg{{(\mapone{\typefour_m})}},
    \lneg{{\mapone{\typeone}}},\lneg{{\mapone{\typetwo}}},\mapone{\typethree}
  $$
  the following \MLL\ proof is in $\maptwo{\tdone}$:
  $$
  \urule{
    \urule
    {\pmlltwo\pof{\smllone}{}}
    {\vdash{\underbrace{\lneg{{\alpha}},\ldots,\lneg{{\alpha}}}_{\mbox{$n$ times}}},
      \lneg{{(\mapone{\typefour_1})}},\ldots,\lneg{{(\mapone{\typefour_m})}},(\lneg{{\mapone{\typeone}}}\lpar{\lneg{{\mapone{\typetwo}}}}),\mapone{\typethree}}
    {\mathbf{\lpar}}
  }
  {
  \vdash{\underbrace{\lneg{{\alpha}},\ldots,\lneg{{\alpha}}}_{\mbox{$n$ times}}},
    \lneg{{(\mapone{\typefour_1})}},\ldots,\lneg{{(\mapone{\typefour_m})}},(\lneg{{\mapone{\typeone}}}\lpar{\lneg{{\mapone{\typetwo}}}})\lpar\mapone{\typethree}
    }
  {\mathbf{\lpar}}
  $$   
\item
  If $\tdone$ is
  $$
  \brule
  {\tdtwo\pof{\jd{\conone}{\typet{M}{A\linmap B}}}{}}
  {\tdthree\pof{\jd{\contwo}{N: A}}{}}
  {\jd{\conone,\contwo}{MN:B}}
  {(\mathsf{E}_\linmap)}
  $$
  where $\conone= x_1:\typeone_1,\ldots,x_m:\typeone_m$  and  $\contwo= y_1:\typetwo_1,\ldots,y_k:\typetwo_k$
  then for all possible $\mll$ proofs $\pmllone\in\maptwo{\tdtwo}$ and $\pmlltwo\in\maptwo{\tdthree}$ of the $\mll$ sequents
  \begin{align*}
    \smlltwo\;=&\;
      \vdash\underbrace{\lneg{{\alpha}},\ldots,\lneg{{\alpha}}}_{\mbox{$n_1$ times}},
      \lneg{{(\mapone{\typeone_1})}},\ldots,\lneg{{(\mapone{\typeone_m})}},\lneg{{\mapone{\typeone}}}\lpar\mapone{B}\\
    \smllthree\;=&\;
      \vdash\underbrace{\lneg{{\alpha}},\ldots,\lneg{{\alpha}}}_{\mbox{$n_2$ times}},
      \lneg{{(\mapone{\typetwo_1})}},\ldots,\lneg{{(\mapone{\typetwo_k})}},\mapone{\typeone}
  \end{align*}
  the following $\mll$ proof is in $\maptwo{\tdone}$:
  $$
    \brule
    {
      \pmllone\pof\smlltwo
    }
    {
      \brule
      {\pmlltwo\pof\smllthree}
      {\urule{}{\vdash\lneg{{\mapone{\typetwo}}},\mapone{\typetwo}}{}}
      {
        \vdash\underbrace{\lneg{{\alpha}},\ldots,\lneg{{\alpha}}}_{\mbox{$n_2$ times}},
        \lneg{{(\mapone{\typetwo_1})}},\ldots,\lneg{{(\mapone{\typetwo_k})}},\mapone{\typeone}\otimes\lneg{{\mapone{\typetwo}}},\mapone{\typetwo}
      }
      {}
    }
    {
      \vdash\underbrace{\lneg{{\alpha}},\ldots,\lneg{{\alpha}}}_{\mbox{$n_1+n_2$ times}},\lneg{{(\mapone{\typeone_1})}},\ldots,\lneg{{(\mapone{\typeone_m})}},
      \lneg{{(\mapone{\typetwo_1})}},\ldots,\lneg{{(\mapone{\typetwo_k})}},\mapone{\typetwo}
    }
    {}
  $$
\item
  If $\tdone$ is
  $$
  \brule
  {\tdtwo\pof{\jd{\conone}{\typet{\termone}{\typeone}}}{}}
  {\tdthree\pof{\jd{\contwo}{\typet{\termtwo}{\typetwo}}}{}}
  {\jd{\conone,\contwo}{\typet{\termone\otimes\termtwo}{\typeone\otimes\typetwo}}}
  {(\mathsf{I}_\otimes)}
  $$
  where $\conone= x_1:\typeone_1,\ldots,x_m:\typeone_m$ and $\contwo= y_1:\typetwo_1,\ldots,y_k:\typetwo_k$,
  then for all possible $\mll$ proofs $\pmllone\in\maptwo{\tdtwo}$ and $\pmlltwo\in\maptwo{\tdthree}$ of the $\mll$ sequents
  \begin{align*}
    \smlltwo\;=&\;
      \vdash\underbrace{\lneg{{\alpha}},\ldots,\lneg{{\alpha}}}_{\mbox{$n_1$ times}},
      \lneg{{(\mapone{\typeone_1})}},\ldots,\lneg{{(\mapone{\typeone_m})}},\mapone{\typeone}\\
    \smllthree\;=&\;
      \vdash\underbrace{\lneg{{\alpha}},\ldots,\lneg{{\alpha}}}_{\mbox{$n_2$ times}},
      \lneg{{(\mapone{\typetwo_1})}},\ldots,\lneg{{(\mapone{\typetwo_k})}},\mapone{\typetwo}
  \end{align*}
  $\tdone$ is in correspondence to the $\mll$ proof
  $$
  \brule
  {\pmllone_1\pof{\smllone_1}{}}
  {\pmllone_2\pof{\smllone_2}{}}
  {\vdash{\underbrace{\lneg{{\alpha}},\ldots,\lneg{{\alpha}}}_{\mbox{$n_1+n_2$ times}},
      \lneg{{(\mapone{\typeone_1})}},\ldots,\lneg{{(\mapone{\typeone_m})}},
      \lneg{{(\mapone{\typetwo_1})}},\ldots,\lneg{{(\mapone{\typetwo_k})}},
      \lneg{{\mapone{\typeone}}}\otimes\lneg{{\mapone{\typetwo}}}}}
  {\mathbf{\otimes}}
  $$
\end{varitemize}
Observe how $\maptwo{\tdone}$ is a singleton whenever $\tdone$ does not contain any unitary operator of arity
(strictly) greater than $1$.}

Given an $\mll$ proof $\pmllone$, let us denote as $\statesmll{\pmllone}$ the class of all finite sequences of
atom occurrences in $\pmllone$.
The relation $\transmll{\pmllone}$ can be extended to a relation on $\statesmll{\pmllone}$ by stipulating that
\condinc{$}{$$}
(\occone_1,\ldots,\occone_{n-1},\occtwo,\occone_{n+1},\ldots,\occone_{m})
\transmll{\pmllone}
(\occone_1,\ldots,\occone_{n-1},\occthree,\occone_{n+1},\ldots,\occone_{m})
\condinc{$}{$$}
whenever $\occtwo\transmll{\pmllone}\occthree$. As usual, $\transmll{\pmllone}^+$ is the transitive closure
of $\transmll{\pmllone}$.

Let us now consider a type derivation $\tdone$ in $\QL$ and its quantum 
token machine $\autom{\tdone}$ and any $\pmllone\in\maptwo{\tdone}$. 
States of $\autom{\tdone}$ can be mapped to $\statesmll{\pmllone}$ by simply
forgetting the underlying quantum register and mapping any occurrence of
$\tdone$ to the corresponding atom occurrence in $\pmllone$. This way one gets a map
$\mapthree{\tdone}{\pmllone}{\cdot}:\states{\tdone}\rightarrow \statesmll{\pmllone}$
such that, given a state $\stgen=(\occone_1,\ldots,\occone_n,\qrone)$ in $\states{\tdone}$, 
$|\mapthree{\tdone}{\pmllone}{\stgen}|=n$, number of occurrences in $\stgen$ 
is the same as the length of $\mapthree{\tdone}{\pmllone}{\stgen}$.
Each reduction step on the token machine $\autom{\tdone}$ corresponds to \emph{at least one} 
reduction step in the $\mll$ machine $\mmllone{\pmllone}$, where $\pmllone\in\maptwo{\tdone}$ 
is the canonical proof:
\begin{lemma}\label{lemma:corresponding}
  Let us consider a token machine $\autom{\tdone}$  and two states $\stone,\sttwo\in\states{\tdone}$. If $\stone\trans{\tdone}\sttwo$
  and $\pmllone\in\maptwo{\tdone}$ is canonical, then $\mapthree{\tdone}{\pmllone}{\stone}\transmll{\pmllone}^+\mapthree{\tdone}{\pmllone}{\sttwo}$.
\end{lemma}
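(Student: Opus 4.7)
The plan is to proceed by case analysis on which rule from Figures~\ref{fig:transone} and~\ref{fig:transtwo} justifies $\stone\trans{\tdone}\sttwo$, exhibiting for each case a nonempty finite sequence of transitions in $\mmllone{\pmllone}$ that witnesses $\mapthree{\tdone}{\pmllone}{\stone}\transmll{\pmllone}^+\mapthree{\tdone}{\pmllone}{\sttwo}$. Since $\transmll{\pmllone}$ on sequences of occurrences is just the pointwise extension of the single-atom relation from Section~\ref{LLTM}, and since $\mapthree{\tdone}{\pmllone}{\cdot}$ forgets the underlying register and acts componentwise on occurrences, each $\QL$ move is simulated by moves acting on (at most) the tokens it alters, leaving the surrounding occurrences fixed.

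For the classical rules in Figure~\ref{fig:transone}, the correspondence is essentially a direct unfolding of the translation $\mapone{\cdot}$. The most informative case is $\linmap$-elimination: since application is encoded in $\maptwo{\cdot}$ by a cut linking the function proof's conclusion $\lneg{\mapone{A}}\lpar\mapone{B}$ to the tensor of the argument proof with an auxiliary axiom $\vdash\lneg{\mapone{A}},\mapone{A}$, each one-step move in $\trans{\tdone}$ (entering or leaving the function, argument, or context side) is simulated by a bounded number of $\transmll{\pmllone}$ transitions that traverse the axiom link and/or the cut. The remaining classical rules ($\linmap$-introduction, $\otimes$-introduction, and the variable axiom, which expands to atomic axioms grouped via $\otimes$ and $\lpar$) yield similar local simulations, since $\mapone{A\linmap B}=\lneg{\mapone{A}}\lpar\mapone{B}$ and $\mapone{A\otimes B}=\mapone{A}\otimes\mapone{B}$ make the navigation rules of $\mmllone{\pmllone}$ directly parallel to those of $\autom{\tdone}$, modulo traversal of the compound types.

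The interesting case is the quantum rule of Figure~\ref{fig:transtwo}. A single $\trans{\tdone}$ step transports in parallel all $m$ tokens from the negative atoms $\BB_{j_k}$ to the positive atoms $\BB_{j_k+m}$ of the type of $U$, while applying $\unopone^{i_1,\ldots,i_m}$ to the register. Crucially, the \emph{canonical} proof in $\maptwo{\tdone}$ associated to an instance of $(\mathsf{a_U})$ encodes $U$ by the MLL proof of $\vdash\lneg{(\alpha\otimes\cdots\otimes\alpha)}\lpar(\alpha\otimes\cdots\otimes\alpha)$ built from $m$ atomic axioms under the \emph{identity} permutation, so the $k$-th negative atom is linked to the $k$-th positive atom. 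I would simulate the single $\QL$ step by processing the $m$ tokens sequentially in $\mmllone{\pmllone}$: each token crosses the enclosing $\lpar$ and $\otimes$ connectives, traverses its axiom link, and emerges at the expected positive occurrence, producing an overall sequence of at least $m$ MLL transitions. The register update is invisible to $\mapthree{\tdone}{\pmllone}{\cdot}$ and so plays no role at the MLL level.

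The main obstacle will be the bookkeeping. I must check (i) that canonicity (i.e., the identity permutation) guarantees the $k$-th token lands exactly at $\BB_{j_k+m}$ rather than at some permuted position, and (ii) that the surrounding occurrences $\soccone_i$ of the state are preserved throughout the $m$ sequential MLL steps, which amounts to observing that the rules of $\transmll{\pmllone}$ act on disjoint positions in the state sequence. Both points reduce to unfolding the definitions of $\maptwo{\cdot}$ and $\mapthree{\tdone}{\pmllone}{\cdot}$, and yield the required chain of $\transmll{\pmllone}$-transitions in each case.
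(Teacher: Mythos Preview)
Your proposal is correct and is essentially a detailed unfolding of the paper's own proof, which consists of the single sentence ``This goes by induction on the structure of $\tdone$.'' Your case analysis on the transition rule is exactly what each case of that induction amounts to; one small inaccuracy is that the translation of the variable axiom $(\mathsf{a_v})$ in $\maptwo{\cdot}$ is a \emph{single} (possibly non-atomic) \MLL\ axiom $\vdash\lneg{\mapone{A}},\mapone{A}$ rather than an $\eta$-expansion into atomic axioms, but this only simplifies the simulation and does not affect your argument.
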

\begin{proof}
This goes by induction on the structure of $\tdone$.
\end{proof}
Any (possible) pathological situation on the quantum token machine, then, can be brought back to a corresponding 
(absurd) pathological situation in the $\MLL$ token machine. This is the principle that will guide us in the
rest of this section.
%%%%%%%%%%%%%%%%%%%%%%%%%
\condinc{\subsection{Termination, Progress, and Soundness}}{\subsection{Termination}}{}
%%%%%%%%%%%%%%%%%%%%%%%%%
The first property we want to be sure about is that every computation of any token machine 
$\autom{\tdone}$ always terminates. This is relatively simple to state and prove:
\begin{proposition}[Termination]~\label{prop:term} 
  Given a quantum token machine $\autom{\tdone}$, 
  any sequence $\stone\trans{\tdone}\sttwo\trans{\tdone}\ldots$ is finite.
\end{proposition}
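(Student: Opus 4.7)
The plan is to reduce termination of the quantum token machine to termination of the $\mll$ token machine via Lemma~\ref{lemma:corresponding}, so that the well-understood acyclicity of $\transmll{\pmllone}$ does all the work.

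First, I would fix the canonical $\mll$ proof $\pmllone\in\maptwo{\tdone}$ and recall from Section~\ref{LLTM} that $\transmll{\pmllone}$ on single atom occurrences is acyclic: starting from any initial occurrence one reaches a final one in finitely many steps, and along any run no occurrence is repeated (this was established by the inductive argument over cut-free proofs, lifted through cut-elimination). Since $\statesmll{\pmllone}$ is finite, there is a well-defined function $f:\statesmll{\pmllone}\to\NN$ assigning to each occurrence the length of the longest $\transmll{\pmllone}$-path starting from it.

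Next, I would extend this measure to sequences. Recall that the lifted relation $\transmll{\pmllone}$ on $\statesmll{\pmllone}$ modifies a single component $\occtwo\transmll{\pmllone}\occthree$ at a time. Hence the measure $F(\occone_1,\ldots,\occone_m)=\sum_{i=1}^{m}f(\occone_i)$ strictly decreases at every step of the sequence-level $\transmll{\pmllone}$, because $f(\occthree)<f(\occtwo)$ whenever $\occtwo\transmll{\pmllone}\occthree$. Consequently the sequence-level relation is terminating, and so is its transitive closure $\transmll{\pmllone}^+$.

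Finally, suppose toward a contradiction that some infinite chain $\stone_0\trans{\tdone}\stone_1\trans{\tdone}\stone_2\trans{\tdone}\cdots$ exists in $\autom{\tdone}$. Applying $\mapthree{\tdone}{\pmllone}{\cdot}$ and invoking Lemma~\ref{lemma:corresponding} at each step yields an infinite chain
\[
\mapthree{\tdone}{\pmllone}{\stone_0}\transmll{\pmllone}^+\mapthree{\tdone}{\pmllone}{\stone_1}\transmll{\pmllone}^+\mapthree{\tdone}{\pmllone}{\stone_2}\transmll{\pmllone}^+\cdots
\]
in $\mmllone{\pmllone}$, contradicting termination of $\transmll{\pmllone}^+$ on sequences. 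The only delicate point worth double-checking is that Lemma~\ref{lemma:corresponding} indeed produces at least one $\mll$ step per quantum step (not zero), so the induced chain is genuinely infinite; this is the part I expect to be the main obstacle, but it is precisely what the "$+$" in $\transmll{\pmllone}^+$ in the statement of that lemma guarantees. This completes the proof. \qed
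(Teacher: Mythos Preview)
Your proof is correct and follows exactly the paper's approach: reduce an alleged infinite run of $\autom{\tdone}$ via Lemma~\ref{lemma:corresponding} to an infinite run of the sequence-level $\mmllone{\pmllone}$, which is impossible. You have simply made explicit the step the paper leaves to the reader, namely why the sequence-level $\transmll{\pmllone}$ terminates, by building a strictly decreasing measure from the acyclicity of the single-occurrence relation established in Section~\ref{LLTM}; one small slip is that your $f$ should have domain the set of atom occurrences of $\pmllone$, not $\statesmll{\pmllone}$ (which the paper defines as the set of finite \emph{sequences} of such occurrences).
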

\begin{proof}
  Suppose, for the sake of contradiction, than there exists an infinite computation in $\autom{\tdone}$. This implies by
  Lemma~\ref{lemma:corresponding} that there exists an infinite path in the token machine $\mmllone{\pmllone}$ where $\pmllone$ 
  is the canonical $\mll$ proof for $\tdone$. Absurd.\qed  
\end{proof}
%%%%%%%%%%%%%%%%%%%%%%
\condinc{}{\subsection{Progress}}
%%%%%%%%%%%%%%%%%%%%%%
Progress (i.e. deadlock-freedom) is more difficult to prove than termination. Again, however, we use in an
essential way the correspondence between $\QL$ and $\MLL$:
\begin{proposition}[Progress]\label{prop:progress}
Suppose $\tdone$ is a type derivation in $\QL$ and $\stone\in\states{\tdone}$ is initial. Moreover, suppose
that $\stone\trans{\tdone}^*\sttwo$. Then either $\sttwo$ is final or $\sttwo\trans{\tdone}\stthree$ for some
$\stthree\in\states{\tdone}$.
\end{proposition}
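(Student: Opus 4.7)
The plan is to argue by contradiction, combining a local analysis of the rules in Figures~\ref{fig:transone} and~\ref{fig:transtwo} with the \MLL\ combinatorics set up in Section~\ref{sec:mllql}. Suppose that $\sttwo$ is not final yet admits no successor. Direct inspection of Figure~\ref{fig:transone} shows that each classical rule fires as soon as a \emph{single} token sits at its source occurrence, without any form of synchronization. Hence the only $\BB$-occurrences at which an individual token can fail to step are the positive ground occurrences in the conclusion of $\tdone$, i.e.\ the final positions, and the $m$ input positions of some arity-$m$ unitary occurrence $U$ in $\tdone$. In $\sttwo$, every non-final token therefore sits at such an input and, since the quantum rule of Figure~\ref{fig:transtwo} also fails, every unitary occurrence that hosts at least one token hosts strictly fewer than its full complement.

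Next I would import the structural properties of \MLL. Fix the canonical proof $\pmllone\in\maptwo{\tdone}$. By the partition property recalled in Section~\ref{LLTM}, each atom occurrence of $\pmllone$ lies on exactly one maximal $\transmll{\pmllone}$-path emanating from an initial occurrence, and $\transmll{\pmllone}$ is acyclic. Each arity-$m$ unitary occurrence $U$ of $\tdone$ translates, under $\maptwo{\cdot}$, into $m$ axiom links of $\pmllone$, and exactly $m$ distinct maximal paths cross these links, one apiece; I call their associated tokens in $\autom{\tdone}$ the \emph{participants} of $U$. Declare $U_1\prec U_2$ whenever some participant of $U_2$ crosses an axiom link of $U_1$ strictly before any axiom link of $U_2$; acyclicity of $\transmll{\pmllone}$ promotes $\prec$ to a strict partial order on the finite set of unitary occurrences of $\tdone$.

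Let $F$ be the set of unitary occurrences that have not yet fired in $\sttwo$ and already host at least one of their participants at one of their input positions; by the first paragraph, $F\neq\emptyset$, so I pick a $\prec$-minimal $U\in F$. Since a participant crosses a $U$-axiom only via the quantum rule for $U$, and $U$ has not fired, none of $U$'s $m$ participants lies past $U$'s axioms. If all $m$ sit at $U$'s inputs, the quantum rule for $U$ fires, contradicting deadlock. Otherwise, some participant $t$ of $U$ is strictly before $U$'s axiom on its MLL path; by the first paragraph $t$ must itself be parked at an input of some unitary $U''$, and $U''$ cannot have fired (else $t$ would already be past it), so $U''\in F$. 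But the path of $t$ now crosses an axiom link of $U''$ strictly before an axiom link of $U$, whence $U''\prec U$, contradicting $\prec$-minimality of $U$ in $F$.

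The main obstacle I anticipate is the bookkeeping in the first paragraph: a rule-by-rule verification that the classical rules of Figure~\ref{fig:transone} cover every source of movement except for unitary axioms, so that a token parked at any non-final, non-unitary-input $\BB$-occurrence can always advance. Once this inventory is complete, and the canonical proof $\pmllone$ is in hand to supply the partition into participant paths, the minimality argument driven by \MLL-acyclicity closes the proof essentially without further computation.
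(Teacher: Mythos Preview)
Your overall architecture mirrors the paper's: identify that stuck tokens must sit at argument occurrences of unitary constants, build a dependency relation among the blocked unitaries, and extract a contradiction. The gap is in your claim that ``acyclicity of $\transmll{\pmllone}$ promotes $\prec$ to a strict partial order.'' Acyclicity of $\transmll{\pmllone}$ says only that no \emph{single} maximal path revisits an atom occurrence; it says nothing about the order in which two \emph{different} paths traverse two different unitary axiom groups. Concretely, nothing you have written rules out one path crossing an axiom of $U_1$ and then an axiom of $U_2$ while a distinct path does the reverse, giving $U_1\prec U_2$ and $U_2\prec U_1$ simultaneously. In that situation your set $F$ has no $\prec$-minimal element and the argument stalls.

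Closing this gap is precisely where the paper's proof does its real work. Rather than asserting that the dependency relation is acyclic, the paper first shows (as you effectively do) that a deadlock forces a cycle among the blocked unitaries, and then converts that cycle into an actual cyclic path in the token machine $\mmllone{\pmlltwo}$ for some generally \emph{non}-canonical $\pmlltwo\in\maptwo{\tdone}$: for each unitary on the cycle one chooses the permutation of its axiom links that connects the incoming edge of the cycle to the outgoing one. Only then does \MLL\ acyclicity bite. By committing to the canonical $\pmllone$ alone, you never access this degree of freedom, and the passage from ``deadlock'' to ``contradiction with \MLL'' is not actually made. To repair your argument you would need either an independent proof that $\prec$ is acyclic (for instance by induction on $\tdone$), or to import the permutation trick---at which point you are essentially reproducing the paper's proof.
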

\condinc{The proof of Proposition~\ref{prop:progress} can be found in~\cite{EV}.}{
%Therefore, starting from an initial state $\stone$, we can either reach a final state or go on forever.
%The latter situation, however, is not possible due to Proposition~\ref{prop:term}.

Given a type derivation $\tdone$, an \emph{argument occurrence} is any negative occurrence $(\typeone,\nconone)$ of $\BB$
in a $(\mathsf{a_U})$ axiom. We extend this definition to the corresponding atom occurrence when $\pmllone\in\maptwo{\tdone}$. 
A \emph{result occurrence} is defined similarly, but the occurrence has to be positive.
\begin{proof}%\condinc{[of Proposition~\ref{prop:progress}, Sketch]}{}
Let us consider a computation $\stgen_1\trans{\tdone}\ldots\trans{\tdone}\stgen_k$ on a 
quantum token machine $\autom{\tdone}$. Suppose that the state $\stgen_k$ is a deadlocked state, i.e. $\stgen_k$ 
is not a final state, and that there exists no $\stgen_{m}$ such that $\stgen_k\trans{\tdone}\stgen_m$.
%Suppose that the state $\stgen_k$ is made of $l$ different occurrences. 
The fact $\stgen_k$ is a deadlocked state means that $l\geq 1$ occurrences in $\stgen_k$ are argument occurrences, 
since the latter are the only points of synchronization of the machine. 
Let us consider any \emph{maximal} sequence 
\begin{equation}\label{equ:maxseq}
  \mapthree{\tdone}{\pmllone}{\stgen_1}\transmll{\pmllone}\ldots\transmll{\pmllone}\mapthree{\tdone}{\pmllone}{\stgen_k}
    \transmll{\pmllone}\mathsf{Q_1}\transmll{\pmllone}\ldots\transmll{\pmllone}\mathsf{Q}_n,
\end{equation}
where $\pmllone\in\maptwo{\tdone}$ is the canonical proof corresponding to $\tdone$.
Observe that in (\ref{equ:maxseq}), all occurrences of atoms in $\pmllone$ are visited exactly once, including those corresponding to argument and result
occurrences from $\tdone$. Notice, however, that the argument and result occurrences of the unitary operators affected by $\stgen_k$ cannot have been
visited along the subsequence $\mapthree{\tdone}{\pmllone}{\stgen_1}\transmll{\pmllone}\ldots\transmll{\pmllone}\mapthree{\tdone}{\pmllone}{\stgen_k}$
(otherwise we would visit the occurrences in $\stgen_k$ at least twice, which is not possible).
Now, form a directed graph whose nodes are the unitary constants $U_1,\ldots,U_h$ which block $\stgen_k$, plus a node $F$ (representing the conclusion of $\tdone$), 
and whose edges are defined as follows:
\begin{varitemize}
\item
  there is an edge from $U_i$ to $U_j$ iff along $\mathsf{Q_1}\transmll{\pmllone}\ldots\transmll{\pmllone}\mathsf{Q}_n$ one of the $l$ independent
  computations corresponding to a blocked occurrence in $\stgen_{k}$ is such that a result occurrence of $U_i$ is followed by an argument occurrence of 
  $U_j$ and the occurrences between them are neither argument nor result
  occurrences.
\item
  there is an edge from $U_i$ to $F$ iff along $\mathsf{Q_1}\transmll{\pmllone}\ldots\transmll{\pmllone}\mathsf{Q}_n$ one of the $l$ traces is such that
  a result occurrence of $U_i$ is followed by a final occurrence of an atom and the occurrences between them are neither argument nor result
  occurrences.
\end{varitemize}
The thus obtained graph has the following properties:
\begin{varitemize}
\item
  Every node $U_i$ has at least one incoming edge, because otherwise the configuration $\stgen_k$ would not be deadlocked.
\item
  As a consequence, the graph must be cyclic, because otherwise we could topologically sort it and get a node with no incoming edges (meaning
  that some of the $U_i$ would not be blocked!). Moreover, the cycle does not include $F$, because the latter only has incoming nodes.
\end{varitemize}
From any cycle involving the $U_j$, one can induce the presence of a cycle in the token machine $\mmllone{\pmlltwo}$ for some 
$\pmlltwo\in\maptwo{\tdone}$. Indeed, such a $\pmlltwo$ can be formed by simply choosing, for each $U_j$, the ``good'' permutation, 
namely the one linking the incoming edge and the outgoing edge which are part of the cycle. This way, we have reached the absurd starting from the 
existence of a deadlocked computation.
\qed
\end{proof}
The token machine $\autom{\tdone}$ can be built by following the structure of $\tdone$. However, the fact this gives rise to a well-behaved, unitary, function
requires proving some properties of $\autom{\tdone}$ (i.e. termination and progress) externally. One may wonder whether this could be avoided by 
taking a categorical approach and apply the so-called $\mathbf{Int}$-Construction~\cite{Joyal96} to the underlying category. This is not going to work, however, because
finite dimensional Hilbert spaces and unitary maps on them are not a \emph{traced} category. Of course, one could switch to linear maps, which indeed
turn Hilbert spaces into a traced category; one loses the strong link with quantum computation this way, however.
%%%%%%%%%%%%%%%%%%%%%%%
\subsection{Discussion}\label{sec:discussion}
%%%%%%%%%%%%%%%%%%%%%%%
}
\condinc{
The immediate consequence of the termination and progress results is that $\pfun{\tdone}$ is always a \emph{total}
function. The way $\autom{\tdone}$ is defined ensures that $\pfun{\tdone}$ is obtained by feeding some of the inputs of
a unitary operator $\unopone$ with some bits (namely those occurring in $\tdone$). $\unopone$ is itself obtained by composing the unitary operators
occurring in $\tdone$, which can thus be seen as a program computing a quantum circuit. In a way, then, token machines both show that $\QL$ is a truely quantum
calculus and can be seen as the right operational semantics for it.

But what is the relation between token machines and the equational theory on superposed type derivations introduced in Section~\ref{sect:equtheo}? It is easy
to extend the definition of $\pfun{\cdot}$ to superposed type derivations: if $\sutone=\sum_{i=1}^n\kappa_i\tdone_i$ then
$\pfun{\sutone}$ when fed with a vector $\vecone$ returns $\sum_{i=1}^n\kappa_i\pfun{\tdone_i}(\vecone)$. Remarkably, token machines behave
in accordance to the equational theory:
\begin{proposition}[Soundness]
If $\sutone\eqterm\suttwo$, then $\pfun{\sutone}=\pfun{\suttwo}$.
\end{proposition}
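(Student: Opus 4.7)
The plan is to prove the statement by induction on the derivation of $\sutone \eqterm \suttwo$. By the normal form result (Proposition~\ref{prop:normalform}) we may assume the derivation is in normal form, so every branch consists of an axiom instance, then a sequence of context-closure rules, then possibly $\mathsf{sum}$, $\mathsf{sym}$, and $\mathsf{trans}$. This lets us handle the three axioms once and for all and then reduce the remaining cases to compositionality and standard equivalence-closure reasoning.

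The context-closure cases amount to showing that $\pfun{\cdot}$ is compositional in the term constructors. Concretely, I would prove a lemma asserting that for any superposed derivations $\sutone, \suttwo$ and any suitable derivation $\tdone$, one has $\pfun{\sutone\tdone}$, $\pfun{\tdone\sutone}$, $\pfun{\abstr{\varone}{\sutone}}$, $\pfun{\abstr{\pair{\varone}{\vartwo}}{\sutone}}$, $\pfun{\sutone \otimes \tdone}$, and $\pfun{\tdone \otimes \sutone}$ depend on $\sutone$ only through $\pfun{\sutone}$ (and similarly for $\sutone + \sutthree$). This is essentially a read-off from the definitions of the transition rules in Figures~\ref{fig:transone} and~\ref{fig:transtwo}: the rules for each connective simply route tokens across the boundary between the enclosing context and the inner derivation, without inspecting the latter's shape beyond its occurrences. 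The equivalence-closure rules $\mathsf{refl}, \mathsf{sym}, \mathsf{trans}$ are immediate since equality of functions is itself an equivalence relation.

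The axiom cases are the heart of the matter. For $\mathsf{quant}$, the transition rule for $U$ in Figure~\ref{fig:transtwo} directly applies $\unopone^{i_1,\ldots,i_m}$ to the quantum register; on the other hand, $\reduct{\tdone}$ for $\tdone \pof \jd{\emcon}{U\ket{b_1\ldots b_k} : \BB^k}$ is by definition the superposition $\sum_x \kappa_x \tdone_x$ whose coefficients are precisely the entries of $U\ket{b_1\ldots b_k}$. So both sides compute exactly the map $\qrone \mapsto U(\qrone \otimes \ket{b_1\ldots b_k})$ up to the permutation bookkeeping on positive occurrences, and the equality of partial functions follows. For $\mathsf{beta}$ and $\mathsf{beta.pair}$, the goal is to show that the substitution step described in the constructive Substitution Lemma~\ref{lemma:substlemma} preserves the function computed. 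The plan is to transport the problem to \MLL{} via the map $\maptwo{\cdot}$ from Section~\ref{sec:mllql}: firing a $\QL$ beta redex corresponds to introducing a cut in (any) canonical proof in $\maptwo{\tdone}$ and then performing a multiplicative cut-elimination step on it. Since the \MLL{} token machine is invariant under cut-elimination (the correspondence between initial and final occurrences is preserved), and by Lemma~\ref{lemma:corresponding} the quantum machine $\autom{\tdone}$ simulates the \MLL{} machine while acting uniformly on the quantum register, the routes travelled by tokens before and after the beta step induce the same unitary action on $\CC^{2^n}$.

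The main obstacle is this last step: carefully matching a $\QL$ beta reduction with an \MLL{} cut-elimination step, and checking that the induced route correspondence transports unchanged to the quantum extension. The subtlety is that a single beta step in $\QL$ can correspond to several cut-elimination steps in the canonical \MLL{} proof (one per occurrence of $\BB$ inside the substituted variable's type), and the tokens may traverse these new segments in several orders; one must verify that the unitary composition obtained on each run of $\autom{\tdone}$ is the same as the one obtained on $\autom{\reduct{\tdone}}$, for every choice of initial occurrence. Once this routing-preservation lemma is established, the inductive cases fall into place and the proposition follows.
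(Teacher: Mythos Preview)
Your overall structure---normal form, then handle axioms, context-closure, and equivalence-closure separately---matches the paper's. The genuine divergence is in how the hard cases are discharged.

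For $\mathsf{beta}$ and $\mathsf{beta.pair}$, the paper does \emph{not} go through \MLL{} cut-elimination. Instead it introduces an auxiliary machine $\automc{\tdone}{\tdtwo_1,\ldots,\tdtwo_m}$ obtained by wiring the machines $\autom{\tdone},\autom{\tdtwo_1},\ldots,\autom{\tdtwo_m}$ together along the occurrences of the substituted variables, and proves a substitution lemma: this composite is equivalent to $\autom{\subst{\tdone}{\varone_1,\ldots,\varone_m}{\tdtwo_1,\ldots,\tdtwo_m}}$. From this, Lemmas stating $\qcirc{\tdone}=\qcirc{\reduct{\tdone}}$ for both beta redexes follow directly. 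Your route through \MLL{} cut-elimination invariance could in principle be made to work, but the obstacle you identify is real: \MLL{} invariance only gives identity of \emph{routes}, and you still owe the argument that the sequence of unitary gates encountered along those routes (and their synchronisation order) is preserved. The paper sidesteps this by staying at the level of the quantum automaton throughout.

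For the context-closure cases, the paper again takes a slightly different tack: rather than a compositionality lemma for $\pfun{\cdot}$, it introduces \emph{superposed quantum circuits} $\qcirc{\sutone}$ and a small set of \emph{admissible transformations} on them, shows these preserve the computed function, and then argues that each rule in $\ccrls$ (applied to a single derivation, as guaranteed by the normal form) only modifies the ``outer'' part $D$ of the circuit while leaving the inner gate $U$ and its fed bits untouched. Your compositionality lemma is morally the same observation, but the paper's circuit-level phrasing lets the $\mathsf{quant}$ axiom and the $\ccrls$ cases be handled uniformly via one admissible transformation of the second kind.
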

\vspace{-2ex}
\begin{proof}
This is an induction on the structure of a proof of $\sutone\eqterm\suttwo$. The base cases $\mathsf{beta}$ and $\mathsf{beta.pair}$ require
appropriate substitution lemmas.
\end{proof}

%The necessity of going towards wave-style token machines and the presence of entanglement are intimately related. Since we want
%each particle to correspond to \emph{one} qubit, it is not possible to study each particle independently. 
%precisely because
%the tokens can be entangled the global state of the automaton cannot in general be expressed as the tensor product of single qubits.
}
{
\newcommand{\qcirc}[1]{\langle #1\rangle}
The immediate consequence of the termination and progress results from Section~\ref{sect:mainres} is that $\pfun{\tdone}$ is always a \emph{total}
function. The way $\autom{\tdone}$ is defined ensures that $\pfun{\tdone}$ is obtained by feeding some of the input of
a unitary operator $\unopone$ with some bits (namely those occurring in $\tdone$). $\unopone$ is itself obtained by composing the unitary operators
occurring in $\tdone$, which can thus be seen as a program computing a quantum circuit, which we call $\qcirc{\tdone}$. Of course, $\pfun{\tdone}$
is nothing more than the function computed by  $\qcirc{\tdone}$. In a way, then, token machines both show that $\QL$ is a true quantum calculus 
and can be seen as the right operational semantics for it.
\begin{example}%[A computation on the $\IAM{\QL}$ token machine]\mbox{}
Consider the term 
$
\termone_{\mathit{EPR}}=\abstr{\pair{\varone}{\vartwo}}{\mathit{CNOT}(\pairtens{\mathit{H}\varone}{y})}
$
and a type derivation $\tdone$ for it:
$$
\urule
{
  \brule
  {\jd{\emcon}{\mathit{CNOT}}:\BB\otimes\BB\linmap\BB\otimes\BB}
  {
    \brule
    {
      \brule
      {\jd{\emcon}{\mathit{H}}:\BB\linmap\BB}
      {\jd{\varone:\BB}{\varone:\BB}}
      {\jd{\varone:\BB}{\mathit{H}\varone:\BB}}
      {(\mathsf{E}_\linmap)}
    }
    {\jd{\vartwo:\BB}{\vartwo:\BB}}
    {\jd{\varone:\BB,\vartwo:\BB}{\pairtens{\mathit{H}\varone}{\vartwo}}:\BB\otimes\BB}
    {(\mathsf{I}_\otimes)}
  }
  {\jd{\varone:\BB,\vartwo:\BB}{\mathit{CNOT}(\pairtens{\mathit{H}\varone}{\vartwo}):\BB\otimes\BB}}
  {(\mathsf{E}_\linmap)}
}
{
  \jd{\emcon}{\termone_{\mathit{EPR}}}:\BB\otimes\BB\linmap\BB\otimes\BB
}
{(\mathsf{I}_\linmap^2)}
$$
Forgetting about terms and marking different occurrences of $\BB$ with distinct indices, we obtain:
$$
\urule
{
  \brule
  {\jd{\emcon}{}\BB_{9}\otimes\BB_{10}\linmap\BB_{11}\otimes\BB_{12}}
  {
    \brule
    {
      \brule
      {\jd{\emcon}{}:\BB_{21}\linmap\BB_{22}}
      {\jd{\BB_{23}}{\BB_{24}}}
      {\jd{\BB_{17}}{\BB_{18}}}
      {(\mathsf{E}_\linmap)}
    }
    {\jd{\BB_{19}}{\BB_{20}}}
    {\jd{\BB_{13},\BB_{14}}\BB_{15}\otimes\BB_{16}}
    {(\mathsf{I}_\otimes)}
  }
  {\jd{\BB_{5},\BB_{6}}{{}{}\BB_{7}\otimes\BB_{8}}}
  {(\mathsf{E}_\linmap)}
}
{
  \jd{\emcon}{}\BB_{1}\otimes\BB_{2}\linmap\BB_{3}\otimes\BB_{4}
}
{(\mathsf{I}_\linmap^2)}
$$
Now, consider the $\IAM{\QL}$ computation: 

\begin{align*}
(\BB_1,\BB_2,\qrone)&\trans{\tdone}^*(\BB_{5},\BB_{6},\qrone)\trans{\tdone}^*(\BB_{13},\BB_{14},\qrone)\\
  &\trans{\tdone}(\BB_{17},\BB_{19},\qrone)\trans{\tdone}^*(\BB_{23},\BB_{20},\qrone)\\
  &\trans{\tdone}(\BB_{24},\BB_{10},\qrone)\trans{\tdone}(\BB_{21},\BB_{10},\qrone)\\
  &\trans{\tdone}(\BB_{22},\BB_{10},\mathbf{H}^1(\qrone))\trans{\tdone}(\BB_{18},\BB_{10},\mathbf{H}^1(\qrone))\\
  &\trans{\tdone}(\BB_{15},\BB_{10},\mathbf{H}^1(\qrone))\trans{\tdone}(\BB_{9},\BB_{10},\mathbf{H}^1(\qrone))\\
  &\trans{\tdone}(\BB_{11},\BB_{12},{\mathbf{CNOT}}^{1,2}(\mathbf{H}^1(\qrone)))\trans{\tdone}^*(\BB_{7},\BB_{8},\mathbf{CNOT}^{1,2}(\mathbf{H}^1(\qrone)))\\
  &\trans{\tdone}(\BB_{3},\BB_{4},\mathbf{CNOT}^{1,2}(\mathbf{H}^1(\qrone))).
\end{align*}
Notice that $\mathit{CNOT}$ acts as a synchronization operator: the second token is stuck in the occurrence $\BB_{10}$ 
until the first token arrives as a control input of the $\mathit{CNOT}$ and the corresponding reduction step actually occurs.  
 %%%figura in pdf: epr.pdf
\comment{
The circuit $\qcirc{\tdone}$ is the following one:
\begin{align*}
  \Qcircuit @C=1em @R=1em {
    & \gate{H} & \ctrl{1} & \qw \\
    & \qw      & \targ    & \qw \\
  }
\end{align*}
}
%The circuit $\qcirc{\tdtwo}$ for the (naturally defined) type derivation $\tdtwo$ for $\abstr{\varone}{\termone_{\mathit{EPR}}(\pairtens{\ket{0}}{\varone})}$
%is the the following:
%\comment{
%\begin{align*}
%  \Qcircuit @C=1em @R=1em {
 %   \lstick{\ket{0}} & \gate{H} & \ctrl{1} & \qw \\
  %                   & \qw      & \targ    & \qw \\
 % }
%\end{align*}
%}
\end{example}
%%%%%%%%%%%%%%%%%%%%%%%%%
\subsection{Soundness}\label{sec:sound}
%%%%%%%%%%%%%%%%%%%%%%%%%
What is the relation between token machines and the equational theory on superposed type derivations introduced in Section~\ref{sect:equtheo}? 

It is easy to extend the definition of $\pfun{\cdot}$ to superposed type derivations: if $\sutone=\sum_{i=1}^n\alpha_i\tdone_i$ then
$\pfun{\sutone}$ when fed with a vector $\vecone$ returns $\sum_{i=1}^n\alpha_i\pfun{\tdone_i}(\vecone)$. In the rest of 
this section, we will prove that token machines behave in accordance to the equational theory.

Suppose $\tdone$ is a type derivation for $\jd{\conone,\varone_1:\typeone_1,\ldots,\varone_m:\typeone_m}{\termone:\typetwo}$ and that,
for every $1\leq i\leq m$ there is a type derivation $\tdtwo_i$ for $\jd{\contwo_i}{\termtwo_i:\typeone_i}$. By induction
on the structure of $\tdone$, one can define a type derivation $\subst{\tdone}{\tdtwo_1,\ldots,\tdtwo_m}{\varone_1,\ldots,\varone_m}$ of
$\jd{\conone,\contwo_1,\ldots,\contwo_m}{\subst{\termone}{\termtwo_1,\ldots,\termtwo_m}{\varone_1,\ldots,\varone_m}:\typetwo}$ (see Lemma~\ref{lemma:substlemma}).
Moreover, from $\tdone,\tdtwo_1,\ldots,\tdtwo_m$ we can form a machine $\automc{\tdone}{\tdtwo_1,\ldots,\tdtwo_m}$ as follows:
\begin{varitemize}
\item
  The states of $\automc{\tdone}{\tdtwo_1,\ldots,\tdtwo_m}$ are in the form
  $(\occone_1,\ldots,\occone_n,\qrone)$ where:
  \begin{varitemize}
%  \item
%    $n$ is the total number of positive occurrences of $\BB$ in the
%    conclusions of $\tdone,\tdtwo_1,\ldots,\tdtwo_n$, where however the ones in $\typeone_1\ldots,\typeone_n$ are
%    not taken into account.
  \item
    $\occone_1,\ldots,\occone_n$ are occurrences of the 
    type $\BB$ in $\tdone,\tdtwo_1,\ldots,\tdtwo_m$; 
  \item
    $\qrone$ is a quantum register on $n$ qubits;
  \end{varitemize}
\item
  The transition function is itself obtained by taking the disjoint union of $\trans{\tdone},\trans{\tdtwo_1},\ldots,\trans{\tdtwo_n}$, plus
  \begin{varitemize}
  \item
    transitions of any positive occurrence of $\BB$ in $\typeone_i$ (in the conclusion of $\tdtwo_i$) to the corresponding
    occurrence of $\BB$ in $\typeone_i$ (this time in the conclusion of $\tdone$);
  \item
    transitions of any negative occurrence of $\BB$ in $\typeone_i$ (in the conclusione of $\tdone$) to the corresponding
    occurrence of $\BB$ in $\typeone_i$ (in the conclusion of $\tdtwo_i$).
  \end{varitemize}
\item
  Initial and final states are defined in the natural way, taking into account occurrences of $\BB$ in $\conone,\contwo_1,\ldots,\contwo_m,\typetwo$,
  but not those in $\typeone_1,\ldots\typeone_m$.
\end{varitemize}
The just defined machine is equivalent to the one built from the derivation $\subst{\tdone}{\tdtwo_1,\ldots,\tdtwo_n}{\varone_1,\ldots,\varone_m}$. 
This is stated by the following substitution lemma:
\begin{lemma}
Let $\tdone\pof\jd{\conone,\varone_1:\typeone_1,\ldots,\varone_m:\typeone_n}{\termone:\typetwo}$ and 
for every $1\leq i\leq m$ let $\tdtwo_i\pof\jd{\contwo_i}{\termtwo_i:\typeone_i}$. Then the automaton
$\autom{\subst{\tdone}{\varone_1,\ldots,\varone_n}{\tdtwo_1,\ldots,\tdtwo_n}}$ is equivalent to
$\automc{\tdone}{\tdtwo_1,\ldots,\tdtwo_n}$.
\end{lemma}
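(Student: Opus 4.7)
The plan is to proceed by structural induction on $\tdone$, showing that the two automata compute the same partial function $\pfun{\cdot}$. The guiding intuition is that substitution at the level of type derivations corresponds precisely to the gluing construction defining $\automc{\tdone}{\tdtwo_1,\ldots,\tdtwo_m}$: replacing each variable-axiom leaf $\jd{\typet{\varone_i}{\typeone_i}}{\typet{\varone_i}{\typeone_i}}$ in $\tdone$ by the derivation $\tdtwo_i$ yields exactly the shape of machine that the glued automaton is already designed to simulate, modulo some bookkeeping transitions at each interface.

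First I would dispatch the base cases. The only non-trivial one is when $\tdone$ is itself a variable axiom for some $\varone_i$: here the substitution produces $\tdtwo_i$ itself, whereas $\automc{\tdone}{\tdtwo_i}$ augments $\autom{\tdtwo_i}$ with the two axiom-shuttle transitions of $\autom{\tdone}$ and with the two gluing transitions linking the $\typeone_i$ occurrence in $\tdone$'s conclusion to that in $\tdtwo_i$'s conclusion. A direct inspection shows that any run of $\autom{\tdtwo_i}$ whose token enters or leaves through its $\typeone_i$-interface is simulated in $\automc{\tdone}{\tdtwo_i}$ by a bounded-length sequence of glue-and-shuttle transitions producing the same net state, so the input/output behavior coincides. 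The bit and unitary axioms have empty variable context, so nothing has to be glued and the equivalence is immediate.

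Second, I would handle the inductive step by cases on the last rule of $\tdone$. In each case, linearity ensures that the variables $\varone_1,\ldots,\varone_m$ partition uniquely across the premise subderivations, so substitution commutes with the rule and $\subst{\tdone}{\varone_1,\ldots,\varone_m}{\tdtwo_1,\ldots,\tdtwo_m}$ has the same final rule with suitably substituted subderivations. Symmetrically, $\automc{\tdone}{\tdtwo_1,\ldots,\tdtwo_m}$ decomposes as a combination of the glued submachines linked by the transitions that the final rule of $\tdone$ contributes to $\autom{\tdone}$ (the classical rules of Figure~\ref{fig:transone} and, for the unitary axiom, the quantum rule of Figure~\ref{fig:transtwo}). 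Applying the induction hypothesis to each subderivation and observing that the final-rule transitions are inserted in the same way in both constructions yields the required equivalence.

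The main obstacle will be making precise the notion of equivalence between the two machines, since they have different state spaces: $\automc{\tdone}{\tdtwo_1,\ldots,\tdtwo_m}$ retains explicit transit states at each interface, whereas $\autom{\subst{\tdone}{\varone_1,\ldots,\varone_m}{\tdtwo_1,\ldots,\tdtwo_m}}$ collapses them. I would address this by exhibiting a surjective simulation from the states of the glued machine onto those of the substituted one, identifying tokens separated only by gluing transitions, and then verifying that initial and final states correspond and that each step on one side is matched by a bounded one-or-two-step sequence on the other with identical effect on the quantum register. Combined with the one-step confluence of $\trans{\tdone}$ already established, this delivers equality of the computed partial functions as desired.
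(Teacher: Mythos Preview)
The paper does not actually supply a proof of this lemma: it is stated immediately after the definition of $\automc{\tdone}{\tdtwo_1,\ldots,\tdtwo_m}$ and then used without further justification. Your structural induction on $\tdone$ is exactly the expected argument, mirroring the one-line proof the paper gives for the earlier substitution lemma on type derivations (Lemma~\ref{lemma:substlemma}), and your handling of the variable-axiom base case and of the partitioning of the $\varone_i$ across premises in the inductive step is correct. Your explicit identification of the obstacle --- that the two machines have different state spaces and must be compared via a simulation collapsing the extra interface transitions --- is a point the paper leaves entirely implicit, so in this respect your proposal is more complete than what the paper offers.
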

It is now possible to prove two key intermediate results towards soundness:
\begin{lemma}\label{lemma:1}
Let $\tdone\pof\jd{\conone}{(\abstr{\varone}{\termone})\termtwo:\typeone}$.
%and let $\tdtwo\pof\jd{\conone}{\subst{\termone}{\varone}{\termtwo}:\typeone}$
%be built from $\tdone$ by way of a substitution lemma. 
Then $\qcirc{\tdone}=\qcirc{\reduct{\tdone}}$.
\end{lemma}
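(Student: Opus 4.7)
The plan is to reduce everything to the substitution lemma stated just above, plus a bisimulation-style argument showing that the ``bureaucratic'' transitions introduced by the $(\mathsf{I}_\linmap^1)$ and $(\mathsf{E}_\linmap)$ rule layers surrounding a beta redex do not perform any quantum operation, so they can be collapsed without changing $\pfun{\cdot}$, and hence $\qcirc{\cdot}$.

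First I would decompose $\tdone$ explicitly. Since $\tdone\pof\jd{\conone}{(\abstr{\varone}{\termone})\termtwo:\typeone}$, its last rule must be $(\mathsf{E}_\linmap)$, with a left subderivation whose last rule is $(\mathsf{I}_\linmap^1)$; call $\tdone_0\pof\jd{\conone_0,\varone:\typetwo}{\termone:\typeone}$ the premise of that $(\mathsf{I}_\linmap^1)$ and $\tdtwo\pof\jd{\contwo}{\termtwo:\typetwo}$ the right premise of the $(\mathsf{E}_\linmap)$. Then $\reduct{\tdone}$ is exactly $\subst{\tdone_0}{\varone}{\tdtwo}$ as produced by Lemma~\ref{lemma:substlemma}. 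Applying the substitution lemma stated just above, $\autom{\reduct{\tdone}}$ is equivalent to the composite automaton $\automc{\tdone_0}{\tdtwo}$ obtained by gluing $\autom{\tdone_0}$ and $\autom{\tdtwo}$ along the occurrences of $\typetwo$ assigned to $\varone$. So it suffices to exhibit an equivalence between $\autom{\tdone}$ and $\automc{\tdone_0}{\tdtwo}$, i.e.\ show they compute the same partial function from initial to final states with identical action on the quantum register.

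Next I would define a map $\Phi$ from states of $\autom{\tdone}$ to states of $\automc{\tdone_0}{\tdtwo}$ that collapses the extra occurrences introduced by the bottom two rules. Concretely, the $(\mathsf{E}_\linmap)$-rule transitions in Figure~\ref{fig:transone} move a token between occurrences inside $\typetwo$ (at the conclusion of $\tdtwo$), inside $\typetwo\linmap\typeone$ (at the conclusion of the $\lambda$-subderivation), and inside $\typeone$ (the conclusion of $\tdone$); the $(\mathsf{I}_\linmap^1)$-rule transitions shuttle between $\typetwo\linmap\typeone$ and the occurrences of $\typetwo$ and $\typeone$ in the conclusion of $\tdone_0$. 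None of these rules touches $\qrone$. Thus any token reaching one of these ``transit'' occurrences is deterministically forwarded, through a bounded number of steps, to either an occurrence in the conclusion of $\tdone_0$, an occurrence in the conclusion of $\tdtwo$, or an occurrence in the conclusion of the whole derivation. Define $\Phi$ by fully forwarding each transit occurrence to its stable landing site; on quantum registers, $\Phi$ is the identity. I would then verify, by case analysis on the transition rules in Figures~\ref{fig:transone}--\ref{fig:transtwo}, that: (i) $\Phi$ sends initial (resp.\ final) states of $\autom{\tdone}$ to initial (resp.\ final) states of $\automc{\tdone_0}{\tdtwo}$; (ii) for every transition $\stone\trans{\tdone}\sttwo$ either $\Phi(\stone)=\Phi(\sttwo)$ (bureaucratic step) or $\Phi(\stone)\trans{\tdone_0,\tdtwo}\Phi(\sttwo)$ with the same action on the register; and (iii) conversely every transition of $\automc{\tdone_0}{\tdtwo}$ lifts to a (possibly longer, quantum-inert-padded) transition of $\autom{\tdone}$. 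Termination (Proposition~\ref{prop:term}) guarantees that this forwarding is finite and well-defined.

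Putting these together, the input/output behavior of $\autom{\tdone}$ and $\automc{\tdone_0}{\tdtwo}$ coincide, so by the substitution lemma the behavior of $\autom{\tdone}$ and $\autom{\reduct{\tdone}}$ coincide, and consequently $\pfun{\tdone}=\pfun{\reduct{\tdone}}$ and $\qcirc{\tdone}=\qcirc{\reduct{\tdone}}$. The main obstacle I anticipate is the bookkeeping in step (ii): we have multiple tokens moving in parallel and the $\mathsf{CNOT}$-style synchronization rule only fires when \emph{all} involved occurrences sit at a $(\mathsf{a}_U)$ axiom, so one must check that the bureaucratic shuffles in $\autom{\tdone}$ cannot postpone a quantum step that would have already fired in $\automc{\tdone_0}{\tdtwo}$; the one-step confluence of $\trans{\tdone}$ together with the fact that the transit occurrences lie outside every $(\mathsf{a}_U)$-axiom makes this a routine, though tedious, verification.
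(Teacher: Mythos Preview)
Your proposal is correct and is exactly the argument the paper intends: the paper states Lemma~\ref{lemma:1} without proof, but immediately above it introduces the composite automaton $\automc{\tdone_0}{\tdtwo}$ and the automaton substitution lemma precisely so that this lemma follows by the two-step reduction you describe (first $\autom{\reduct{\tdone}}\equiv\automc{\tdone_0}{\tdtwo}$ via that lemma, then $\autom{\tdone}\equiv\automc{\tdone_0}{\tdtwo}$ by collapsing the quantum-inert transitions coming from the outer $(\mathsf{I}_\linmap^1)$ and $(\mathsf{E}_\linmap)$ layers). Your bisimulation-style formulation of the second step, and your observation that confluence plus the fact that the transit occurrences never sit at an $(\mathsf{a}_U)$ axiom handles the synchronisation issue, are a clean way of making explicit what the paper leaves implicit.
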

\begin{lemma}\label{lemma:2}
Let $\tdone\pof\jd{\conone}{(\abstr{\pair{\varone}{\vartwo}}{\termone})(\pairtens{\termtwo}{\termthree}):\typeone}$.
%and let $\tdtwo\pof\jd{\conone}{\subst{\termone}{\varone,\vartwo}{\termtwo,\termthree}:\typeone}$
%be built from $\tdone$ by way of a substitution lemma. 
Then $\qcirc{\tdone}=\qcirc{\reduct{\tdone}}$.
\end{lemma}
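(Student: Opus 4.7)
The plan is to reduce the statement to the machine-level substitution lemma stated just above. By inversion of the typing rules, $\tdone$ must terminate with an $(\mathsf{E}_\linmap)$ instance whose left premise is an $(\mathsf{I}_\linmap^2)$ instance applied to a body derivation $\tdone_b\pof\jd{\conone_1,\varone:\typeone',\vartwo:\typetwo'}{\termone:\typeone}$, and whose right premise is an $(\mathsf{I}_\otimes)$ instance applied to derivations $\tdone_l\pof\jd{\contwo}{\termtwo:\typeone'}$ and $\tdone_r\pof\jd{\conthree}{\termthree:\typetwo'}$, with $\conone=\conone_1,\contwo,\conthree$. Lemma~\ref{lemma:substlemma}, used in its two-variable form, identifies $\reduct{\tdone}$ with $\subst{\tdone_b}{\varone,\vartwo}{\tdone_l,\tdone_r}$, and the machine-level substitution lemma stated immediately above then gives that $\autom{\reduct{\tdone}}$ is equivalent to $\automc{\tdone_b}{\tdone_l,\tdone_r}$. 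Since $\qcirc{\cdot}$ is determined by the function $\pfun{\cdot}$, it is enough to show that $\pfun{\tdone}=\pfun{\automc{\tdone_b}{\tdone_l,\tdone_r}}$.

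Next I would define the obvious forgetful map $\Phi:\states{\tdone}\rightarrow\states{\automc{\tdone_b}{\tdone_l,\tdone_r}}$ that strips away the outer $(\mathsf{I}_\linmap^2)$, $(\mathsf{I}_\otimes)$ and $(\mathsf{E}_\linmap)$ wrappings around each $\BB$ occurrence while leaving the quantum register untouched. The core of the argument is a case analysis showing that $\Phi$ is a bisimulation up to silent ``bridge'' sequences: transitions internal to $\tdone_b$, $\tdone_l$, or $\tdone_r$ are identical in the two machines, and transitions that merely propagate through the free-variable contexts $\conone_1,\contwo,\conthree$ match one for one.

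The only genuinely nontrivial case is a token at the interface, i.e.\ on an occurrence of $\typeone'$ or $\typetwo'$ belonging either to the conclusion of $\tdone_l,\tdone_r$ or to the hypothesis of $\tdone_b$. In $\autom{\tdone}$ such a token is carried across by a six-step bridge: two transitions from $(\mathsf{I}_\otimes)$ (promoting it from $\typeone'$ or $\typetwo'$ into the pair type), two from $(\mathsf{E}_\linmap)$ (crossing from argument to function), and two from $(\mathsf{I}_\linmap^2)$ (demoting from the domain $\typeone'\otimes\typetwo'$ into the corresponding hypothesis occurrence); the opposite direction is symmetric. In $\automc{\tdone_b}{\tdone_l,\tdone_r}$ this is collapsed into a single plug edge linking the positive occurrence in the conclusion of $\tdone_l$ (resp.\ $\tdone_r$) directly to the matching negative occurrence in the hypothesis of $\tdone_b$. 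Crucially, none of these six bridge transitions belongs to Figure~\ref{fig:transtwo}, so the quantum register is unchanged along the bridge and the composite effect on the state agrees with the single plug edge.

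Finally, initial and final states correspond under $\Phi$ because the types $\typeone'$ and $\typetwo'$ introduced and immediately eliminated at the interface contribute nothing to the overall input or output arity. Combined with the Termination and Progress results (Propositions~\ref{prop:term} and~\ref{prop:progress}), this yields that the two machines halt on the same inputs with the same output, giving $\pfun{\tdone}=\pfun{\reduct{\tdone}}$ and hence $\qcirc{\tdone}=\qcirc{\reduct{\tdone}}$. The step I expect to be the main obstacle is the bookkeeping of the positive and negative contexts $\pconone,\nconone$ as they are wrapped and unwrapped by the three redex rules of Figure~\ref{fig:transone}: one has to verify, for each of the eight interface subcases, that the wrappings introduced successively by $(\mathsf{I}_\otimes)$, $(\mathsf{E}_\linmap)$ and $(\mathsf{I}_\linmap^2)$ cancel exactly, which is mechanical but notationally heavy.
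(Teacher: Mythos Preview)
The paper does not actually give a proof of this lemma; it is stated immediately after the machine-level substitution lemma and left to the reader, so the intended argument is exactly the one you outline: decompose $\tdone$ into $\tdone_b,\tdone_l,\tdone_r$, identify $\reduct{\tdone}$ with $\subst{\tdone_b}{\varone,\vartwo}{\tdone_l,\tdone_r}$, invoke the substitution lemma to get $\autom{\reduct{\tdone}}\simeq\automc{\tdone_b}{\tdone_l,\tdone_r}$, and then argue that $\autom{\tdone}$ and $\automc{\tdone_b}{\tdone_l,\tdone_r}$ compute the same thing because the extra $(\mathsf{I}_\otimes)/(\mathsf{E}_\linmap)/(\mathsf{I}_\linmap^2)$ shell only adds classical bridge transitions. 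Your bisimulation sketch is exactly the missing detail.

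Two small corrections. First, the sentence ``Since $\qcirc{\cdot}$ is determined by the function $\pfun{\cdot}$'' is the wrong way round: $\pfun{\tdone}$ is the function computed by the circuit $\qcirc{\tdone}$, and distinct circuits can in principle compute the same function, so equality of $\pfun{\cdot}$ does not a priori give equality of $\qcirc{\cdot}$. Fortunately your argument actually proves the stronger statement directly: since every bridge transition lies in Figure~\ref{fig:transone} and leaves the register untouched, the two runs fire exactly the same unitary gates, in the same order, on the same wire indices, with the same bit feeds --- hence the circuits coincide, not just the functions. You should just drop the reduction to $\pfun{\cdot}$ and state this. Second, the bridge is three steps in each direction (one transition per rule instance), not six; each of $(\mathsf{I}_\otimes)$, $(\mathsf{E}_\linmap)$, $(\mathsf{I}_\linmap^2)$ contributes a single step for a given polarity.
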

In order to prove Soundness Theorem, we need to introduce the following technical tool:
\begin{definition}[Superposed Quantum Circuits]
A \emph{superposed quantum circuits} of arity $(n,m)$ (where $n\leq m$) is a formal sums in the form
$$
\sum_{i=1}^n\alpha_i\qcone_i
$$
where $\alpha_i\in\CC$ and $\qcone_i$ is a quantum circuit on $m$ qubits of which $n$ are assigned a bit. 
\end{definition}
As an example, a superposed quantum circuit of arity $(2,4)$ looks as follows:

\begin{align*}
\alpha_1\cdot
\left(\hspace{22pt}
\begin{minipage}{5cm}
\Qcircuit @C=1em @R=.5em {
\lstick{\ket{b_1^1}} & \multigate{3}{C_1} & \qw \\
\lstick{\ket{b_2^1}} & \ghost{\mathcal{F}} & \qw \\
        & \ghost{\mathcal{F}} & \qw \\
        & \ghost{\mathcal{F}} & \qw \\
}
\end{minipage}
\right)
\hspace{10pt}
+
\hspace{10pt}
\alpha_2\cdot
\left(\hspace{22pt}
\begin{minipage}{5cm}
\Qcircuit @C=1em @R=.5em {
\lstick{\ket{b_1^2}} & \multigate{3}{C_2} & \qw \\
\lstick{\ket{b_2^2}} & \ghost{\mathcal{F}} & \qw \\
        & \ghost{\mathcal{F}} & \qw \\
        & \ghost{\mathcal{F}} & \qw \\
}
\end{minipage}
\right)
\end{align*}
Since  every type derivation $\pi$ computes a quantum circuit $\qcirc{\pi}$, every superposed type derivation $\sutone$ can be seen as a superposed quantum
circuit $\qcirc{\sutone}$. Moreover, the function $\pfun{\sum_{i=1}^{n}\alpha_i\qcone_i}$ 
computed by a superposed quantum circuit $\sum_{i=1}^n\alpha_i C_i$ can be defined similarly to what we have done for superposed
type derivations. Of course, $\pfun{\qcirc{\sutone}}=\pfun{\sutone}$. 

We now define the set of \emph{admissible circuit transformations}.

\begin{definition}[Admissible Transformations] Assume $\qcirc{\sutone}=\sum_{i=1}^n\alpha_i\qcone_i$ is a superposed quantum circuit. The following transformation are called \emph{admissible}:  
\begin{varenumerate}
\item
  One summand $\alpha\qcone_i$ is replaced by $\beta\qcone_i+\gamma\qcone_i$, where $\alpha=\beta+\gamma$;
\item
  One summand $\alpha\qcone_i$ where $\qcone_i$ has the following form
  \begin{align*}
  \Qcircuit @C=1em @R=.5em {
    \lstick{\ket{b_1}} & \qw \mbox{\vdots} & \multigate{1}{U} & \multigate{3}{D}  &\qw \\
    \lstick{\ket{b_m}} & \qw  & \ghost{U} & \ghost{D} & \qw \\
        & \qw & \qw & \ghost{D} & \qw \\
        & \qw & \qw & \ghost{D} & \qw \\
      }
  \end{align*}
  is replaced by a sum $\sum_{x\in B_m}\alpha\cdot\beta_x\cdot\qcone_x$
  where $B_m$ is the set of binary strings of length $m$, $\beta_x$ is the
  coefficient of $\ket{x}$ in $\unopone\ket{b_1\ldots b_m}$ and $\qcone_x$ is the 
  following circuit:
  \begin{align*}
  \Qcircuit @C=1em @R=.5em {
    \lstick{\ket{x_1}} & \qw \mbox{\vdots} & \multigate{3}{D}  &\qw \\
    \lstick{\ket{x_m}} & \qw  & \ghost{D} & \qw \\
        & \qw & \ghost{D} & \qw \\
        & \qw & \ghost{D} & \qw \\
      }
  \end{align*}
\end{varenumerate}
\end{definition}
Admissible transformations can be applied in both directions.
It is easy to prove that admissible transformations, when applied to 
a superposed circuit $\qcirc{\sutone}$, leave the underlying function unchanged. 
We are now ready to prove our soundness result:
\newcommand{\prone}{\mathsf{d}}
\newcommand{\prtwo}{\mathsf{e}}
\begin{theorem}[Soundness]\label{th:sound}
If $\sutone\eqterm\suttwo$, then $\pfun{\sutone}=\pfun{\suttwo}$.
\end{theorem}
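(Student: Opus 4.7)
The plan is to proceed by induction on the derivation of $\sutone\eqterm\suttwo$. By Proposition~\ref{prop:normalform}, we may assume without loss of generality that the derivation is in normal form, which structures the case analysis into: axiom rules, context-closure rules (including the sum rule), and the reflexive/symmetric/transitive closure rules. The three closure rules for $\mathsf{refl}$, $\mathsf{sym}$ and $\mathsf{trans}$ are immediate since equality of functions is itself an equivalence relation.

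For the axiom cases, we invoke the technical machinery already in place. For $\mathsf{beta}$, Lemma~\ref{lemma:1} yields $\qcirc{\tdone}=\qcirc{\reduct{\tdone}}$, whence $\pfun{\tdone}=\pfun{\qcirc{\tdone}}=\pfun{\qcirc{\reduct{\tdone}}}=\pfun{\reduct{\tdone}}$; the axiom $\mathsf{beta.pair}$ is handled identically via Lemma~\ref{lemma:2}. For $\mathsf{quant}$, where $\tdone\pof\jd{\emcon}{U\ket{\bitone_1\ldots\bitone_k}:\BB^k}$ and $\reduct{\tdone}=\sum_{x\in B_k}\kappa_x\tdone_x$ with $\kappa_x$ the coefficient of $\ket{x}$ in $\unopone\ket{\bitone_1\ldots\bitone_k}$, the required equality $\pfun{\tdone}=\pfun{\reduct{\tdone}}$ is precisely the content of admissible circuit transformation~(2) applied with trivial tail $D$: this transformation shows that the superposed circuit $\qcirc{\tdone}$ and the superposed circuit $\sum_x\kappa_x\qcirc{\tdone_x}$ compute the same function.

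The context-closure rules require a compositionality statement: if $\pfun{\sutone}=\pfun{\suttwo}$ then $\pfun{\ctone[\sutone]}=\pfun{\ctone[\suttwo]}$ for each one-hole context $\ctone$ arising from $\mathsf{l.a}$, $\mathsf{r.a}$, $\mathsf{in.}\lambda$, $\mathsf{in.}\lambda\mathsf{.pair}$, $\mathsf{l.in.tens}$ and $\mathsf{r.in.tens}$. By inspection of Figures~\ref{fig:transone} and~\ref{fig:transtwo}, the rules generated at the constructor $\ctone$ act locally and merely relay tokens between the inner sub-machine $\autom{\sutone}$ and the outer machine. Hence the function computed on $\ctone[\sutone]$ factors through $\pfun{\sutone}$, and the inductive hypothesis gives the desired equality. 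The $\mathsf{sum}$ rule is the remaining context-closure case and uses the linearity inherent in the definition of $\pfun{\cdot}$ on superposed derivations: $\pfun{\alpha\sutone+\sutthree}(\vecone)=\alpha\pfun{\sutone}(\vecone)+\pfun{\sutthree}(\vecone)$, so equality of $\pfun{\sutone}$ and $\pfun{\suttwo}$ transfers directly.

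The main obstacle is making the compositionality step rigorous. Concretely, one must show that every run of $\autom{\ctone[\sutone]}$ decomposes into sub-runs that are either entirely inside the hole (and thus governed by $\autom{\sutone}$) or entirely outside (governed by the surrounding constructor), with the boundary transitions functioning as pure forwarding between positive and negative occurrences of $\BB$ at the interface. Termination and progress, Propositions~\ref{prop:term} and~\ref{prop:progress}, guarantee that each composite run reaches a final state, so that $\pfun{\ctone[\sutone]}$ is well-defined and depends on $\sutone$ only through $\pfun{\sutone}$. Once this compositionality is in hand, all inductive steps fit together and the theorem follows.
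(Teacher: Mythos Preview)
Your overall architecture matches the paper's: reduce to normal-form derivations via Proposition~\ref{prop:normalform}, then induct. The axiom cases and the $\mathsf{sum}$/$\mathsf{refl}$/$\mathsf{sym}$/$\mathsf{trans}$ cases are handled the same way in both.

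The genuine divergence is in the treatment of the context-closure rules $\ccrls$. You argue semantically: assert that $\pfun{\ctone[\sutone]}$ factors through $\pfun{\sutone}$, so equal inner functions yield equal outer functions. The paper instead argues syntactically at the level of superposed \emph{circuits}. It strengthens the induction hypothesis: whenever the derivation uses only rules from $\axrls\cup\ccrls$, the left-hand side is a \emph{single} type derivation $\tdone$ and $\qcirc{\suttwo}$ is obtained from $\qcirc{\tdone}$ by \emph{at most one} admissible transformation of the second kind. With this enrichment, a context rule $\mathsf{r}\in\ccrls$ can only affect the tail $D$ of the circuit (uniformly across all summands $\qcone_x$), never the head $U\ket{b_1\ldots b_m}$ where the $\mathsf{quant}$ redex sits; hence the same admissible transformation still witnesses the passage from $\qcirc{\sutone}$ to $\qcirc{\suttwo}$.

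What each approach buys: the paper's enriched-hypothesis argument is fully concrete and requires nothing beyond inspecting how each $\ccrls$ rule grafts onto an existing circuit; there is no abstract compositionality lemma to prove. Your route is cleaner conceptually but leaves exactly the step you flag as ``the main obstacle'' genuinely open: showing that the token-machine semantics is compositional in the sense that $\pfun{\ctone[\tdthree]}$ depends on $\tdthree$ only through $\pfun{\tdthree}$, and that this dependence is \emph{linear} (so that it commutes with the formal sums). This is true, but establishing it amounts to proving a factorisation of runs of $\autom{\ctone[\tdthree]}$ into inner and outer parts and checking that the inner part contributes precisely the linear map $\pfun{\tdthree}$---which is essentially the circuit-composition statement the paper uses implicitly. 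So your sketch is correct but incomplete at the point you identify; the paper closes that gap by never needing the abstract statement, tracking instead the single admissible transformation through the context.
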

\begin{proof}
Since $\pfun{\qcirc{\sutone}}=\pfun{\sutone}$, it is sufficient, by Proposition \ref{prop:normalform}, to show that, 
if $\sutone\eqtermnf\suttwo$, then $\qcirc{\suttwo}$ can be obtained from $\qcirc{\sutone}$ by iteratively applying 
one or more admissible transformations. This is an induction  on the structure of a proof  $\prone$ of $\sutone\eqtermnf\suttwo$. 
Let be $\mathsf{r}$ the last rule applied in $\prone$, where we enrich the thesis by stipluating that if the rules
in $\prone$ are all from $\axrls\cup\ccrls$, then $\sutone$ is a single type derivation and that going from 
$\qcirc{\sutone}$ to $\qcirc{\suttwo}$ can be done by performing \emph{at most one} admissible transformation
of the second kind. Some interesting cases:
\begin{varitemize}
 \item 
   $\mathsf{r}$ is $(\mathsf{beta.pair})$. The result follows by means of Lemma~\ref{lemma:1}.
 \item  
   $\mathsf{r}$ is $(\mathsf{beta})$. The result follows by means of Lemma~\ref{lemma:2}.
 \item 
   $\mathsf{r}$ is $(\mathsf{quant})$. Then $\prone$ is simply 
   $$\urule{\tdone\pof\jd{\emcon}{U\ket{\bitone_1\ldots\bitone_k}:\BB^k}}{\tdone\eqterm\unopone\ket{\bitone_1\ldots\bitone_k}}{\mathsf{quant}}$$
   and $\qcirc{\tdone}$ is simply the quantum circuit built  on  the unitary operator $U$, feeded with the input $\ket{\bitone_1\ldots\bitone_k}$.
   We know that $\unopone\ket{\bitone_1\ldots\bitone_k}$ is a superposed type derivation in the form $\suttwo=\sum_{x\in B_k}\alpha_x\tdone_x$, where
   $B_k$ is the set of all binary strings of length $k$ and $\tdone_x$ is the type derivation
   for $\ket{x}$ ($k$ applications of the rule ($\mathsf{I}_{\otimes})$ starting from the axioms for $\ket{\bitone_1}\ldots\ket{\bitone_k}$). 
   Such a derivation can be seen as the superposed quantum circuit of ariety $(k,k)$  $\qcirc{\suttwo}=\sum_{x\in B_k}\alpha_{x} {\ket{x}}$ 
   (where the binary string $\ket{x}$ can also seen as the trivial circuit that act on it as the identity)
   and  the  amplitudes $\alpha_{x}$ are exactly the coefficient of $\ket{x}$ in $U\ket{\bitone_1\ldots\bitone_k}$.  
   $\qcirc{\suttwo}$ can be plainly obtained from $\qcirc{\tdone}$ by means of the admissible transformation of the
   second kind by replacing the only summand $1\cdot C$ with the sum $\sum_{x\in B_k}1\cdot\alpha_{x} \ket{x}$.
 \item 
   $\mathsf{r}$ is a reflexive or a symmetric or a transitive closure. Trivial.
 \item
   $\mathsf{r}\in\ccrls$, then we know that $\sutone\eqtermnf\suttwo$ is derived from
   $\sutthree\eqtermnf\sutfour$, where $\sutthree$ is a \emph{single} type derivation and $\qcirc{\sutfour}$
   is obtained by applying either zero or one admissible transormations of the second kind to 
   $\qcirc{\sutthree}$. In other words, $\sutthree$ is
   \begin{align*}
  \Qcircuit @C=1em @R=.5em {
    \lstick{\ket{b_1}} & \qw \mbox{\vdots} & \multigate{1}{U} & \multigate{3}{D}  &\qw \\
    \lstick{\ket{b_m}} & \qw  & \ghost{U} & \ghost{D} & \qw \\
        & \qw & \qw & \ghost{D} & \qw \\
        & \qw & \qw & \ghost{D} & \qw \\
      }
  \end{align*}
  while $\sutfour$ is $\sum_{x\in B_m}\alpha\cdot\beta_x\cdot\qcone_x$
  where $B_m$ is the set of binary strings of length $m$, $\beta_x$ is the
  coefficient of $\ket{x}$ in $U\ket{b_1\ldots b_m}$ and $\qcone_x$ is the 
  following circuit:
  \begin{align*}
  \Qcircuit @C=1em @R=.5em {
    \lstick{\ket{x_1}} & \qw \mbox{\vdots} & \multigate{3}{D}  &\qw \\
    \lstick{\ket{x_m}} & \qw  & \ghost{D} & \qw \\
        & \qw & \ghost{D} & \qw \\
        & \qw & \ghost{D} & \qw \\
      }
  \end{align*}
   It is then clear that the effect of $\mathsf{r}$ to $\qcirc{\sutthree}$ consists in 
   modifying $D$, because $U$ cannot be affected. Moreover, the same modification is 
   perfomed by $r$ \emph{uniformly} on $D$ in any $\qcone_x$. We can then conclude 
   that there exists $E$ such that $\sutone$ is 
   \begin{align*}
  \Qcircuit @C=1em @R=.5em {
    \lstick{\ket{b_1}} & \qw \mbox{\vdots} & \multigate{1}{U} & \multigate{3}{E}  &\qw \\
    \lstick{\ket{b_m}} & \qw  & \ghost{U} & \ghost{E} & \qw \\
        & \qw & \qw & \ghost{E} & \qw \\
        & \qw & \qw & \ghost{E} & \qw \\
      }
  \end{align*}
  while $\suttwo$ is 
  \begin{align*}
  \Qcircuit @C=1em @R=.5em {
    \lstick{\ket{x_1}} & \qw \mbox{\vdots} & \multigate{3}{E}  &\qw \\
    \lstick{\ket{x_m}} & \qw  & \ghost{E} & \qw \\
        & \qw & \ghost{E} & \qw \\
        & \qw & \ghost{E} & \qw \\
      }
  \end{align*}

\end{varitemize}
This concludes the proof.
\end{proof}
}
%\section{Entanglement and Wave-Style}
%$\IAM{QL}$ is the first example of
%a concrete wave-style token machine whose runs cannot be seen simply as the asynchronous parallel composition
%of particle-style runs. Interestingly, synchronization is intimately related to entanglement: since we want
%each particle to correspond to \emph{one} qubit, it is not possible to study each particle independently, precisely because
%the tokens can be entangled.  This is equivalent to say that the global state of the automaton cannot in general be expressed as the tensor product of single qubits.
%Notice that if only unary operators occur in a term (i.e. entanglement is not possible), synchronization is not needed and 
%everything collapses to the particle-style.

\vspace{-2.3ex}
%%%%%%%%%%%%%%%%%%%%%%%%%
\section{Related Works}\label{sec:relw}
%%%%%%%%%%%%%%%%%%%%%%%%%
\condinc{
The role of GoI in quantum computing has already been explored in at least two works.
In~\cite{HH11}, a geometry of interaction model for Selinger and Valiron's quantum lambda calculus~\cite{SV06} is defined.
The model is formulated in particle-style.    
In~\cite{DLF11} \QMLL, an extension of \MLL\ with quantum modalities, is studied. \QMLL\ is 
sound and complete with respect to quantum circuits, and an interactive (particle-style) abstract machine is defined. 
The computational meaning of \QMLL\ proofs is given by means of their token machines: each cut-free \QMLL\ proof corresponds 
to an unique quantum circuit. In both cases, adopting a particle-style approach has a bad consequence: the ``quantum'' tensor 
product does \emph{not} coincide with the tensor product in the sense of linear logic. Here we show that adopting
the wave-style approach solves the problem.

Quantum extensions of game semantics are partially connected to this work. In~\cite{Del11} a game semantics 
for a simply-typed lambda calculus (similar to $\QL$) is introduced. The language uses a notion of extended variable, able to 
deal with tensor products. The game semantics is built around classical game semantics where, however, unitary quantum operations are the questions 
and measurements are the answers. A soundness result for the semantics is given.
A similar approach for a lambda calculus with quantum stores (i.e. in which quantum data are referred through pointers) has been explored in~\cite{DP08}.
Again, two tensor products are needed, unless one wants to drop the possibility of entangling qubits. 

Purely linear quantum lambda-calculi (\emph{with} measurements) can be given a fully abstract denotationl semantics, like the one proposed by
Selinger and Valiron~\cite{SelingerV08}. In their work, closure (necessary to interpret higher-order functions) is not obtained
via traces and is not directly related in any way to the geometry of interaction. Moreover, morphisms are just linear maps, 
and so the model is far from being an quantum operational semantics like the $\IAM{\QL}$.
}
{
The role of GoI in quantum computing has already been explored in at least two works.
In~\cite{HH11} a geometry of interaction model for Selinger and Valiron's quantum lambda calculus~\cite{SV06} is defined.
The model is formulated in particle-style.    
In~\cite{DLF11} \QMLL, an extension of \MLL\ with quantum modalities is studied. \QMLL\ is 
sound and complete with respect to quantum circuits, and an interactive, particle-style token machine is defined. 
The computational meaning of \QMLL\ proofs is given by means of the token machine: each cut-free \QMLL\ proof corresponds 
to an unique quantum circuit. In both cases, adopting a particle-style approach has a bad consequence: the ``quantum'' tensor 
product does \emph{not} coincide with the tensor product in the sense of linear logic. Here we show that adopting
the wave-style approach solves the problem.

Quantum extensions of game semantics are partially connected to our subject. In~\cite{Del11} a game semantics 
for a simply-typed lambda calculus (similar to $\QL$) is introduced. The language uses a notion of extended variable, able to 
deal with tensor products. The game semantics is built around classical game semantics where, however, quantum operations are the questions 
and measurements are the answers. A soundness result for the semantics is given.
A similar approach for a lambda calculus with quantum stores (i.e. in which quantum data are referred through pointers) has been explored in~\cite{DP08}.
Again, two tensor products are needed, unless one wants to drop the possibility of entangling qubits. 

Purely linear quantum lambda-calculi (\emph{with} measurements) can be given a fully abstract denotationl semantics, like the one proposed by
Selinger and Valiron~\cite{SelingerV08}. In their work, closure (necessary to interpret higher-order functions) is not obtained
via traces and is not directly related in any way to the geometry of interaction. Moreover, morphisms are just linear maps, 
and so the model is far from being an quantum operational semantics like the $\IAM{\QL}$.
}
%Also categorical approaches are related to our proposal, and a categorical account for an extension of $\QL$ 
%with a measurement operator is a possible future direction for the present investigation. 
%In ~\cite{Makie}, Selinger and Valiron provide a categorical denotational semantics for (a fragment of) 
%their quantum lambda calculus with classical control $\lambda_{sv}$.  The authors focus on the definition 
%of suitable structural equations for ground elements of the calculus and define three distinct categories: 
%a first category  represents the class of quantum computations;  the second   category is  the category of 
%quantum values; a third category deals with classical (i.e. duplicable) values. 
%For a general overview about semantics techniques for quantum programming languages see~\cite{Makie}.
%%%%%%%%%%%%%%%%%%%%%%%%%
\vspace{-2ex}

\section{Conclusions}\label{sect:conclusions}
%%%%%%%%%%%%%%%%%%%%%%%%%
\condinc{

\vspace{-2ex}

We have introduced $\IAM{\QL}$, an interactive abstract machine which provides a sound  operational 
characterization of any type derivation in a linear quantum $\lambda$-calculus $\QL$. This is the first example of
a concrete wave-style token machine whose runs cannot be seen simply as the asynchronous parallel composition
of particle-style runs. Interestingly, synchronization is intimately related to entanglement: if, for example,
only unary operators occur in a term (i.e. entanglement is \emph{not} possible), synchronization is not needed and 
everything collapses to the particle-style.

Our investigation is open to some possible future directions. A natural step will be to extend the syntax of terms 
and types with an exponential modality. The generalization of the wave-style token machine to this more 
expressive language would be an interesting and technically challenging subject. Something we see as relatively
easy is an extension of this framework to a calculus with measurements: token machines could cope with measurements
by evolving probabilistically~\cite{rairo2012}, while adapting the equational theory would probably be nontrivial.
Finally, giving a formal status to the connection between wave-style and the presence of entanglement is a fascinating 
subject which we definitely aim to investigate further. 
}
{
%%%%%%%%%%%%%%%%%%%%%%%%%
The definition of an elegant semantics is always a challenge in the case of  quantum functional languages.
This mainly holds for denotational models, but remains true also for operational, reduction-style semantics.
In this paper we introduce $\QL$, a linear quantum calculus  with explicit qubits, where quantum circuits can be easily encoded. 
This simple calculus is a good framework to further investigate the (deep) relationships between quantum computing and Girard's Geometry of Interaction.
We describe $\IAM{\QL}$, an interactive abstract machine which provides a sound  operational characterization of any $\QL$'s type derivation. 
$\QL$ quantum features force to move from the (usual) particle-style token machine model to the wave-style one, where different tokens 
circulate around a net (a type derivation) at the same time. 
Constants for n-ary unitary operators  act as \emph{synchronization} points: every token trips independently since it arrives at a unitary operator constant. 
In this case, computation takes place only if all input qubits occurrence has reached the unitary operator.
$\IAM{\QL}$ is a sound model: critical behaviors potentially introduced by the synchronization mechanism, can not happen in $\IAM{\QL}$ computations.
Our contribution can be summarized as follows:
\begin{varitemize}
\item
  The $\IAM{\QL}$ provides an elegant model for quantum programs written in $\QL$: each type derivation is interpreted as a 
  quantum circuit built on the set of quantum gates occurring in the underlying lambda-term;
\item
  we show that also wave-style token machines are sound with respect to an operational theory of superposed type derivations;
\item
  we give evidence that wave-style provides an original account of the quantum data entanglement phenomenon, since the notion 
  of synchronization we implicitly define is strongly connected to what happens to entangled data.
\end{varitemize}
Our investigation is open to some possible future directions. 
A natural step will be to extend the syntax of terms and type grammar with an exponential  modality.
The generalization of the wave-style token machine to this more 
expressive language would be an interesting and technically challenging subject. Something we see as relatively
easy is an extension of this framework to a calculus with measurements: token machines could cope with measurements
by evolving probabilistically\cite{rairo2012}, while adapting the equational theory would probably be nontrivial.
Finally, giving a formal status to the connection between wave-style and the presence of entanglement is a fascinating 
subject which we definitely aim to investigate further.
}
\condinc{

\bibliographystyle{abbrv}
{\scriptsize
\bibliography{biblio}
}
}
{
\bibliographystyle{abbrv}
\bibliography{biblio}
}
\end{document}